\documentclass[11pt]{article}
\usepackage{amsmath,amsthm,amssymb,amsfonts}
\usepackage{xcolor}
\usepackage[normalem]{ulem}
\usepackage[comma]{natbib}

\usepackage[letterpaper,top=2cm,bottom=2cm,left=3cm,right=3cm,marginparwidth=1.75cm]{geometry}
\usepackage{tikz}
 \usetikzlibrary{positioning,arrows.meta}
\usepackage{setspace}
\usepackage{graphicx}
\usepackage{lipsum}
\usepackage[font=normalsize]{subcaption}
\usepackage{placeins}
\usepackage[colorlinks,
            bookmarks,
            pagebackref=true,
            linkcolor=blue,
            anchorcolor=blue,
            citecolor=blue,
            urlcolor=blue,
            linktocpage=true]{hyperref}


\usepackage{etoolbox}
\usepackage{epigraph}

\renewenvironment{abstract}{%
  \begin{center}%
    \bfseries\large
    \abstractname
  \end{center}%
  \quotation
}{%
  \endquotation
}

\makeatletter
\makeatother
\newtheorem{definition}{Definition}

\newtheorem{proposition}{Proposition}
\newtheorem{corollary}{Corollary}
\newtheorem{remark}{Remark}
\newtheorem{lemma}{Lemma}
\newtheorem{lem}{Lemma}[section]
\newtheorem{obs}{Observation}[section]
\newtheorem{prop}{Proposition}[section]
\newtheorem{cor}{Corollary}[section]

\title{\huge Information Greenhouse:\\ \LARGE Optimal Persuasion for Medical Test-Avoiders}
\author{Zhuo Chen\thanks{Center of Economic Research, Shandong University. Email: zhuochen@sdu.edu.cn. I would like to thank Russell Golman, Jaimie Lien and Jie Zheng as well as participants in GAMES 2024 (Beijing), the 14th Conference on Economic Design (Essex), and the 2025 World Congress of the Econometric Society (Seoul) for the useful comments.}}
\date{\today}

\linespread{1.2}

\begin{document}

\maketitle
\normalsize
\begin{abstract}
\setstretch{1.2} 
Patients often avoid medical tests because testing produces not only useful information but also painful beliefs.
This paper studies optimal communication between a doctor and an information-avoidant patient who first decides whether to take a test and, after an unfavorable result, whether to accept treatment. 
The doctor can disclose information about how severe non-treatment would be if the patient is sick. 
The main tension is between warning and reassurance. A warning can make treatment compelling after diagnosis, but reassurance can make testing acceptable by preserving hope about the untreated prospect. 
I characterize the optimal policy. When the warning that supports treatment is compatible with testing, the doctor uses warning-in-advance. When such warning would deter testing, the doctor constructs an information greenhouse: a committed post-test information environment that reassures the patient about the untreated prospect. 
With voluntary consultation, reassurance must sometimes be moved before the test as precautionary comfort.

\smallskip
\textbf{JEL classification}: C72, D83, D91.

\textbf{Keywords}: Dynamic information design; information incentives; information avoidance; anticipatory utility; doctor-patient relationship.
\end{abstract}

\begin{flushright}
    \textit{``To cure sometimes, to relieve often, to comfort always.''}\\
    --- On the statue of Dr. Edward Livingston Trudeau
\end{flushright}

\section{Introduction}
 
 Sometimes the hardest medical decision is not whether to undergo treatment, but whether to find out if treatment is needed. 
 A patient who notices suspicious symptoms may fully understand the medical value of testing: a negative result can bring relief, while a positive result can make timely treatment possible.
 Yet he may still delay, or even refuse to be tested. 
 The problem is not that the test lacks instrumental value, nor primarily that undergoing the test is costly in monetary or physical terms. 
 Rather, the problem is that the very information that helps the patient choose the right action may also destroy the temporary psychological shelter provided by uncertainty. 
 This tendency to deliberately avoid useful information for psychological or emotional reasons is commonly referred to as \textit{information avoidance} \citep{golman2017information,sweeny2010information}.
 For instance, although early detection is widely recognized as important for the prognosis and treatment of breast cancer, evidence suggests that information avoidance is linked to delays in, or resistance to, routine mammography \citep[e.g.][]{caplan1995patient,meechan2002delay,melnyk2012avoiding,zanella2016experiencing}.
 Such delays can result in significant health and economic losses. 

 This paper studies how a doctor should communicate with a patient who needs information to choose the right action but may avoid testing because the information itself is psychologically painful. 
 The doctor cannot force the patient to be tested, nor does she rely on subsidies, penalties, or behavioral nudges to alter his decision.
 Her instrument is the information environment the patient faces during the medical consultation. 
 Specifically, she can decide how much information about the untreated prospect to disclose before the patient chooses whether to be tested, and she can commit to how she will explain how harmful remaining untreated would be if the test reveals that the patient is sick. 
 Such communication matters for two reasons. 
 First, it affects the patient's subsequent treatment decision, because whether he accepts treatment depends on how he understands the consequence of remaining untreated. 
 Second, it affects his willingness to be tested in the first place, because the patient anticipates not only what he may learn, but also how learning it will feel. 
 Thus, medical communication is not merely a problem of transmitting accurate information, but also a problem of designing an information environment that the patient is willing to enter.

 I formalize this problem as a two-stage medical communication problem in which an information-avoidant patient first decides whether to undergo the test and, after an unfavorable result, decides whether to accept treatment.
 The sender is a benevolent doctor who maximizes the patient's physical health outcome, while the receiver is a patient with anticipatory utility who cares not only about final health but also about the emotional consequences of beliefs formed before uncertainty is resolved \citep{loewenstein1987anticipation,caplin2001psychological}. 
 In the model, the patient is either sick or not sick; treatment is unnecessary if he is not sick, while if he is sick, treatment raises his probability of a healthy final outcome at a physical cost. 
 The doctor's disclosure concerns the patient's untreated prospect: the likelihood of remaining in good health without treatment, conditional on testing positive for the disease.
 This information is distinct from the medical test itself. 
 The test reveals whether the patient is sick, whereas the doctor's disclosure shapes the patient's belief about how harmful non-treatment would be given that the patient is sick.

 \begin{figure}
    \centering
 \begin{tikzpicture}[scale=2, font=\small]
    \tikzset{
  solid node/.style={circle,draw,inner sep=1.5,fill=black},
  hollow node/.style={circle,draw,inner sep=1.5},
  level 1/.style={level distance=5mm,sibling distance=1.5cm},
  level 2/.style={level distance=5mm,sibling distance=1.2cm},
  level 3/.style={level distance=5mm,sibling distance=1.2cm},
  level 4/.style={level distance=5mm, sibling distance=1cm},
  level 5/.style={level distance=6mm, sibling distance=1cm}
}
\node(0)[solid node,label=right:{Principal}]{} 
    child {
        node[solid node,label=right:{Agent}]{}
        child{
            node[solid node]{} 
            child{
                node[hollow node]{}
                edge from parent [draw=black] node[left,yshift=5]{Healthy} 
            }
            child{
                node[solid node,label=left:{Principal}]{} 
                child{
                    node[solid node,label=left:{Agent}]{}
                    child{
                        node[hollow node]{}
                        edge from parent [draw=black] node[left]{Treat} 
                    }
                    child{
                        node[hollow node,black]{}
                        edge from parent [draw=black] node[right]{No treatment} 
                    }
                    edge from parent [draw=black] node[right]{\textbf{Interim disclosure}} 
                }
                edge from parent [draw=black] node[right,yshift=5]{Sick} 
            }
            edge from parent [draw=black] node[left,yshift=6]{Test} 
        }
        child{
            node[hollow node,black]{}
            edge from parent [draw=black] node[right,yshift=6]{No test} 
        }
        edge from parent [draw=black] node[left]{\textbf{Ex ante disclosure}} 
    };  
 \end{tikzpicture}
 \caption{Timing of communication and decisions. The doctor may disclose information before the patient's testing decision and may provide additional disclosure, if the patient is tested and found sick, before the treatment decision.}
 \vspace{-.2cm}
 \label{fig:intro_tree}
 \end{figure}

 Figure \ref{fig:intro_tree} illustrates the structure of the consultation process. 
 The doctor has two opportunities to disclose information. 
 She can communicate with the patient before he decides whether to take the test (i.e., the ex ante disclosure), and she can also commit to additional disclosure after an unfavorable test result but before the treatment decision (i.e., the interim disclosure). 
 The key dynamic force of the model comes from the interim disclosure.
 On the one hand, after a bad test result, it shapes the patient's belief about the consequence of remaining untreated, and therefore affects whether he accepts treatment. 
 On the other hand, before the test is taken, the promised interim disclosure is already part of the patient's calculation; that is, he anticipates what information he would receive, and how he would feel, if the result were bad.
 
 The dual role of the interim disclosure creates the central tradeoff of the paper.
 A warning-oriented disclosure can improve the patient's treatment decision by making the consequence of remaining untreated appear sufficiently severe.
 However, the same warning may make testing less attractive, because the patient anticipates the psychological pain of confronting bad news.
 Conversely, a more reassuring disclosure can make testing more acceptable by softening the anticipated pain of a bad result, but it may also leave the patient too optimistic about the consequence of remaining untreated and thereby weaken his willingness to accept treatment.
 The optimal communication policy therefore balances the patient's willingness to be tested against his willingness to be treated.
 In what follows, I call a disclosure warning-oriented if, relative to full disclosure, it increases the probability that the patient receives bad news about the untreated prospect, and call a disclosure reassurance-oriented if, relative to full disclosure, it increases the probability that the patient receives good news about the untreated prospect.
 The contribution of the paper is to characterize when reassurance-oriented disclosure is desirable despite this cost, and how such reassurance must be structured to preserve testing incentives without completely undermining treatment incentives.

 The main characterization shows that the optimal policy depends on whether the warning needed to support treatment also keeps testing acceptable. 
 If it does, the doctor uses warning-in-advance: she front-loads the relevant information before the testing decision, and the warning is strong enough to ensure that a patient who is later found sick accepts treatment without further persuasion. 
 If, however, the same warning would make testing psychologically unacceptable, the doctor must instead use reassurance-oriented disclosure. 
 The optimal policy then constructs an information greenhouse: the doctor commits to a post-test information environment that makes the prospect of a bad diagnosis bearable enough for the patient to take the test. 
 This reassurance is not free, because by preserving hope about the untreated prospect, it may also weaken the patient’s willingness to accept treatment after he is found sick. 
 The value of reassurance therefore comes from resolving the specific incentive conflict between preserving testing participation and maintaining treatment discipline.

 I then consider a stronger form of avoidance, in which the patient may refuse the consultation itself before any disclosure has taken place. The doctor can no longer rely on a future reassurance to draw the patient back in, and the ex ante disclosure must therefore carry the incentive load by itself.
 I show that if the patient is fear-avoidant, no communication policy can persuade him to enter the consultation. If the patient is fear-reactive, two cases arise. When a warning-oriented ex ante disclosure is itself acceptable to the patient, the optimal policy remains warning-in-advance. When such a warning would make the consultation itself unacceptable, the optimal policy becomes \textit{precautionary comfort}: the doctor commits to a reassurance-oriented ex ante disclosure and remains silent afterwards. 
 This front-loaded reassurance makes the consultation acceptable, but at a cost: by softening the picture of the untreated prospect ex ante, it weakens treatment incentives once a bad result arrives.

 Finally, I show that the analysis does not rely entirely on global information avoidance. 
 Allowing the patient's information attitude to be more general than global information avoidance preserves the same warning-versus-reassurance logic, although the optimal signal may involve richer posterior structures.
 
 \subsection{Literature}
 This paper first relates to the literature on anticipatory utility and information avoidance. 
 Anticipatory emotions are introduced into economics by \cite{akerlof1982economic} and \cite{loewenstein1987anticipation}, and are extended to environments with risk and uncertainty by \cite{wu1999anxiety}, \cite{caplin2001psychological} and \cite{epstein2008living}. 
 A large body of work studies how the tradeoff between the instrumental value of information and the emotional cost of learning affects information-acquisition decisions, especially in health-related contexts \citep{caplin2003aids,koszegi2003health,koszegi2006ego}. 
 My paper builds on this insight but shifts the focus from the individual's information-acquisition decision to the sender's problem of designing the information environment. 
 The doctor must choose not only whether information should be disclosed, but also when warning or reassurance should be used to make both testing and treatment acceptable.

 The paper is also related to work on information transmission to psychological receivers. 
 \cite{caplin2004supply} and \cite{koszegi2006emotional} show that anticipatory feelings can distort communication even when the sender is well intentioned, because attempts to protect the receiver's emotions may affect credibility. 
 \cite{schweizer2018optimal} study optimal test design for an anxious patient, and \cite{lipnowski2018disclosure} analyze persuasion with non-standard information preferences. 
 My model differs in two respects. 
 First, the doctor and the patient are not aligned in the relevant objective: the doctor maximizes physical health, while the patient also cares about anticipatory feelings. 
 Second, the patient controls whether to acquire the test result, while the doctor controls the background information about the untreated prospect.
 This separation between voluntary testing and designed disclosure generates the central tradeoff between making testing acceptable and making treatment attractive.

 The dynamic structure connects the paper to the literature on dynamic persuasion and information design with forward-looking receivers \citep{ely2020moving,habibi2020motivation,chen2026accelerator}. 
 This literature shows that a sender's commitment to future information can affect earlier actions: future disclosure can guide later decisions and provide incentives for earlier behavior. 
 In those settings, however, the receiver typically values future information positively. 
 In my model, the promised future information may be psychologically costly. 
 As a result, interim disclosure is not simply a reward for taking the test; it may reassure the patient into testing, warn him into treatment, or deter him from entering the information environment. 
 This changes the role of dynamic commitment and creates the warning-versus-reassurance tradeoff characterized in the paper.
 
 A related set of papers studies persuasion problems in which information design interacts with participation or incentive constraints. 
 The most closely related paper is \citet{rosar2017test}, who studies test design under voluntary participation. 
 In her model, the sender designs the test itself and the agent's reluctance is driven by the payoff consequences of evaluation.
 In my model, however, the medical test is fixed and fully revealing; the doctor's choice concerns disclosure about the background information about the untreated prospect, and the patient's reluctance comes from the anticipated emotional cost of learning bad news. 
 The paper is also related to \citet{boleslavsky2020bayesian}, where information design interacts with moral-hazard incentives. 
 In my paper, there is no moral hazard: the relevant earlier action is the patient's voluntary decision to enter the test or the consultation, and the same patient later decides whether to accept treatment.

 \bigskip
 The remainder of the paper is organized as follows. 
 Section~\ref{section.model} presents the model. 
 Section~\ref{section.voluntary_testing} characterizes optimal communication under voluntary testing. 
 Section~\ref{section.exante} studies voluntary consultation, where the patient can refuse the doctor's communication at the outset. 
 Section~\ref{section.general} allows more general information attitudes. 
 Section~\ref{section.conclusion} concludes.

\section{The Model}
\label{section.model}
This section outlines the formal model.

\subsection{Economic Environment}
 A patient (``he'') suspects he might have a life-threatening disease and seeks medical advice from a doctor (``she''). 
 The medical consultation process alters the patient's \textit{initial health status}, which is either \textit{sick} ($\theta=0$) or \textit{not sick} ($\theta=1$), to a \textit{final health status}, which is either \textit{healthy} or \textit{unhealthy}. 
 The term ``health probability'' will denote the probability that the patient's final health status is healthy. 
  
 \textbf{The health function.}
 When $\theta=1$, the health probability equals $1$.
 When $\theta=0$, the health probability is calculated as
  \begin{align*}
    \bar{p}e+\underline{p}(1-e).
  \end{align*}
 Here, $e\in\{0,1\}$ represents the patient's treatment action against the disease, where $e=1$ ($e=0$) is interpreted as (not) receiving the medical intervention, such as taking a medicine or having a surgery.
 Taking action $e$ incurs a physical cost $ce$, which can be interpreted as either the monetary cost or the physical pain of the treatment.
 If the patient is sick, parameters $\bar{p}$ and $\underline{p}$ ($\bar{p}>\underline{p}$) represent the health probabilities when $e=1$ and $e=0$, respectively.
 Intuitively, the patient has no need for treatment if he is not sick. 
 If he is sick, treatment raises the final health probability from $\underline{p}$ to $\bar{p}$ at a physical cost $c$; forgoing treatment leaves a health probability of $\underline{p}$. 
 The treatment decision therefore matters only after the test has revealed sickness.

 \textbf{The medical test.}
 Neither party observes $\theta$ at first, and the only way to reveal it is through a fully informative medical test. 
 The test is costless.
 Let $\alpha\in(0,1)$ be the common prior that $\theta=1$, and denote by $a=1$ and $a=0$ the choices that the patient accepts and does not accept the medical test.
 Also, similar to the real-life medical consultation process, the patient cannot undergo the treatment if he is not tested.

 \textbf{Information disclosure.}
 The doctor aims to maximize the patient's health probability, and her only instrument is through information disclosure.
 The doctor can make a credible commitment about the information policy at the beginning of the interaction, and the patient fully understands the credibility.
 The doctor has two time windows to disclose information: one is before the patient's test choice, and the other one is either after the test result is received when $a=1$, or after observing choice $a=0$. 
 The two instances of information disclosure are referred to as the \textit{ex ante} and the \textit{interim disclosure}, respectively.
That is, the ex ante disclosure provides background information about the untreated prospect $\underline{p}$ before the patient's testing decision. 
The interim disclosure has two formal components: the disclosure after the patient is tested and found sick and the disclosure after a refusal to test.

 \begin{remark}
    It is important to distinguish the medical test from the doctor's disclosure. 
    The test reveals the patient's initial health status ($\theta$) but conveys no information about the untreated prospect $\underline{p}$. 
    The doctor's disclosure, described below, concerns $\underline{p}$ exclusively: how harmful remaining untreated would be, conditional on the patient being sick.\footnote{In Appendix \ref{extension.test}, I switch to the optimal test design problem where information disclosure is about $\theta$. }
 \end{remark}
  
 The test result can be publicly observed by both parties.\footnote{In reality, patients are often unable to observe the test result due to the lack of specialized knowledge. In this case, the revelation of the test result can also be an instrument of information disclosure.} 
 Information the doctor conveys is about parameter $\underline{p}$, the health probability when $e=0$,\footnote{Compared to the other channels of information disclosure ($\bar{p}$ and $c$), disclosure about $\underline{p}$ is the most complicated and contains all the theoretical constructs contained in other channels. The optimal disclosure about $\bar{p}$ is similar.} and I assume parameter $\underline{p}$ is valued in a binary set $\{\underline{p}_H,\underline{p}_L\}$ ($\bar{p}>\underline{p}_H>\underline{p}_L$).
 The binary specification is the minimal environment that separates uncertainty about health status ($\theta$) from uncertainty about the severity of non-treatment ($\underline{p}$); the key insights are not specific to this structure.
 The patient begins with an \textit{ex ante belief} assigning $\mu_0\in(0,1)$ to event $\underline{p}=\underline{p}_H$.
 Then it is updated to the \textit{interim belief} $\mu_1$ by the ex ante disclosure, and is subsequently updated to an \textit{ex post belief} $\mu_2$ in the interim disclosure. 
 Denote the ex ante disclosure as
  \begin{align*}
      \tau_0\in\mathcal{R}(\mu_0)\equiv\left\{\tau\in\Delta([0,1])\bigg|\int\mu\tau(d\mu)=\mu_0\right\}.
  \end{align*}
 I assume the doctor can observe the patient's test choice as well as the test result (if tested), and use different signals in the interim disclosure contingent on this observation. 
 Thus, the interim disclosure is given by pair $\left<\tau_1^1(\cdot|\cdot),\tau_1^0(\cdot|\cdot)\right>$, where the former is used after the patient is diagnosed to be sick ($\theta=0$) in the test and the latter is used after observing $a=0$, and determines the patient's outside option from refusing the test.
 By Bayesian plausibility, $\tau_1^k(\cdot|\mu_1)\in \mathcal{R}(\mu_1)$ for any $\mu_1\in\text{Supp}(\tau_0)$ and $k=0,1$.
 Thus, an \textit{information policy} is defined as triple $\langle\tau_0,\tau_1^1(\cdot|\cdot),\tau_1^0(\cdot|\cdot)\rangle$, which can be interpreted as the standard operating procedure (SOP) of the medical consultation.
  
 \textbf{The timeline.} There are two dates: $1$ and $2$. 
 The game proceeds as follows. 
 At date $1$, the doctor makes a credible commitment on the rule of information disclosure, and then executes the ex-ante stage of the committed disclosure rule.
 Next, the patient chooses whether to accept the medical test (i.e., choosing $a\in\{0,1\}$), and then the doctor carries out the committed interim disclosure. 
 At date $2$, the patient first decides whether to accept the treatment (i.e., choosing $e\in\{0,1\}$) and pays the physical cost $ce$.  
 Then the final health status is realized and experienced.

 The baseline model assumes that the patient must enter the consultation and receives the doctor's initial disclosure. 
 Section \ref{section.exante} relaxes this assumption by allowing the patient to refuse the consultation before any disclosure is received. 
 That is, the patient decides whether to enter the consultation at the outset.

 \textbf{Equilibrium concept.} The equilibrium concept is the perfect Bayesian equilibrium, which requires the patient to be sequentially rational at every decision node given the belief system, and the belief system is generated by the Bayes' rule whenever possible.


\subsection{Preferences}
 I write $e \in \{0,1\}$ for the patient's date-2 treatment choice and $e^{*}$ for the date-1 expectation of that choice; the two coincide ex post but differ inside ex-ante expectations. 
 Formally, at the end of date 1, the health probability is given by
  \begin{align*}
     \eta(a,e^*,\mu_1)=\left\{
        \begin{array}{cc}
            \alpha+(1-\alpha)\underline{p}(\mu_1) & a=0\\
            \bar{p}e^*+\underline{p}(\mu_1)(1-e^*) & a=1\text{ and }\theta=0\\
            1 & a=1\text{ and }\theta=1
        \end{array}
     \right.,
 \end{align*}
 where $e^*$ is the patient's (rational) expectation of his action at date 2 and $\underline{p}(\mu)\equiv\mu \underline{p}_H+(1-\mu)\underline{p}_L$. At date 2, I normalize the physical prize of being eventually healthy to $1$ and assume that there are no anticipatory emotions within a stage. Thus, the patient's date-2 payoff is $u_2(a,e^*,\mu_1)=\eta(a,e^*,\mu_1)-ce^{*}$. Denote by $v_2(a,\mu_1)$ the indirect utility at date 2 that optimizes $u_2(a,\cdot,\mu_1)$.

 The patient's preference is modeled as in the psychological expected utility model by \cite{caplin2001psychological}, which is common in the existing literature. In this model, the patient's choices can be understood as a subgame perfect Nash equilibrium in an \textit{intrapersonal game} between two successive selves.

 The key assumption of the psychological expected utility model is the dynamic inconsistency arising from the temporal distance between decision making and uncertainty resolution.
 Based on this idea, the payoff at the end of date 1 is given by
 \begin{align*}
     u_1(a,\mu_1)=\phi\circ v_2(a,\mu_1).
 \end{align*}
 Function $\phi$ measures the distortion of the patient's preference by the anticipatory emotions at date 1, which shrinks to the standard model if it is linear.
 Assume that $\phi$ is continuous, smooth, and strictly increasing.
 Throughout the paper, I normalize $\phi(1)=1$ without loss of generality, so that the anticipatory utility from certainty of a healthy outcome equals $1$.
 Then the patient, at any date-1 decision node of his, maximizes the expected value of $u_1(a,\mu_1)$ based upon the current belief (as well as a correct anticipation of his future behavior) whenever possible.

 Information attitude is modeled by the curvature of function $\phi$: The patient is information-avoidant (resp. loving, neutral) if and only if $\phi$ is concave (resp. convex, linear) in probabilities.\footnote{The same behavioral tendency is sometimes referred to as \textit{information aversion}. I use the term \textit{information avoidance} because \textit{information aversion} is sometimes used more narrowly to describe a preference for one-shot resolution of uncertainty; see, for example, \cite{andries2020information}.}\footnote{The classical interpretation of this result is based on the relationship between risk attitude and time. Suppose the probability of being eventually healthy is $\eta$. Then if the uncertainty is resolved completely at date 1, the utility is $\eta\phi(1)$, while if the uncertainty is resolved completely at date 2, the utility is $\phi(\eta)$. The former is larger if and only if ``immediate risk'' is more desirable than its corresponding ``remote risk''.}
 To model information avoidance, the concavity of $\phi$ is assumed throughout this paper unless explicitly stated.
 The concavity of $\phi$ can be interpreted as a result of \textit{diminishing marginal utility of hope}, since hope preservation is widely believed as the \textit{as-if psychology} of information avoidance \citep{chew1994hope,ahlbrecht1996resolution,kocher2014let,chen2022preference,masatlioglu2023intrinsic}.
 If we think of $\phi(v)-v$ as the positive emotional feeling of hope from date-2 payoff $v$, then $\phi$ being concave indicates that the marginal increase in utility from an additional increase in the hopefulness (i.e., the belief of the desired outcome) is decreasing.

 The doctor's objective function is to maximize $\mathbb{E}[\eta]$, the expected final health probability, evaluated under the patient's equilibrium behavior. 
 Two divergences from the patient's objective stand out. 
 First, the doctor does not internalize the treatment cost $c$ that the patient bears whenever $e=1$, so she would prefer treatment to be accepted more readily than the patient does. 
 Second, the doctor does not internalize the psychological prize embedded in $\phi$: she evaluates outcomes in raw probabilities, while the patient's date-1 valuation is curved by anticipatory emotions, which is precisely why information avoidance arises in the first place. 

\section{Optimal Persuasion under Voluntary Testing}
\label{section.voluntary_testing}

This section studies the baseline case of voluntary testing.
In this case, the patient is assumed to enter the consultation and hear the doctor's initial disclosure, but he may still refuse the medical test.
The doctor's problem is to design persuasion that balances two objectives, namely inducing the patient to take the test and, conditional on a bad test result, inducing him to accept treatment.

\subsection{The Baseline Problem: Two Incentive Constraints}
\label{subsection.baseline}

The doctor's persuasion problem has two incentive constraints.
The treatment constraint is relevant after the patient is tested and found sick, and the patient accepts treatment only if he believes that remaining untreated is sufficiently harmful.
The testing constraint is relevant before the test, and the patient accepts the test only if the information environment following a possible bad result is psychologically acceptable.

The treatment constraint is governed by the patient's posterior belief about the untreated prospect.
Because the untreated prospect is binary, I use $\mu$ to denote the patient's belief that the untreated prospect is the favorable one $\underline{p}_H$, and when formed before the testing decision, this belief is written as $\mu_1$.

If the patient takes the medical test and the result is $\theta=0$, he accepts treatment if and only if
\begin{equation*}
    \bar{p}-c\geq\underline{p}(\mu)\equiv\mu\underline{p}_H+(1-\mu)\underline{p}_L,
\end{equation*}
which can be rearranged to $\mu\leq\mu_e\equiv \underline{p}^{-1}\left(\bar{p}-c\right)$.
Thus a higher $\mu$ makes the patient more optimistic about the untreated prospect, and treatment is accepted if and only if $\mu\leq\mu_e$.
Assume $\mu_e\in(0,1)$ to avoid trivial results.

For a posterior belief $\mu$ about the untreated prospect, $P(\mu)$ denotes the doctor's conditional payoff when the patient is tested and found sick.
By contrast, $V(\mu)$ denotes the patient's date-1 value of taking the test when, conditional on a bad result, the post-test disclosure induces posterior $\mu$; it includes the possibility that the test instead reveals the patient to be healthy.
These functions are
\begin{align*}
\label{eq:PV}
     P(\mu)=\left\{
      \begin{array}{cc}
          \bar{p} & \mu\leq\mu_e \\
          \underline{p}(\mu) & \mu>\mu_e
      \end{array}
     \right.
     \quad\text{and}\quad
     V(\mu)=\left\{
      \begin{array}{cc}
          \alpha+(1-\alpha)\phi(\bar{p}-c) & \mu\leq\mu_e \\
          \alpha+(1-\alpha)\phi(\underline{p}(\mu)) & \mu>\mu_e.
      \end{array}
     \right.
\end{align*}

The testing constraint compares the expected value of taking the test with the value of refusing it.
When tested, the patient may be healthy or sick. Interim disclosure matters only after the sickness is revealed.
If the patient refuses, the outside option depends on the disclosure $\tau_1^0$ promised upon refusal.
Define
\[
V_0(\mu)\equiv \phi\big(\alpha+(1-\alpha)\underline{p}(\mu)\big)
\]
as the patient's payoff at belief $\mu$ if he declines the test.
For any interim belief $\mu_1$ and any pair of signals $(\tau_1^1,\tau_1^0)\in \mathcal{R}(\mu_1)\times \mathcal{R}(\mu_1)$, the patient accepts the test if and only if
\begin{equation}
\label{eq:testing_constraint}
\begin{aligned}
     \int_0^{1}V(\mu)\tau_1^1(d\mu)
     \geq
     \int_0^1V_0(\mu)\tau_1^0(d\mu).
\end{aligned}
\end{equation}

When the patient refuses the test, the continuation disclosure affects only the value of refusing.
Because the patient cannot be treated without first being tested, $V_0$ is concave, and therefore full disclosure minimizes the expected refusal value.
Thus, for the purpose of the testing constraint, the relevant outside option is the full-disclosure value
\begin{equation*}
\label{eq:full_disclosure_refusal_value}
     \bar{V}_0(\mu)\equiv
     \mu\cdot\phi(\alpha+(1-\alpha)\underline{p}_H)
     +(1-\mu)\phi(\alpha+(1-\alpha)\underline{p}_L),
\end{equation*}
which is linear in $\mu$.
Substituting full disclosure for $\tau_1^0$, condition \eqref{eq:testing_constraint} becomes
\begin{equation}
\label{eq:simplified_testing_constraint}
\begin{aligned}
     \int_0^{1}V(\mu)\tau_1^1(d\mu)\geq\bar{V}_0(\mu_1).
\end{aligned}
\end{equation}

What matters for the testing constraint is how the patient responds to fear.
Some patients are fear-reactive; that is, a sufficiently bad untreated prospect makes them willing to take the test.
Others are fear-avoidant: confronting the same prospect head-on becomes intolerable, and the patient declines the test rather than face it.
I call the patient \textit{fear-reactive} (\textit{fear-avoidant}) if he accepts (rejects) the test when the untreated prospect is at its worst, that is, when $\mu_1=0$ and hence $\underline{p}=\underline{p}_L$ is certain.
Formally:
\begin{definition}[Fear-reactive/avoidant patient]
\label{def:fear_reactive}
The patient is fear-reactive (fear-avoidant) if
\begin{equation}
\label{eq:fear_reactive}
     c\leq(\geq)\bar{p}-\phi^{-1}\left(\frac{\phi(\alpha+(1-\alpha)\underline{p}_L)-\alpha}{1-\alpha}\right).
\end{equation}
\end{definition}

\subsection{Main Result: Warning-in-Advance and Information Greenhouse}

The optimal policy takes one of two structural forms depending on whether the patient is fear-reactive or fear-avoidant.
I represent it by its state-conditional behavior at each disclosure stage.
The ex ante disclosure is a randomization over recommendations as a function of the realized untreated prospect $\underline{p}\in\{\underline{p}_H,\underline{p}_L\}$, and the post-test disclosure (if the patient is tested and the result is bad) is a similar object indexed by the resulting interim belief.

\begin{definition}[Warning-in-advance]
\label{def:warning_in_advance}
An information policy is \textit{warning-in-advance} if its ex ante disclosure either reveals nothing or splits the patient into two groups: a \textit{testing group}, formed by sending every patient with $\underline{p}=\underline{p}_L$ and a fraction of the patients with $\underline{p}=\underline{p}_H$ to testing under a common interim belief; and a \textit{released group}, formed by sending the remaining patients with $\underline{p}=\underline{p}_H$ to the certainty belief $\mu_1=1$ at which the patient does not test.
Conditional on the patient being tested and found sick, the post-test disclosure carries no further information.
\end{definition}

In a one-shot persuasion problem, warning is the ex post device that makes treatment look attractive after a bad diagnosis.
In the dynamic problem here, warning still has to discipline treatment, but the patient must first agree to learn the diagnosis.
When fear is action-inducing, sufficiently pessimistic interim beliefs already make testing acceptable, so the doctor can move the entire informational burden of warning to the \textit{ex ante} stage and leave the post-test environment quiet.
The patient walks into the test already knowing what bad news would mean for treatment; the diagnosis itself adds the medical fact, not the psychological shock.

\begin{definition}[Information greenhouse]
\label{def:information_greenhouse}
An information policy is an \textit{information greenhouse} if its ex ante disclosure splits the patient into a testing group, formed by pooling some mass of each state under a common interim belief at which the patient agrees to be tested, and at most one fully separating group, formed by sending the remaining mass of one state to the corresponding certainty belief $\mu_1\in\{0,1\}$, at which the consultation ends without testing.
Conditional on the patient being tested and found sick, the post-test disclosure is committed to ex ante and takes the form of perfect bad news, a binary signal in which the unfavorable posterior is fully revealing while the favorable posterior is not fully revealing.
This perfect-bad-news structure provides comfort about what a sick result means: the favorable post-test posterior is no longer fully informative, so the patient anticipates a lower probability of confirmed bad news conditional on being tested.
\end{definition}

That is, once the bad result is in, the doctor would prefer a reassurance-oriented disclosure, sacrificing the informativeness of the good news to lower the probability of the bad news.
Anticipating that, the patient is more willing to take the test in the first place.

\begin{proposition}[Optimal persuasion under voluntary testing]
\label{prop:voluntary_testing}
The optimal information policy has the following form.
\begin{itemize}
      \item If the patient is fear-reactive, an optimal policy is warning-in-advance.
      \item If the patient is fear-avoidant and some information policy induces the patient to take the test, an optimal policy is an information greenhouse.
\end{itemize}
\end{proposition}
\begin{proof}
     See Appendix \ref{proof-exante}.
     The parametric form of the testing belief and of the post-test disclosure underlying each case is established in Section~\ref{subsection.interim_disclosure}.
\end{proof}

When the patient is fear-reactive, pessimism about the untreated prospect induces action rather than avoidance, so providing the relevant risk information before the testing decision is enough: if the test reveals sickness, the patient is already concerned enough about non-treatment to accept treatment, and no further interim signal is needed.
When the patient is fear-avoidant, front-loaded warning is counterproductive: making non-treatment look severe would strengthen treatment incentives after diagnosis, but it would also make the prospect of testing emotionally unbearable.
The doctor therefore keeps the patient in an interim belief region where testing remains psychologically acceptable and commits to a reassurance-oriented post-test disclosure --- the precise content of which is characterized in the next subsection.

\subsection{How Interim Disclosure Balances Warning and Reassurance}
\label{subsection.interim_disclosure}

To understand why the two policies in Proposition \ref{prop:voluntary_testing} arise, it is useful to start from the post-test persuasion problem.
In a two-period model, the optimal information policy can be solved by backward induction. 
That is, the post-test disclosure is characterized first, taking the interim belief as given, and the ex ante disclosure is then chosen anticipating the resulting continuation value.

Fix an interim belief $\mu_1$.
If the patient is tested and found sick, the doctor chooses the interim disclosure to maximize physical health subject to the testing incentive constraint.
If the testing constraint is slack, post-test disclosure can be warning-oriented: it maximizes the probability that treatment is accepted.
If the constraint binds, disclosure must shift toward reassurance because the patient must have expected the bad-result information environment to be emotionally tolerable before agreeing to the test.

Formally, the optimal interim disclosure is a pair
$\left<\tau_1^1(\cdot|\cdot),\tau_1^0(\cdot|\cdot)\right>$.
Since Section~\ref{subsection.baseline} has already established that, conditional on refusal, full disclosure is optimal, for each interim belief $\mu_1$, the pair
$\left<\tau_1^1(\cdot|\mu_1),\tau_1^0(\cdot|\mu_1)\right>$ solves
\begin{equation}
\label{eq:interim_problem}
\begin{aligned}
  \max_{\tau_1^1\in\mathcal{R}(\mu_1)}:\,&\int_0^1P(\mu)\tau_1^1(d\mu)\\
    &\text{subject to: }\int_0^{1}V(\mu)\tau_1^1(d\mu)\geq\bar{V}_0(\mu_1).
\end{aligned}
\end{equation}

\begin{lemma}[Binary interim signals]
\label{lemma:binary_signals}
For each interim belief $\mu_1$, it is sufficient to consider interim signals with at most two posteriors.
If two posteriors are used, they lie on opposite sides of $\mu_e$.
\end{lemma}
\begin{proof}
The proof adapts standard concavification to the constrained setting, using the piecewise linearity of $P$, the piecewise concavity of $V$, and the constrained persuasion argument in \citet[Proposition 3.2]{doval2024constrained}.
\end{proof}

In the binary environment, the relevant signals can be described by which message is fully revealing.
Under a \textit{warning-oriented} signal, the favorable untreated prospect $\underline{p}_H$ is sometimes fully revealed; otherwise the patient receives a pooled message that makes non-treatment appear sufficiently dangerous to support treatment.
Under a \textit{reassurance-oriented} signal, the unfavorable untreated prospect $\underline{p}_L$ is sometimes fully revealed; otherwise the patient receives a pooled message that leaves the patient relatively optimistic and makes testing more bearable.

Two extreme benchmarks are useful.
If the testing constraint were absent, the doctor would choose the disclosure that maximizes treatment value. 
This \textit{warning-oriented benchmark} is optimal whenever the testing constraint does not bind. 
It makes the patient sufficiently pessimistic about non-treatment whenever treatment needs to be induced.
At the other extreme, if the doctor were instead concerned only with making testing attractive, she  would choose the disclosure that maximizes the patient's anticipatory value from taking the test. 
This \textit{reassurance-oriented benchmark} is the binding-direction signal toward which the optimum tilts when the testing constraint becomes severe. 
It raises the chance that, after a bad diagnosis, the patient receives a psychologically comforting explanation of the untreated prospect.
Let
\begin{equation}
\label{eq:mu_v}
     \mu_v \equiv \arg\max_{\mu\in(\mu_e,1]}\frac{V(\mu)-V(0)}{\mu}
\end{equation}
denote the favorable posterior used by the reassurance-oriented benchmark; it is the point at which the chord from $(0,V(0))$ becomes tangent to $V$ on $(\mu_e,1]$.
I assume the maximizer is unique.
Figure \ref{fig-concavification} illustrates the two benchmarks.

\begin{figure}[ht!]
   \centering
    \begin{subfigure}[t]{0.47\textwidth}
   \begin{tikzpicture}
   \draw[thick,->](0,0)--(4.5,0) node [below] {$\mu$};
   \draw[thick,->](0,0)--(0,4.5) node [left] {$P(\mu)$};
   \draw[thick, dashed, -](3.0189705450977025,4)--(3.0189705450977025,0) node [below] {$\mu_v$};
   \draw[thick, -, dashed](0,4)--(4,4);
   \draw[thick, -, dashed](0,2)--(4,2) node [right] {$A$};
   \draw[thick, ->](1,4)--(0,4)
    node [left] at (0,4) {$\bar{p}$}
    node [left] at (0,0) {$0$}
    node [above] at (0.2,4) {$C$}
    node [left] at (0,1) {$\underline{p}(\mu_e)$}
    node [left] at (0,2) {$\underline{p}_H$}
    node [left] at (3.1,1.85) {$D$}
    node [below] at (1,0) {$\mu_e$}
    node [below] at (4,0) {$1$};
   \draw[thick, -](1,1) -- (3.0189705450977025,1.67299);
   \draw[thick, ->>] (4,2) -- (3.0189705450977025,1.67299);
   \draw[thick, dashed, -](0,1)--(4,1);
   \draw[thick, dashed, -](1,4)--(4,2);
   \draw[thick, dashed, -](1,0)--(1,4) node [above] {$B$};
   \draw[thick, dashed, -](4,0)--(4,4);
   \begin{scope}
      \clip(0,0)rectangle(5,5);
      \fill[gray, opacity=.5] (0,4) -- (1,4) -- (4,2) -- (1,1) -- cycle;
   \end{scope}
  \end{tikzpicture}
  \caption{Warning-oriented benchmark.}
  \end{subfigure}
  \hfill
  \begin{subfigure}[t]{0.47\textwidth}
   \begin{tikzpicture}
   \draw[thick,->](0,0)--(4.5,0) node [below] {$\mu$};
   \draw[thick,->](0,0)--(0,4.5) node [left] {$V(\mu)$};
   \draw[domain=1:3.0189705450977025,thick] plot(\x,{(1 - exp(- 1.5 * (\x / 4)))/(1 - exp(-1.5))*4});
   \draw[domain=3.0189705450977025:4,thick,<<-] plot(\x,{(1 - exp(- 1.5 * (\x / 4)))/(1 - exp(-1.5))*4});
   \draw[domain=0:1,dashed] plot(\x,{(1 - exp(- 1.5 * (\x / 4)))/(1 - exp(-1.5))*4});
   \fill[gray, opacity=.5, domain=3.0189705450977025:4, variable=\x]
    (0,1.61011) -- (3.0189705450977025,3.48912) -- plot(\x,{(1 - exp(- 1.5 * (\x / 4)))/(1 - exp(-1.5))*4}) -- (1,1.61011) -- cycle;
   \draw[thick, <-] (0,1.61011)--(1,1.61011) node [right] {$B$};
   \draw[thick, dashed, -](0,4)--(4,4)
    node [left] at (0,4) {$\phi(\underline{p}(1))$}
    node [right] at (4,4) {$A$}
    node [left] at (0,0) {$0$}
    node [left] at (0,1.61011) {$\phi(\underline{p}(\mu_e))$}
    node [below] at (3.0189705450977025,0) {$\mu_v$}
    node [below] at (3.25,3.6) {$D$}
    node [right] at (0,1.4) {$C$}
    node [below] at (1,0) {$\mu_e$}
    node [below] at (4,0) {$1$};
   \draw[thick, dashed, -](3.0189705450977025,0)--(3.0189705450977025,4);
   \draw[thick, dashed, -](0,1.61011)--(3.0189705450977025,3.48912);
   \draw[thick, dashed, -](0,4)--(4,4);
   \draw[thick, dashed, -](1,0)--(1,4);
   \draw[thick, dashed, -](4,0)--(4,4);
   \node[circle,fill=black,inner sep=0pt,minimum size=3pt] (a) at (3.0189705450977025,3.48912) {};
  \end{tikzpicture}
  \caption{Reassurance-oriented benchmark.}
  \end{subfigure}
  \caption{Warning-oriented and reassurance-oriented interim disclosure.}
  \label{fig-concavification}
\end{figure}

The two benchmarks suggest a natural classification of interim beliefs by which benchmark already meets the testing constraint \eqref{eq:simplified_testing_constraint}:
\begin{align*}
     \mathcal{F} & \equiv \{\mu_1\in[0,1]:\text{the warning-oriented benchmark satisfies \eqref{eq:simplified_testing_constraint}}\}, \\
     \mathcal{D} & \equiv \{\mu_1\in[0,1]:\text{full disclosure of $\underline{p}$ satisfies \eqref{eq:simplified_testing_constraint}}\}, \\
     \mathcal{M} & \equiv \{\mu_1\in[0,1]:\text{some interim disclosure in $\mathcal{R}(\mu_1)$ satisfies \eqref{eq:simplified_testing_constraint}}\}.
\end{align*}
On $\mathcal{F}$ warning is essentially free: the doctor can pursue the treatment-optimal disclosure without losing the patient at the testing stage.
On $\mathcal{D}$ the testing constraint binds, so the warning-oriented benchmark may no longer be feasible, but full disclosure of the untreated prospect  remains incentive-compatible.
On $\mathcal{M}\setminus\mathcal{D}$ reassurance becomes essential: only by softening the post-test message relative to full disclosure can the doctor keep the patient willing to be tested.

\begin{lemma}[Motivation-available beliefs]
\label{lemma:three_sets}
The three sets have the following structure.
\begin{itemize}
     \item If the patient is fear-reactive, $\mathcal{M}=[0,\mu_\mathcal{M}]$, $\mathcal{D}=[0,\mu_\mathcal{D}]$, and $\mathcal{F}=[0,\mu_\mathcal{F}]$ for some $\mu_\mathcal{F}\leq\mu_\mathcal{D}\leq\mu_\mathcal{M}<1$.
     In other words, $\mathcal{F}\subseteq\mathcal{D}\subseteq\mathcal{M}$.
     \item If the patient is fear-avoidant, $\mathcal{F}=\mathcal{D}=\emptyset$, and $\mathcal{M}$ is either empty or some interval $[\mu^L_\mathcal{M},\mu^H_\mathcal{M}]$.
     The set $\mathcal{M}$ is non-empty if and only if
     \begin{align*}
      \sup_{\mu\in[\mu_v,1]}\big(V(\mu)-\bar{V}_0(\mu)\big)\geq 0.
     \end{align*}
\end{itemize}
\end{lemma}

\begin{proposition}[Optimal interim disclosure]
\label{prop:optimal_interim_disclosure}
Given interim belief $\mu_1$, the optimal signal after $a=1$ and $\theta=0$ is characterized as follows.
\begin{itemize}
     \item If $\mu_1\in\mathcal{F}$, the optimal signal is the treatment-optimal, warning-oriented interim signal.
     \item If $\mu_1\in\mathcal{D}\setminus\mathcal{F}$, the optimal signal is warning-oriented: a binary signal whose upper posterior fully reveals the good state, with the lower belief $l(\mu_1)$ given by
     \begin{equation}
     \label{eq:l_mu}
         l(\mu_1)=1-(1-\mu_1)\frac{V(1)-V(0)}{V(1)-\bar{V}_0(\mu_1)}\leq \mu_e.
     \end{equation}
     If $\mu_1\in\mathcal{M}\setminus\mathcal{D}$, the optimal signal is reassurance-oriented: a binary signal whose lower posterior fully reveals the bad state, with the upper belief $h(\mu_1)$ given implicitly by
     \begin{equation}
     \label{eq:h_mu}
         \left(1-\frac{\mu_1}{h(\mu_1)}\right)V(0)+\frac{\mu_1}{h(\mu_1)}V(h(\mu_1))=\bar{V}_0(\mu_1).
     \end{equation}
\end{itemize}
\end{proposition}

The proposition describes how the doctor relaxes treatment discipline as the testing constraint becomes harder to satisfy.
For beliefs in $\mathcal{F}$, warning is enough: the doctor can maximize treatment incentives without deterring the test.
For beliefs in $\mathcal{D}\setminus\mathcal{F}$, the testing constraint binds, but the doctor can still preserve a strong warning component by keeping the favorable posterior at $1$ and adjusting the unfavorable posterior.
For beliefs in $\mathcal{M}\setminus\mathcal{D}$, testing can be induced only by shifting the promised disclosure toward reassurance: the unfavorable posterior is fixed at $0$, while the favorable posterior becomes less conclusive.

\begin{corollary}[Comparative statics of interim disclosure]
\label{cor:comparative_interim_disclosure}
When $\mu_1\in\mathcal{D}\setminus\mathcal{F}$, $l(\mu_1)$ is decreasing in $\mu_1$.
When $\mu_1\in\mathcal{M}\setminus\mathcal{D}$, $h(\mu_1)$ is decreasing in $\mu_1$.
\end{corollary}

A higher $\mu_1$ makes the patient more optimistic about the untreated prospect and thus harder to motivate.
The optimal signal adjusts in two steps: on $\mathcal{D}\setminus\mathcal{F}$, the doctor keeps the favorable posterior at $1$ and lowers $l(\mu_1)$ until full disclosure is reached; on $\mathcal{M}\setminus\mathcal{D}$, she fixes the unfavorable posterior at $0$ and lowers $h(\mu_1)$, making the favorable message less conclusive but more likely.
In Figure~\ref{fig-concavification}, this is a movement from the warning-oriented construction toward the reassurance-oriented one.

The interim characterization defines the continuation value $\bar{P}(\mu_1)$, the doctor's payoff when the interim disclosure is chosen optimally at belief $\mu_1$:
\[
     \bar{P}(\mu_1)\equiv
     \begin{cases}
     \displaystyle\int_0^1 P(\mu)\tau_1^{1*}(d\mu\mid\mu_1),
     & \mu_1\in\mathcal{M},\\[0.8em]
     \underline{p}(\mu_1),
     & \mu_1\notin\mathcal{M},
     \end{cases}
\]
where $\tau_1^{1*}(\cdot\mid\mu_1)$ is the optimal post-test disclosure from Proposition~\ref{prop:optimal_interim_disclosure}.
The ex ante problem is then a concavification of $\bar{P}$, depicted in Figure~\ref{fig-exante}.
The ex ante disclosure resolves a tradeoff between two probabilities the doctor cares about. 
The first is the probability that the patient takes the test: by spreading the interim belief into the motivation-available region $\mathcal{M}$, the doctor enlarges the mass of patients who agree to be tested. 
The second is the probability that, conditional on a sick result, the patient accepts treatment, which is captured by $\bar{P}(\mu_1)/\bar{p}$ and falls as the doctor moves from interim beliefs in $\mathcal{F}$ (where warning is free) toward beliefs in $\mathcal{M}\setminus\mathcal{D}$ (where reassurance is required and treatment incentives are partially relaxed). 
The optimal ex ante disclosure trades off these two margins by choosing the splitting of $\mu_0$ that maximizes the expected continuation value $\bar{P}(\mu_1)$.
For a fear-reactive patient, the motivation-available region starts from the worst untreated prospect, so the doctor can raise the testing probability all the way to one without giving up any treatment incentives, and ex ante disclosure implements warning-in-advance.
For a fear-avoidant patient, warning-oriented persuasion makes the test unbearable, while leaving him too optimistic destroys treatment incentives. 
The ex ante disclosure must therefore use the information greenhouse policy to keep the testing group in this intermediate region, accepting a lower probability of treatment in order to keep the probability of testing positive.

\begin{figure}[ht!]
   \centering
    \begin{subfigure}[t]{0.47\textwidth}
   \begin{tikzpicture}
   \draw[thick,->](0,0)--(4.5,0) node [right] {$\mu$};
   \draw[thick,->](0,0)--(0,4.5) node [left] {$\bar{P}(\mu)$};
   \draw[thick, -, dashed] (0.5,4)--(4,1.5);
   \draw[thick,-] (0,4) -- (0.5,4);
   \draw[thick] (0.5,4) -- (1,3.5);
   \draw[domain=1:3,thick] plot(\x,{-(1/8)*\x^2-3/4*\x+35/8});
   \draw[thick,-] (3,1) -- (4,1.5);
   \fill[gray, opacity=.5]
      (0,4) -- (0.5,4) -- (4,1.5) -- (3,1)  -- cycle;
   \draw[thick, -, dashed](0,4)--(4,4)
    node [left] at (0,4) {$\bar{p}$}
    node [left] at (0,0) {$0$}
    node [left] at (0,1.5) {$\underline{p}_H$}
    node [below] at (4,0) {$1$};
   \draw[thick, dashed, -](0.5,4)--(0.5,0) node [below] {$\mu_\mathcal{F}$};
   \draw[thick, dashed, -](1,4)--(1,0) node [below] {$\mu_\mathcal{D}$};
   \draw[thick, dashed, -](3,4)--(3,0) node [below] {$\mu_\mathcal{M}$};
   \draw[thick, dashed, -](0,1.5)--(4,1.5);
   \draw[thick, dashed, -](4,0)--(4,4);
  \end{tikzpicture}
  \caption{Fear-reactive patient.}
  \end{subfigure}
  \hfill
    \begin{subfigure}[t]{0.47\textwidth}
   \begin{tikzpicture}
   \draw[thick,->](0,0)--(4.5,0) node [right] {$\mu$};
   \draw[thick,->](0,0)--(0,4.5) node [left] {$\bar{P}(\mu)$};
   \draw[thick, -] (0,1) -- (1,1.5);
   \draw[thick, -] (3,2.5) -- (4,3);
   \draw[thick, -, dashed] (1,1.5) --(3,2.5);
   \draw[thick, -, dashed] (1.48339,3.91238) -- (4,3);
   \draw[thick, -, dashed] (0,1) -- (1,4);
   \draw[domain=1:3,thick] plot(\x,{-3/8*\x^2+3/4*\x+29/8});
   \fill[gray, opacity=.5, domain=1:1.48339, variable=\x]
    (0,1) -- (1,4) -- plot(\x,{-3/8*\x^2+3/4*\x+29/8}) -- (4,3) -- cycle;
   \draw[thick, -, dashed](0,4)--(4,4)
    node [left] at (0,4) {$\bar{p}$}
    node [left] at (0,0) {$0$}
    node [left] at (0,1) {$\underline{p}_L$}
    node [left] at (0,3) {$\underline{p}_H$}
    node [below] at (4,0) {$1$};
   \draw[thick, dashed, -](1,4)--(1,0) node [below] {$\mu^L_\mathcal{M}$};
   \draw[thick, dashed, -](3,4)--(3,0) node [below] {$\mu^H_\mathcal{M}$};
   \draw[thick, dashed, -] (1.48339,4) -- (1.48339,0) node [below] {$\hat{\mu}$};
   \draw[thick, dashed, -](0,1)--(4,1);
   \draw[thick, dashed, -](0,3)--(4,3);
   \draw[thick, dashed, -](4,0)--(4,4);
  \end{tikzpicture}
  \caption{Fear-avoidant patient.}
  \end{subfigure}
  \caption{Optimal ex ante disclosure under voluntary testing.}
  \label{fig-exante}
\end{figure}

\subsection{The Role of Dynamic Commitment}
\label{subsection.dynamic_commitment}

Compare the two policies, warning-in-advance front-loads all informative content, so once the test begins there is nothing left for the doctor to revise. 
The information greenhouse policy, by contrast, requires the doctor to deliver a reassurance-oriented post-test signal that she would, ex post, prefer to overwrite with a more warning-oriented one. 
Without commitment, the doctor would re-optimize after a bad result and choose the disclosure that best supports treatment, and the patient, anticipating this, would refuse to test in the first place.

\begin{corollary}[Dynamic commitment]
\label{cor:dynamic_commitment}
Without dynamic commitment, warning-in-advance remains implementable when the patient is fear-reactive, but the information greenhouse cannot be sustained when the patient is fear-avoidant.
\end{corollary}

Thus, warning-in-advance is easier to implement: it can be delivered before the test and does not require the doctor to maintain a promise that conflicts with her post-diagnosis objective.
The information greenhouse is more demanding, since it requires an institutional or procedural commitment to a post-test communication environment that was chosen for its ex ante effect on testing, not only for its ex post effect on treatment.

\subsection{Benchmark: Why Psychological Information Costs Matter}
\label{subsection.physical_cost_benchmark}

The information greenhouse is not a generic response to costly testing.
To separate psychological information costs from ordinary testing costs, Appendix~\ref{appendix.physical_cost_benchmark} reformulates the model with a linear $\phi(v)=v$ and a physical testing cost $\psi>0$.
The benchmark preserves warning-in-advance: when the patient can be induced to test, the doctor still wants to make non-treatment appear sufficiently dangerous to support treatment.
What disappears is the rationale for reassurance: when the cost of testing is physical rather than anticipatory, post-test disclosure does not need to make bad news emotionally bearable.
The information greenhouse is therefore specific to environments in which information itself is psychologically costly.\section{Voluntary Consultation and Ex Ante Participation}
\label{section.exante}

 The baseline of the previous section assumes that the patient is at least willing to enter the consultation and listen to the doctor's initial message. This section relaxes that assumption and considers a stronger form of avoidance: the patient may refuse the consultation altogether before any disclosure has taken place. The problem is then no longer only one of voluntary testing, but one of \textit{voluntary consultation}, i.e., the patient must be willing to listen in the first place.

 In this case, the ex ante disclosure will face an additional \textit{ex ante participation constraint}.
 That is, before the ex ante disclosure, the patient now has an additional choice to refuse the medical consultation and receive the payoff $V_0(\mu_0)=\phi(\alpha+(1-\alpha)\underline{p}(\mu_0))$.
 Thus, given triple $\langle\tau_0,\tau_1^1,\tau_1^0\rangle$, let
 \begin{align*}
     \overline{\mathcal{M}}\equiv\left\{\mu_1\in\text{Supp}(\tau_0)\bigg|\,\int_0^{1}V(\mu)\tau_1^1(d\mu|\mu_1)\geq\int_0^1V_0(\mu)\tau_1^0(d\mu|\mu_1)\right\}
 \end{align*}
 be the set of interim beliefs at which the patient is willing to be tested. 
 Then if policy $\langle\tau_0,\tau_1^1,\tau_1^0\rangle$ is sufficient to guarantee the ex ante participation constraint, it must satisfy
 \begin{equation}
     \label{eq.exantePC}
     \int_{\overline{\mathcal{M}}}\left(\int_0^1V(\nu)\tau_1^1(d\nu|\mu)\right)\tau_0(d\mu)+\int_{\overline{\mathcal{M}}^C}\left(\int_0^1V_0(\nu)\tau_1^0(d\nu|\mu)\right)\tau_0(d\mu)\geq V_0(\mu_0).
 \end{equation}

 The main technical difficulty here is that backward induction no longer holds.
 This is because, once listening itself is voluntary, the ex ante disclosure cannot be designed solely with the post-test treatment incentive in mind: it must simultaneously balance the patient's willingness to be tested against his willingness to be treated.
 Hence, the two instances of disclosure must both satisfy the ex ante participation constraint.
 
 To solve the problem, I introduce the following notations. 
 First, define function
 \begin{align*}
     \mathcal{V}(\mu)\equiv\max\left\{V_0(\mu),\alpha+(1-\alpha)\phi(\bar{p}-c)\right\}
 \end{align*}
 as the patient's payoff function when the doctor commits to disclose no information in the interim disclosure. 
 That is, the patient either takes the test and accepts the treatment, or refuses to see the doctor in the very beginning.
 Second, belief $\mu_\mathcal{N}$ is the solution to equation $\phi(\alpha+(1-\alpha)\underline{p}(\mu_\mathcal{N}))=\alpha+(1-\alpha)\phi(\bar{p}-c)$; that is, given there is no information in the interim disclosure, $\mu_\mathcal{N}$ is the threshold that taking and not taking the test are indifferent.
 Third, belief $\mu_\mathcal{T}$ is the solution to 
 \begin{align*}
     \mu_\mathcal{T}\mathcal{V}(1)+(1-\mu_\mathcal{T})\mathcal{V}(0)=\mathcal{V}(\mu_\mathcal{T}),
 \end{align*}
 and therefore full disclosure in the ex ante disclosure is sufficient to motivate voluntary reception of information if and only if $\mu_0\geq\mu_\mathcal{T}$.
 Finally, belief $\mu_\mathcal{V}$ is identified by
 \begin{align*}
     \mu_\mathcal{V}=\sup\left\{\mu\bigg|\,\text{cav}\circ\mathcal{V}(\mu)>\mathcal{V}(\mu)\right\},
 \end{align*}
 and hence by concavification, if there is no interim disclosure, the patient can benefit from persuasion if and only if $\mu_0<\mu_\mathcal{V}$.
 
 \begin{lemma}
 \label{lemma.SRP}
   Without loss of generality, the optimal ex ante signal is some $\{\mu_H,\mu_L\}$ ($\mu_H>\mu_L$ and $\mu_\mathcal{V}>\mu_L$), where the patient accepts the test if and only if the interim belief is $\mu_L$.
 \end{lemma}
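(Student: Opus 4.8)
The plan is to extend the reduction to simple recommendation policies from the unconstrained setting of \cite{makris2023information} to the problem carrying the ex ante participation constraint (\ref{eq.exantePC}). Take an arbitrary feasible policy $\langle\tau_0,\tau_1^1,\tau_1^0\rangle$ and let $\overline{\mathcal{M}}$ denote its set of test-accepting interim beliefs. As established above, on the no-test event punitive revelation is optimal, so conditional on a no-test interim belief $\mu_1$ the patient's continuation payoff is $\bar{V}(\mu_1)$ and the doctor's is $\alpha+(1-\alpha)\underline{p}(\mu_1)$, both affine in $\mu_1$; and by Lemma \ref{lemma.binary}, on the test event the interim signal may be taken binary with $\mu_e$ separating the posteriors. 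The goal is to replace $\tau_0$ by a two-point distribution with a test posterior $\mu_L$ and a no-test posterior $\mu_H$, without lowering the doctor's payoff or tightening (\ref{eq.exantePC}).

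For the no-test block, affineness of both payoffs implies that pooling all no-test posteriors at their barycenter $\mu_H$ leaves the doctor's objective and the left-hand side of (\ref{eq.exantePC}) unchanged, so it only remains to check that $\mu_H$ is still a belief at which the patient declines the test. First I would show that an optimal policy never places a no-test posterior inside $\mathcal{M}$: at any $\mu_1\in\mathcal{M}$, switching to ``accept the test and run the interim signal prescribed by Proposition \ref{proposition-tradeoff}'' gives the patient at least $\bar{V}(\mu_1)$ -- exactly the no-test payoff, so (\ref{eq.exantePC}) is undisturbed -- while strictly raising the doctor's payoff, since that signal secures treatment with positive probability and hence lifts the expected health probability in the sick state above $\underline{p}(\mu_1)$. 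Next I would argue that, within $\mathcal{M}^{c}$, an optimal policy only uses beliefs in the connected component containing $1$, ruling out pessimistic no-test beliefs by a direct argument that shifts such mass toward $1$; then the barycenter $\mu_H$ stays in that component and in particular $\mu_H>\sup\mathcal{M}\geq\mu_L$.

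For the test block, I would apply \cite{makris2023information} to the continuation game started from the test posterior, which carries no further participation constraint, collapsing the interim disclosure to a recommendation of $e$ and pooling all test-accepting posteriors at a single $\mu_L$. The key point is that this pooling preserves the joint law of $(\theta,a,e)$, hence the doctor's payoff, and \emph{weakly raises} the patient's payoff: conditional on $\theta=1$ or on $e=1$ the anticipatory term is evaluated at a belief-independent value ($1$ and $\phi(\bar{p}-c)$ respectively), whereas conditional on $e=0$ it equals $\phi(\underline{p}(\cdot))$, which is concave, so replacing the no-treatment posteriors by their mean only helps. Therefore $\mu_L$ is still motivation-available, i.e.\ $\mu_\mathcal{V}>\mu_L$, and (\ref{eq.exantePC}) continues to hold, which completes the reduction.

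The step I expect to be the main obstacle is reconciling the recommendation-policy reduction of \cite{makris2023information}, which concerns distributions over states and action profiles, with the patient's payoff, which depends on the ex post belief through the nonlinear map $\phi$ rather than on $(\theta,a,e)$ alone. The argument above handles this by exploiting that $\phi$ enters only through the date-2 indirect utility, which is belief-independent on the treatment region and concave in the belief on the no-treatment region, so the canonical policy is weakly Pareto improving; making this airtight requires verifying Bayesian plausibility of the pooled posteriors and that the induced treatment recommendation remains incentive compatible, both of which hold because an average of posteriors on one side of $\mu_e$ stays on that side. A secondary difficulty is the no-test pooling when $\mathcal{M}^{c}$ is disconnected, which is precisely why the improvement argument ruling out pessimistic no-test beliefs is required.
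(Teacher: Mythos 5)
Your overall architecture (pool the test-accepting posteriors into one belief $\mu_L$, pool the refusing posteriors into one belief $\mu_H$, and check that the doctor loses nothing while the ex ante constraint is not tightened) is the same as the paper's, which does this directly: take $\mu_L$ and $\mu_H$ as the barycenters of the two blocks, keep as interim signal at $\mu_L$ the probability-weighted mixture of the original interim signals, and conclude by the piecewise linearity of $P$ (doctor's payoff unchanged) and the concavity of $V_0$ (patient's ex ante payoff weakly higher on the no-test block), so (\ref{eq.exantePC}) still holds. Your treatment of the test block (collapse to a recommendation of $e$, then use that the anticipatory term is belief-independent on the treatment region and concave on the no-treatment region) is a legitimate, if more roundabout, substitute for the paper's ``mix the interim signals'' step.

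The genuine gap is your premise that ``on the no-test event punitive revelation is optimal,'' so that the refusing patient's continuation payoff is $\bar{V}(\mu_1)$. That observation was derived for the interim problem \emph{without} the ex ante constraint, where the $a=0$ branch is off path and only serves as a threat. Once (\ref{eq.exantePC}) is imposed and some interim beliefs refuse the test \emph{on path}, punitive revelation on that branch gains the doctor nothing but strictly lowers the left-hand side of (\ref{eq.exantePC}); the correct on-path continuation value is $\int V_0\,d\tau_1^0\leq V_0(\mu_1)$, and indeed the paper's own proof (and Steps 3--6 of Proposition \ref{proposition-PC}) uses $V_0(\mu^i)$, not $\bar{V}(\mu^i)$, on the no-test block. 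This matters twice in your argument: (i) your claim that no optimal policy places a refusing posterior inside $\mathcal{M}$ is justified by ``switching to the test gives at least $\bar{V}(\mu_1)$ --- exactly the no-test payoff, so (\ref{eq.exantePC}) is undisturbed,'' which fails when the no-test payoff is $V_0(\mu_1)>\bar{V}(\mu_1)$: the switch can then violate the ex ante constraint, so this improvement step does not go through; (ii) recasting the no-test continuation as punitive revelation in your canonical policy itself weakly lowers the patient's ex ante payoff relative to the original feasible policy, so feasibility is not preserved. The fix is simply to drop this premise: on the no-test block the doctor's payoff is linear in the belief and the patient's is at most $V_0$, so pooling at the barycenter (and, without loss, disclosing nothing there) leaves the doctor indifferent and weakly relaxes (\ref{eq.exantePC}); with that change your auxiliary steps (excluding $\mathcal{M}$-beliefs from the refusal block and the connected-component argument) become unnecessary, and the remainder of your proof coincides in substance with the paper's. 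A minor further slip: motivation availability of $\mu_L$ (membership in $\mathcal{M}$) is not the same object as the condition $\mu_\mathcal{V}>\mu_L$ stated in the lemma, which is defined through $\mathrm{cav}\circ\mathcal{V}$ in the ex ante problem.
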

 \begin{proof}
     See Appendix \ref{proof.SRP}.
 \end{proof}
 
 Lemma \ref{lemma.SRP} shows that we can adopt the theoretical convenience to focus only on \textit{simple recommendation policies} without loss of generality, which has been proved by \cite{makris2023information} in the setting without psychological preference and participation constraints. 
 Thus, in every instance of information disclosure, the doctor's information structure can be reduced to a recommendation of the action taken in the current choice problem.
 That is, the doctor only recommends to the patient whether to be tested in the ex ante disclosure, and the recommendation of the treatment decision is left to the interim disclosure.
 
 \begin{proposition}
 \label{proposition-PC}
  Suppose the ex ante participation constraint (\ref{eq.exantePC}) is present.
  Then if (\ref{eq:fear_reactive}) holds, it is still optimal for the doctor to disclose no information in the interim disclosure.
  In the ex ante disclosure, the doctor benefits from persuasion if and only if $\mu_0<\mu_\mathcal{V}$. 
  The optimal ex ante disclosure is given as follows.
  \begin{itemize}
      \item If $\mu_0<\mu_\mathcal{N}$, non-disclosure is optimal.
      \item If $\mu_0\in(\mu_\mathcal{N},\mu_\mathcal{T})$, the optimal ex ante signal is warning-oriented: a binary signal whose upper posterior fully reveals the good state, with the lower belief equal to
      \begin{equation}
        \label{eq.thelowerbelief}
          1-(1-\mu_0)\cdot\frac{\mathcal{V}(1)-\mathcal{V}(0)}{\mathcal{V}(1)-\mathcal{V}(\mu_0)}.
      \end{equation}
      \item If $\mu_0\in(\mu_\mathcal{T},\mu_\mathcal{V})$, the optimal ex ante signal is reassurance-oriented: a binary signal whose lower posterior fully reveals the bad state, with the upper belief, denoted by $\bar{h}(\mu_0)$, given implicitly by
      \begin{align*}
          \left(1-\frac{\mu_0}{\bar{h}(\mu_0)}\right)\cdot\mathcal{V}(0)+\frac{\mu_0}{\bar{h}(\mu_0)}\cdot\mathcal{V}\left(\bar{h}(\mu_0)\right)=\mathcal{V}(\mu_0).
      \end{align*}
  \end{itemize}
  When inequality (\ref{eq:fear_reactive}) does not hold, the patient rejects the doctor for all information policies.
 \end{proposition}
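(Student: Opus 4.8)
Note first that hypothesis (\ref{eq-testifq}) is precisely the condition under which $\mu_\mathcal{N}$ is well defined in $(0,\mu_e)$; under it one has $V(\mu)\ge V_0(\mu)$ for $\mu\le\mu_\mathcal{N}$ and $V(\mu)\le V_0(\mu)$ for $\mu\ge\mu_\mathcal{N}$ (the latter by concavity of $\phi$ when $\mu>\mu_e$ and by monotonicity of $V_0$ on $[\mu_\mathcal{N},\mu_e]$), and $\mathcal{V}$ is flat on $[0,\mu_\mathcal{N}]$ and coincides with the concave function $V_0$ on $[\mu_\mathcal{N},1]$. The plan is to argue in two steps: first establish that committing to no interim disclosure is without loss, and then solve the residual ex ante problem, which I expect to reduce to the same constrained concavification as Proposition \ref{proposition-tradeoff}.

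For the first step I would fix any feasible policy and, invoking Lemma \ref{lemma.SRP}, write its ex ante signal as $\{\mu_H,\mu_L\}$ with the patient tested only at $\mu_L$; by Lemma \ref{lemma.binary} the interim signal $\sigma$ used there is binary, with posteriors $\{y,x\}$, $y\le\mu_e\le x$. I would first show that feasibility forces $y\le\mu_\mathcal{N}$: if $y>\mu_\mathcal{N}$ then on the $y$-branch the tested patient treats but gets $V(y)<V_0(y)$, and on the $x$-branch he gets $V(x)\le V_0(x)$, so $\int V\,d\sigma<\int V_0\,d\sigma\le V_0(\mu_L)$ by Jensen's inequality (the last step since $\sigma$ has mean $\mu_L$ and $V_0$ is concave), which makes the ex ante constraint $\beta\int V\,d\sigma+(1-\beta)V_0(\mu_H)\ge V_0(\mu_0)$ fail. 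Given $y\le\mu_\mathcal{N}$ I would then \emph{front-load} $\sigma$: redirect the mass that the ex ante signal places on $\mu_L$ so that it splits directly onto $\{y,x\}$ (the conditional mean stays $\mu_L$) and make the interim disclosure uninformative. Because $y\le\mu_\mathcal{N}<\mu_e\le x$, at $y$ the patient still tests and treats — the doctor still collects $\bar{p}$ there — and at $x$ he declines, so the doctor collects $\underline{p}(x)$, exactly what she collected on the non-treatment ($x$) branch before; the doctor's value is unchanged. Meanwhile the patient's ex ante payoff weakly rises, since $\mathcal{V}(y)=V(y)$ while $\mathcal{V}(x)=V_0(x)\ge V(x)$, so the ex ante participation constraint is preserved. (Off the test path the interim signal is irrelevant to the doctor and full disclosure there deters deviation.) This proves the first sentence of the proposition.

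For the second step, with interim non-disclosure the patient tests and treats whenever $\mu_1\le\mu_\mathcal{N}$ and declines otherwise, so the doctor maximizes $\int\tilde P\,d\tau_0$, where $\tilde P(\mu)=\bar{p}$ for $\mu\le\mu_\mathcal{N}$ and $\tilde P(\mu)=\underline{p}(\mu)$ for $\mu>\mu_\mathcal{N}$, subject to $\int\mathcal{V}\,d\tau_0\ge V_0(\mu_0)$. For $\mu_0\le\mu_\mathcal{N}$ the degenerate signal already attains the first best $\bar{p}$ and meets the constraint, so non-disclosure is optimal; for $\mu_0\ge\mu_\mathcal{V}$ one has $V_0(\mu_0)=\mathcal{V}(\mu_0)=\mathrm{cav}\,\mathcal{V}(\mu_0)$, so the constraint forces $\int\mathcal{V}\,d\tau_0=\mathcal{V}(\mu_0)$, which no non-degenerate $\tau_0$ can achieve, leaving the health probability at $\underline{p}(\mu_0)$ — this yields the ``$\mu_0<\mu_\mathcal{V}$'' clause. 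For $\mu_0\in(\mu_\mathcal{N},\mu_\mathcal{V})$ I would observe that the program is exactly (\ref{optimization}) with $(P,V,\mu_e)$ replaced by $(\tilde P,\mathcal{V},\mu_\mathcal{N})$ and the participation floor $\bar V(\mu_1)$ replaced by an affine function through $(\mu_0,\mathcal{V}(\mu_0))$; Lemma \ref{lemma.binary} then gives an optimal $\tau_0$ with one posterior on each side of $\mu_\mathcal{N}$, and the best-good-news rule (Appendix \ref{appendix.BGN}) pushes the upper posterior as high as the constraint allows. If full disclosure in the ex ante stage already satisfies the constraint — which, since $\mathcal{V}(\mu_L)=\mathcal{V}(0)$ for $\mu_L\le\mu_\mathcal{N}$, is equivalent to $(1-\mu_0)\mathcal{V}(0)+\mu_0\mathcal{V}(1)\ge\mathcal{V}(\mu_0)$, i.e. to $\mu_0\le\mu_\mathcal{T}$ — the upper posterior is $1$ and the lower posterior is the perfect-good-news belief pinned by the binding constraint $\beta\mathcal{V}(0)+(1-\beta)\mathcal{V}(1)=\mathcal{V}(\mu_0)$, which rearranges to (\ref{eq.thelowerbelief}); if instead $\mu_0\in(\mu_\mathcal{T},\mu_\mathcal{V})$ no feasible signal reaches upper posterior $1$, so the lower posterior drops to $0$ and the upper posterior $\bar h(\mu_0)$ is the largest root of the displayed binding equation, optimality of this corner following from monotonicity of the $\tilde P$-payoff in both posteriors together with single-peakedness of the chord value of $\mathcal{V}$ in the upper posterior.

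The hard part will be the first step, and within it the front-loading construction: one must verify in a single stroke that it preserves the doctor's value, that it weakly relaxes (never tightens) the ex ante participation constraint, and that it creates no profitable deviation at the newly split posteriors — and the sub-case $y>\mu_\mathcal{N}$ must be excluded separately via the infeasibility argument, since there front-loading would strictly lower the doctor's value. A lesser complication, relative to Proposition \ref{proposition-tradeoff}, is that the participation floor here is the scalar $\mathcal{V}(\mu_0)$ rather than the full-disclosure curve, so the best-good-news rule has to be re-run for this instance and the switch point between the perfect-good-news and perfect-bad-news regimes shown to be exactly $\mu_\mathcal{T}$ (which in turn requires noting that $\mu_\mathcal{T}\in(\mu_\mathcal{N},\mu_\mathcal{V})$, so that both regimes are nonempty).
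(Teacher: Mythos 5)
Your proposal is essentially correct in its conclusions and formulas, but it reaches them by a genuinely different route than the paper. The paper never proves a general ``no interim disclosure is without loss'' reduction up front: it first restricts the shape of any candidate optimum (its Step 3 shows the interim constraints at $\mu_L$ are slack, so the interim signal is at-most-binary and of perfect-good- or perfect-bad-news form; Step 4 verifies the best-good-news rule for the no-interim ex ante program), and then in Steps 5--6 it rules out every dominating policy with an informative interim signal by case-by-case contradiction, including a fairly heavy computation for the perfect-good-news interim case. Your front-loading argument replaces Steps 5--6 wholesale: you exploit that the test has no direct health effect and that $\theta\perp\underline p$, so shifting the interim split $\{y,x\}$ into the ex ante stage leaves the doctor's value unchanged, while concavity of $\phi$ ($\mathcal{V}(x)=V_0(x)\ge V(x)$, $\mathcal{V}(y)=V(y)$ for $y\le\mu_\mathcal{N}$) weakly raises the patient's ex ante payoff, so feasibility is preserved. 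This is cleaner and more conceptual than the paper's treatment; the residual one-stage program you write down (objective $\tilde P$, constraint $\int\mathcal{V}\,d\tau_0\ge V_0(\mu_0)$) is exactly the paper's Step 4 problem, and your regime boundaries and binding-constraint formulas match the proposition. (Your reading that full disclosure is feasible iff $\mu_0\le\mu_\mathcal{T}$ is the one consistent with the proposition and with the role $\mu_\mathcal{D}$ plays in Proposition \ref{proposition-tradeoff}; the paper's prose sentence defining $\mu_\mathcal{T}$ states the opposite inequality and appears to be a slip.)

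One step does need shoring up. Your exclusion of interim posteriors in $(\mu_\mathcal{N},\mu_e]$ -- the place where front-loading would strictly hurt the doctor -- is argued only for a \emph{binary} interim signal with a single low posterior $y$, and you license binarity by citing Lemma \ref{lemma.binary}. But that lemma was proved for program (\ref{optimization}), whose only constraint is the interim participation constraint; with the additional ex ante constraint (\ref{eq.exantePC}), which is another linear functional of the interim signal at $\mu_L$, the at-most-two-posteriors reduction is not automatic (the standard counting allows a third posterior), so the citation is not quite licensed. Two fixes are available: either redo the paper's Step 3 argument (show the interim constraints are slack at the optimum, so the interim signal solves a single-equality-constrained program to which the binary reduction applies), or patch your own argument directly by first pooling all interim posteriors in $[0,\mu_e]$ to their conditional mean $m$ (this changes neither party's payoff, since the patient's tested-and-treated payoff is constant there) and then showing, by the same Jensen chain you use for $y>\mu_\mathcal{N}$, that feasibility forces $m\le\mu_\mathcal{N}$ unless the sub-$\mu_e$ mass is zero; after that, your posterior-by-posterior front-loading goes through without any binarity assumption. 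With that repair, and with the promised re-verification of condition (\ref{eq.bestgoodnewscriterion}) for the pair $(\tilde P,\mathcal{V})$ together with the checks that $\mu_\mathcal{T}\in(\mu_\mathcal{N},\mu_\mathcal{V})$ and that the perfect-good-news lower belief (\ref{eq.thelowerbelief}) indeed lies below $\mu_\mathcal{N}$, your argument is a complete and arguably simpler proof.
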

 \begin{proof}
     See Appendix \ref{proof.PC}.
 \end{proof}
  
 Proposition \ref{proposition-PC} characterizes the optimal information policy with the ex ante participation constraint.
 The structure is parallel to that of the optimal interim disclosure in Proposition \ref{prop:optimal_interim_disclosure}: substituting $V$ by $\mathcal{V}$, $\bar{V}_0$ by $V_0$, and the triple $\langle\mu_\mathcal{F},\mu_\mathcal{D},\mu_v\rangle$ by $\langle\mu_\mathcal{N},\mu_\mathcal{T},\mu_\mathcal{V}\rangle$ recasts that result one stage earlier. The economic content, however, is different. Once listening itself is voluntary, the ex ante disclosure must \textit{itself} make the patient willing to be tested, since the doctor can no longer rely on a future reassurance to draw the patient back in.

  If the optimal ex ante signal is warning-oriented ($\mu_0\in(\mu_\mathcal{N},\mu_\mathcal{T})$), the policy is again warning-in-advance, only with a more informative ex ante signal than in the case without the participation constraint. 
  The reason is that the patient now compares listening with refusing the consultation outright, so the ex ante incentive constraint is harder to meet than its interim counterpart.
  If the optimal ex ante signal is reassurance-oriented ($\mu_0\in(\mu_\mathcal{T},\mu_\mathcal{V})$), I call the policy \textit{precautionary comfort}.
  That is, the doctor commits in advance to a reassuring information environment, providing a high level of anticipatory utility \textit{before} the test, so that the patient is willing to face the consultation in the first place.
  Precautionary comfort can be read as a ``modified information greenhouse'', such that the same reassurance-oriented logic that, in the baseline of Section~\ref{section.voluntary_testing}, sustained the patient \textit{after} the test is now relocated to the ex ante stage. 
  Once the ex ante participation constraint binds, the emotional reward must arrive earlier such that the doctor brings the comforting message forward, leaving the post-test environment uninformative. The greenhouse climate is preserved, but it is moved from the interim stage to the ex ante one.

 Indeed, this result can be regarded as the joint consequence of two key properties of the optimization problem.
 First, like the case that the ex ante constraint (\ref{eq.exantePC}) is absent and the patient is fear-reactive, the doctor forgoes the opportunity of the interim disclosure as the solution of the aforementioned tradeoff between the willingness to be tested and the willingness to be treated.
  Second, because the patient can refuse the consultation outright, the ex ante participation constraint is harder to meet than its interim counterpart.
 Thus, any information policy that satisfies the ex ante constraint must have provided the patient with a considerably large payoff if the patient chooses to be tested, which is sufficient to guarantee the interim constraint as well.

  When the patient is fear-avoidant, no information policy can attract him into the consultation. 
  Since $V_0$ is concave, refusing already delivers an upper bound that no ex ante signal can improve upon.

 \begin{corollary}
 \label{corollary.commitment}
  The optimal information policy identified by Proposition \ref{proposition-PC} can be implemented even when the doctor has no power of dynamic commitment.   
 \end{corollary}

 Finally, Corollary \ref{corollary.commitment} states that given the ex ante participation constraint, the optimal information policy with the power of dynamic commitment can also emerge when the commitment power is absent.
 This result stems from the fact that $\mu_\mathcal{N}\leq\min\{\mu_e,\mu_\mathcal{F}\}$, which follows from the concavity of function $\phi$.
 Thus, whenever the patient is willing to be tested at interim belief $\mu_1$, the optimal interim signal is $\{\mu_1\}$, which is also the doctor's sequentially rational choice; commitment is therefore not strictly required to sustain the policy.
\section{General Information Preferences}
\label{section.general}

 The baseline analysis assumes that $\phi$ is globally concave, which captures a global preference for late resolution of uncertainty. 
 However, empirical evidence has demonstrated that the information attitudes in the real world are more complicated than that.
 For example, people who avoid information often also exhibit a preference for positively skewed signals and prior dependence (see \cite{ganguly2016fantasy}, \cite{nielsen2020preferences}, \cite{gul2020thrill} and \cite{masatlioglu2023intrinsic} for recent evidence),\footnote{A positively (negatively) skewed signal is a signal that has a potential to confirm the good (bad) state but is not diagnostic about the bad (good) state. 
 Meanwhile, prior dependence refers to the information attitude that people prefer early (late) resolution of uncertainty if and only if the prior belief is optimistic (pessimistic).} and this cannot be captured by function $\phi$ when it is globally concave.
 
 This section extends the result to a general setting, where the form of $\phi$ can be arbitrary except that $\phi$ is continuous, and I show that the warning-versus-reassurance logic survives once more general information attitudes are allowed.
 Especially, I emphasize the case that $\phi$ is inversely S-shaped; that is, $\phi$ satisfies: (i) $\phi'(\underline{p}_L)>1$ and $\phi'(\underline{p}_H)>1$, and (ii) there exists a kink $\tilde{v}\in(\underline{p}_L,\underline{p}_H)$ such that $\phi$ is concave when $v<\tilde{v}$ and convex when $v>\tilde{v}$ (See Figure \ref{fig-sshaped}).
 This functional form deserves a peculiar highlight because it can predict the preference for positive skewness and prior-dependence.
 Figure \ref{fig-sshaped} provides two numerical examples when there are only two possible levels of date-2 payoff, $v_H$ and $v_L$. 
 
 \begin{itemize}
     \item Suppose the prior belief of $v_H$ is $\frac{1}{2}$, and compare signals $\{1,\frac{1}{3}\}$ (positively skewed) and $\{\frac{2}{3},0\}$ (negatively skewed), which are mutually symmetric and have an identical variance. As is illustrated in panel (a) of Figure \ref{fig-sshaped}, the payoff of $\{1,\frac{1}{3}\}$ ($\phi_1$) is obviously higher than the payoff of $\{\frac{2}{3},0\}$ ($\phi_2$).
     \item Compare full disclosure and non-disclosure when the prior belief is $\frac{1}{4}$ and $\frac{3}{4}$, respectively. As is illustrated in panel (b) of Figure \ref{fig-sshaped}, non-disclosure dominates full disclosure when the prior is $\frac{1}{4}$ ($\phi_{12}>\phi_{11}$), but full disclosure dominates non-disclosure when the prior is $\frac{3}{4}$ ($\phi_{22}<\phi_{21}$).\footnote{\cite{eliaz2006can} show that the linear model by \cite{caplin2001psychological} is incompatible to predict prior dependence in a general setting. However, it is possible with such an inverse S-shaped function when the prize space contains only two outcomes.
     Indeed, to show impossibility result in \cite{eliaz2006can}, we need the prize space to have at least three elements.}
 \end{itemize}

  \begin{figure}[ht!]
   \centering
    \begin{subfigure}[t]{0.47\textwidth}
    \begin{tikzpicture} 

    \draw[help lines,->] (0,0)--(5.5,0);
    \draw[help lines,->] (0,0)--(0,5.5);

    \draw (5.5,0) node [rectangle,below]{$v$};
    \draw (0,5.5) node [rectangle,left]{$\phi(v)$};
    \draw (2.5,0) node [rectangle,below]{$\frac{1}{2}$};
    \draw (0,0) node [rectangle,below] {$v_L$};
    \draw (5,0) node [rectangle,below] {$v_H$};
    \draw (0,5) node [rectangle,left] {$\phi(v_H)$};

    \draw[domain=0:5] plot(\x,{(5*(0.2*\x)^0.5)/((0.2*\x)^0.5+(1-0.2*\x)^0.5)});
    \draw[dashed] (5,0)--(5,5);
    \draw[dashed] (2.5,5)--(2.5,0);
    
    \draw[dashed] (2.5,2.8033)--(0,2.8033) node [left] {$\phi_1$};
    \draw[dashed] (2.5,2.1967)--(0,2.1967) node [left] {$\phi_2$};
    \draw[dashed] (5/3,5)--(5/3,0) node [below] {$\frac{1}{3}$};
    \draw[dashed] (0,5)--(5,5);
    \draw[dashed] (5/3,2.07107)--(5,5);
    \draw[dashed] (0,0)--(10/3,2.92893);
    \draw[dashed] (10/3,5)--(10/3,0) node [below] {$\frac{2}{3}$};
    \fill (0,0) circle (0.5mm);
    \fill (5/3,2.07107) circle (0.5mm);
    \fill (10/3,2.92893) circle (0.5mm);
    \fill (2.5,2.8033) circle (0.5mm);
    \fill (2.5,2.1967) circle (0.5mm);
    \end{tikzpicture} 
  \caption{Preference for skewness}
  \end{subfigure}
  \hfill
    \begin{subfigure}[t]{0.47\textwidth}   
    \begin{tikzpicture} 

    \draw[help lines,->] (0,0)--(5.5,0);
    \draw[help lines,->] (0,0)--(0,5.5);

    \draw (5.5,0) node [rectangle,below]{$v$};
    \draw (0,5.5) node [rectangle,left]{$\phi(v)$};
    \draw (2.5,0) node [rectangle,below]{$\frac{1}{2}$};
    \draw (0,0) node [rectangle,below] {$v_L$};
    \draw (5,0) node [rectangle,below] {$v_H$};
    \draw (0,5) node [rectangle,left] {$\phi(v_H)$};

    \draw[domain=0:5] plot(\x,{(5*(0.2*\x)^0.5)/((0.2*\x)^0.5+(1-0.2*\x)^0.5)});
    \draw[dashed] (5,0)--(5,5);
    \draw[dashed] (2.5,5)--(2.5,0);
    
    \draw[dashed] (1.25,1.25)--(0,1.25) node [left] {$\phi_{11}$};
    \draw[dashed] (1.25,1.83013)--(0,1.83013) node [left] {$\phi_{12}$};
    \draw[dashed] (3.75,3.75)--(0,3.75) node [left] {$\phi_{21}$};
    \draw[dashed] (3.75,3.16987)--(0,3.16987) node [left] {$\phi_{22}$};
    \draw[dashed] (1.25,5)--(1.25,0) node [below] {$\frac{1}{4}$};
    \draw[dashed] (0,5)--(5,5);
    \draw[dashed] (0,0)--(5,5);
    \draw[dashed] (3.75,5)--(3.75,0) node [below] {$\frac{3}{4}$};
    \fill (0,0) circle (0.5mm);
    \fill (1.25,1.83013) circle (0.5mm);
    \fill (1.25,1.25) circle (0.5mm);
    \fill (3.75,3.16987) circle (0.5mm);
    \fill (3.75,3.75) circle (0.5mm);
    \end{tikzpicture} 
  \caption{Prior-dependence}
  \end{subfigure}
  \caption{Inverse S-shaped utility function and anomalies.}
  \label{fig-sshaped}
 \end{figure}

 \subsection{Interim Disclosure}
 In the interim disclosure, full disclosure remains the relevant outside option for a refusing patient \textit{only when $\phi$ is globally concave}. 
 However, with an inverse S-shape, the most adverse signal can have interior support. 
 Specifically, if $\phi$ is inverse S-shaped, there exists $v_0\in(\tilde{v},\underline{p}_H]$ such that the optimal signal is $\{0,\underline{p}^{-1}(v_0)\}$ when $\underline{p}(\mu_1)<v_0$ and is $\{\mu_1\}$ otherwise.
 Notably, since information may be both instrumentally valuable and emotionally comfortable, in the interval where non-disclosure is optimal, concealing information not only shuts down the instrumental value, but also imposes emotional cost on the patient through anxiety or by failing to satisfy curiosity.

 With only a mild modification, the method to solve the optimal interim disclosure in Proposition \ref{prop:optimal_interim_disclosure} can also be applied for any general payoff function $\phi$ without the assumption of global concavity.
 For any interval $I\subset\mathbb{R}$, define the \textit{local convex hull} of the graph of $V$ on $I$ as set
 \begin{align*}
    co_I(V)=\bigg\{(\mu,v)\bigg|\,\mu_1,\mu_2\in I,\text{ and }\pi\in[0,1],\text{ such that }&\pi\mu_1+(1-\pi)\mu_2=\mu\\&\text{ and }\pi V(\mu_1)+(1-\pi)V(\mu_2)=v\bigg\}.
 \end{align*}
 Then the \textit{local concavification} of $V$ on $I$ is given by
 \begin{align*}
    \text{cav}_I(V)(\mu)\equiv\sup\{v\big|\,\mu\in I,\,(\mu,v)\in co_I(V)\}.
 \end{align*}
 Define
 \begin{align*}
    \hat{V}(\mu)=\left\{
     \begin{array}{cc}
        \text{cav}_{[0,\mu_e]}(V)(\mu)  & \mu\leq\mu_e \\
        \text{cav}_{[\mu_e,1]}(V)(\mu)  & \mu>\mu_e \\
     \end{array}
    \right..
 \end{align*}
 Then the following proposition provides the method to solve the optimal interim disclosure when $V$ is not concave on some subsets of $[0,1]$.

 \begin{proposition}
 \label{s-shaped}
    When the doctor is persuading a patient with payoff function $V$, she can get exactly the same health probability as when the payoff function is $\hat{V}$.
    Moreover, the optimal signal has a trinary support if and only if (i) the optimal interim disclosure is $\{0,\mu\}$ ($\mu>\mu_e$) when the payoff function is $\hat{V}$ and (ii) $\hat{V}(\mu)>V(\mu)$.
 \end{proposition}
 \begin{proof}
    See Appendix \ref{proof.sshaped}.
 \end{proof}
 The proposition states that if $V$ is not globally concave, the doctor can solve the persuasion problem \textit{as if} the patient's payoff function is replaced by the ``local concavification'' $\hat{V}$, which is piecewise concave, and thus Lemma \ref{lemma:binary_signals} holds.
  Then Proposition \ref{prop:optimal_interim_disclosure} can be applied, and an optimal interim signal $\tau^*$ with at most two posterior beliefs can be identified by the same two-posterior characterization as in the baseline.
 The new pattern emerges when there exists $\mu\in\text{Supp}(\tau^*)$ such that $\hat{V}(\mu)>V(\mu)$. 
 In this case, a signal inducing posterior belief $\mu$ with a positive probability is not directly implementable, and the optimal signal contains three posterior beliefs. 
 Figure \ref{fig-algorithm-general} provides such an example, where the shaded area is the local convex hull of $V$ when $\mu\geq\mu_e$ and the local concavification can be achieved simply by connecting points $A$ and $B$.
 Then when the interim belief is $\mu_1$, by Proposition \ref{prop:optimal_interim_disclosure}, the optimal signal is given by $\{0,h(\mu_1)\}$.
 However, at posterior belief $h(\mu_1)$ we have $\hat{V}(h(\mu_1))>V(h(\mu_1))$, and thus to achieve the payoff of the hypothetical persuasion problem, the doctor needs to further split $h(\mu_1)$ further to $\mu_2$ and $1$, which is guaranteed by the Carath\'eodory theorem. 

    \begin{figure}[ht!]
     \centering
     \begin{tikzpicture}
     \draw[thick,->](0,.5)--(9,.5) node [below] {$\mu$};
     \draw[thick,->](0,.5)--(0,6.5);
     \draw[dashed,thick] (8,6)--(0,6) node [left] {$V(1)$};
     \draw[dashed,thick] (8,6)--(8,.5) node [below] {$1$};
     \draw[thick] (1,1)--(0,1) node [left] {$V(0)$};
     \node[circle,fill=black,inner sep=0pt,minimum size=3.2pt] (a) at (5.5,4.8) {};
     \node[circle,fill=black,inner sep=0pt,minimum size=3.2pt] (a) at (3,3.6) {};
     \draw[thick, dashed] (3,3.6)--(8,6);
     \draw[thick,domain=1:3.5] plot(\x,{-(41/100)*\x^2+(147/50)*\x-153/100});
     \draw[thick,domain=3.5:8] plot(\x,{(779/8100)*\x^2-(2443/4050)*\x+9498/2025});
     \draw[thick, densely dotted] (0,1)--(5.5,4.8);
     \node[circle,fill=black,inner sep=0pt,minimum size=3.2pt] (a) at (4.5,226/55) {};
     \draw[thick, dashed] (8,226/55)--(0,226/55) node [left] {$\bar{V}(\mu_1)$};
     \draw[thick, dashed] (4.5,6)--(4.5,0.5);
     \draw[thick, dashed] (5.5,6)--(5.5,0.5);
     \draw[thick, dashed] (3,6)--(3,0.5);
     \draw[thick, dashed] (1,6)--(1,0.5) 
        node [above] at (1,0) {$\mu_e$}
        node [above] at (3,0) {$\mu_2$}
        node [above] at (2.8,3.6) {$A$}
        node [right] at (8,6) {$B$}
        node [above] at (5.5,-0.05) {$h(\mu_1)$}
        node [above] at (4.5,0) {$\mu_1$};
     \fill[gray, opacity=.5, variable=\x]  plot[domain=1:3] (\x,{-(41/100)*\x^2+(147/50)*\x-153/100}) -- (8,6) -- plot[domain=8:5.79228](\x,{(779/8100)*\x^2-(2443/4050)*\x+9498/2025})
     -- cycle;
    \end{tikzpicture}  
    \caption{\textup{Constructing the optimal signal when $V$ is not globally concave.}}
    \label{fig-algorithm-general}
 \end{figure}
 
 The existence of the third posterior belief echoes with the earlier finding in constrained Bayesian persuasion \citep{le2019persuasion,boleslavsky2020bayesian,doval2024constrained}. 
 Here, the additional piece of information has a straightforward interpretation when the intrinsic preference for information is present. 
 When $\phi$ is not globally concave, there exist some interim beliefs at which the patient is \textit{intrinsic information loving}, and belief $\mu$ must be such a belief if $V(\mu)<\hat{V}(\mu)$.
 Since it is costless for the doctor to disclose information when the posteriors are restricted to be larger than $\mu_e$, she can use this additional information as a ``psychological prize'', which can reduce anxiety or fulfill curiosity even though it is instrumentally useless, to reward the decision of accepting the test.

 On a more general level, \cite{doval2024constrained} identifies the conditions under which the number of posteriors in the optimal signal is equal to the number of states, including (i) the set of beliefs for each action to be optimal is connected, and (ii) the persuader's value function is piecewise concave.
 Since local concavification guarantees the second requirement, if the first requirement is also satisfied (as is in the current setting), the local concavification-based approach reduces the constrained Bayesian persuasion problem to an algebra problem with a fixed dimensionality. 

 \subsection{Ex Ante Disclosure}
 Recursing back to the ex ante disclosure, we can observe that although the optimal ex ante signal can be simply derived by backward induction, the exact solution still varies from the specific functional form of $\phi$.
 The following proposition provides the solution when $\phi$ is inverse S-shaped.

 \begin{proposition}
 \label{proposition-general}
  Suppose $\phi$ is inverse S-shaped.
  \begin{itemize}
   \item If the patient is fear-reactive, $\mathcal{M}=\mathcal{D}=[0,1]$, and therefore the optimal interim disclosure is either non-disclosure or warning-oriented for all $\mu_1\in[0,1]$. Also, the optimal ex ante disclosure is non-disclosure for all $\mu_0\in[0,1]$.
   \item If not, there exists $\mu'_\mathcal{M}\leq\mu'_\mathcal{D}<\mu'_\mathcal{F}$, such that $\mathcal{M}=[\mu'_\mathcal{M},1]$, $\mathcal{D}=[\mu'_\mathcal{D},1]$ and $\mathcal{F}=[\mu'_\mathcal{F},1]$. Also, there exists $\mu'_v\in\mathcal{M}\setminus\mathcal{D}$, such that the optimal ex ante signal is $\{0,\mu'_v\}$ if $\mu_0<\mu'_v$, and is non-disclosure otherwise.
  \end{itemize}
 \end{proposition}
 \begin{proof}
  See Appendix \ref{proof.general}.
 \end{proof}
 
 The concavification results are given in Figure \ref{fig-exante-general}.
 Compared to Proposition \ref{prop:voluntary_testing} that the patient is globally information-avoidant, the most essential difference here is that the motivation availability problem is solved when $\mu$ is large. 
 That is, when the patient is optimistic about the untreated prospect $\underline{p}$, he will nonetheless be willing to be tested because of the intrinsic preference for more information.
 In particular, the inverse S-shaped utility function makes warning-oriented signals become psychologically attractive to the patient, which are unattractive to the patient when $\phi$ is concave.
 
 \begin{figure}[ht!]
   \centering
    \begin{subfigure}[t]{0.47\textwidth}
   \begin{tikzpicture}
   \draw[thick,->](0,0)--(4.5,0) node [right] {$\mu$};
   \draw[thick,->](0,0)--(0,4.5) node [left] {$\bar{P}(\mu)$};

   \draw[thick,-] (0,4) -- (0.5,4);
   \draw[domain=0.5:3,thick] plot(\x,{-(4/25)*\x^2-1/25*\x+203/50});
   \draw[thick,-] (3,2.5) -- (4,1);
   
   \fill[gray, opacity=.5, domain=0.5:3, variable=\x]
    (0,4) -- (1,4) -- plot(\x,{-(4/25)*\x^2-1/25*\x+203/50}) -- (4,1) -- cycle;
   
   \draw[thick, -, dashed](0,4)--(4,4)
    node [left] at (0,4) {$\bar{p}$}
    node [left] at (0,0) {$0$}
    node [left] at (0,1) {$\underline{p}_H$}
    node [below] at (4,0) {$1$};
   
   \draw[thick, dashed, -](0.5,4)--(0.5,0) node [below] {$\mu_\mathcal{F}^L$};
   \draw[thick, dashed, -](3,4)--(3,0) node [below] {$\mu_\mathcal{F}^H$};

   \draw[thick, dashed, -](0,1)--(4,1);
   \draw[thick, dashed, -](4,0)--(4,4);
  \end{tikzpicture}
  \caption{(\ref{eq:fear_reactive}) holds}
  \end{subfigure}
  \hfill
    \begin{subfigure}[t]{0.47\textwidth}
   \begin{tikzpicture}
   \draw[thick,->](0,0)--(4.5,0) node [right] {$\mu$};
   \draw[thick,->](0,0)--(0,4.5) node [left] {$\bar{P}(\mu)$};
   \draw[thick, -] (0,0.5) -- (1,0.625);
   \draw[thick, -] (3,2.5) -- (4,1);
   \draw[thick, -, dashed] (0,1) -- (4,1);
   \draw[thick, -, dashed] (0,0.5) --(1,3.5);
   \draw[domain=1:2,thick] plot(\x,{-.4*(\x-1.5)^2+3.6});
   \draw[domain=2:3,thick] plot(\x,{-11/20*\x^2+7*\x/4+2.2});
   \fill[gray, opacity=.5, variable=\x]
    (0,0.5) -- (1,3.5) -- plot[domain=1:2](\x,{-.4*(\x-1.5)^2+3.6}) -- plot[domain=2:3](\x,{-11/20*\x^2+7*\x/4+2.2}) -- (4,1) -- cycle;
   
   \draw[thick, -, dashed](0,4)--(4,4)
    node [left] at (0,4) {$\bar{p}$}
    node [left] at (0,0) {$0$}
    node [left] at (0,.5) {$\underline{p}_L$}
    node [left] at (0,1) {$\underline{p}_H$}
    node [below] at (4,0) {$1$};
   
   \draw[thick, dashed, -](1,4)--(1,0) node [below] {$\mu'_\mathcal{M}$};
   \draw[thick, dashed, -](2,4)--(2,0) node [below] {$\mu'_\mathcal{D}$};
   \draw[thick, dashed, -](3,4)--(3,0) node [below] {$\mu'_\mathcal{F}$};
   \draw[thick, dashed, -](1,0.625)--(4,1);
   \draw[thick, dashed, -](0,3)--(4,3);
   \draw[thick, dashed, -](4,0)--(4,4);
  \end{tikzpicture}
  \caption{(\ref{eq:fear_reactive}) does not hold}
  \end{subfigure}
  \caption{Optimal ex ante disclosure when $\phi$ is inverse S-shaped.}
  \label{fig-exante-general}
 \end{figure}

 This difference has several implications.
 When the patient is fear-reactive, the signal that maximizes the willingness to be treated already supplies enough anticipatory utility to make the patient willing to be tested.
 Consequently, $\mathcal{F}$ may not be a connected interval any more.
 Also, when the patient is fear-reactive, $\mathcal{D}=[0,1]$, which implies that full disclosure is always sufficient to guarantee voluntary participation.
 Therefore, even if $\mu_1\notin\mathcal{F}$, the doctor only needs to use the interim signal as an instrumental carrot to make the test more desirable, and a reassurance-oriented signal is never optimal here.
 As a result, neither warning-in-advance nor reassurance-oriented disclosure is useful here, and instead, the information disclosure will only take place in the ex ante disclosure.

 When the patient is fear-avoidant,  $[\mu_\mathcal{M}^H,1]$ in Proposition \ref{prop:voluntary_testing}, an interval containing large beliefs that are not motivation available, disappears due to the convexity of $\phi$.
 Thus, the doctor benefits from the ex ante disclosure only if the patient is pessimistic about $\underline{p}$, a circumstance in which the patient feels fear but does not want to do anything against it.\footnote{Also, both reassurance-oriented disclosure and information used as an instrumental carrot --- i.e., an interim signal whose role is solely to raise the patient's expected payoff from being tested rather than to discipline treatment --- can emerge in the interim disclosure.}

\section{Conclusion}
\label{section.conclusion}

 This paper studies how a benevolent doctor should communicate with a patient who can refuse a costless but psychologically painful medical test. When information itself is a source of disutility, the doctor's communication policy must manage two margins at once: the patient's willingness to be tested, and his willingness to be treated after a bad result. The central tradeoff is between warning the patient---making the consequences of remaining untreated salient enough to support treatment---and reassuring him---softening the anticipated emotional cost of bad news so that he is willing to face the test in the first place.

 The results are summarized in Table \ref{table-1}.
 The optimal policy depends on two behavioral primitives: whether fear triggers action or avoidance, and whether the patient can refuse the consultation altogether.
 To make a better comparison, the corresponding results in the physical cost setting in Subsection~\ref{subsection.physical_cost_benchmark} are also listed in the last column.
 
    \begin{table}[ht!]
    \centering
    \resizebox{\textwidth}{!}{%
    \begin{tabular}{|c|c|c|c|}
    \hline
    \begin{tabular}[c]{@{}c@{}}Ex ante participation\\  constraint\end{tabular} & \uline{Fear triggers action}            & Optimal information policy                                                        & \begin{tabular}[c]{@{}c@{}}Counterpart in the \\ physical cost setting\end{tabular} \\ \hline
    $\times$                                                                    & \checkmark& warning-in-advance                                                                & warning-in-advance                                                                  \\ \hline
    $\times$                                                                    & $\times$                  & \begin{tabular}[c]{@{}c@{}}reassurance-oriented disclosure\\ or unable to persuade\end{tabular} & unable to persuade                                                                  \\ \hline
    \checkmark                                                  & \checkmark& \begin{tabular}[c]{@{}c@{}}warning-in-advance\\ or precautionary comfort\end{tabular}                                                              & warning-in-advance                                                                  \\ \hline
    \checkmark                                                  & $\times$                  & unable to persuade                                                                & unable to persuade                                                                  \\ \hline
    \end{tabular}
    }
    \caption{Summary of the Results}
            \label{table-1}
    \end{table}

 \noindent Compared with the physical cost benchmark, the central question of this paper is whether the doctor can usefully commit to a reassuring information environment after a bad test result---what I have called an \textit{information greenhouse}---in order to make the test acceptable in the first place. 
 The information greenhouse is not a commitment to hide the test result, but a commitment to shape how the untreated prospect is explained once the result is bad. 
 The answer is partially positive: it is useful when the patient is fear-avoidant, or when he can refuse the consultation altogether; otherwise warning-in-advance dominates.

 In practice, I recognize that the doctor or other senders of medical information face more constraints and practical considerations than those considered in this paper.
 These additional concerns may make the optimal policy solved by this paper non-optimal.
 For example, in many cases, the standard operating procedures or ethical guidelines may constrain the timing, framing, and granularity of disclosure.
 However, even so, the model provides a theoretical framework for evaluating the impact of any information disclosure policy on information-avoidant patients, and the optimal policy can be regarded as a reference point for evaluating the use of these policies.

\bibliographystyle{ecta}
\bibliography{reference}

@article{golman2017information,
  author =        {Golman, Russell and Hagmann, David and
                   Loewenstein, George},
  journal =       {Journal of Economic Literature},
  number =        {1},
  pages =         {96--135},
  title =         {Information avoidance},
  volume =        {55},
  year =          {2017},
}

@article{sweeny2010information,
  author =        {Sweeny, Kate and Melnyk, Darya and Miller, Wendi and
                   Shepperd, James},
  journal =       {Review of General Psychology},
  number =        {4},
  pages =         {340},
  publisher =     {Educational Publishing Foundation},
  title =         {Information avoidance: Who, what, when, and why},
  volume =        {14},
  year =          {2010},
}

@article{caplan1995patient,
  author =        {Caplan, Lee S},
  journal =       {Public Health Reviews},
  number =        {3},
  pages =         {263--274},
  title =         {Patient delay in seeking help for potential breast
                   cancer},
  volume =        {23},
  year =          {1995},
}

@article{meechan2002delay,
  author =        {Meechan, G and Collins, J and Petrie, K},
  journal =       {The New Zealand Medical Journal},
  number =        {1166},
  pages =         {U257},
  title =         {Delay in seeking medical care for self-detected
                   breast symptoms in New Zealand women.},
  volume =        {115},
  year =          {2002},
}

@article{melnyk2012avoiding,
  author =        {Melnyk, Darya and Shepperd, James A},
  journal =       {Annals of Behavioral Medicine},
  number =        {2},
  pages =         {216--224},
  publisher =     {Oxford University Press},
  title =         {Avoiding risk information about breast cancer},
  volume =        {44},
  year =          {2012},
}

@article{zanella2016experiencing,
  author =        {Zanella, Giulio and Banerjee, Ritesh},
  journal =       {Journal of Public Economics},
  number =        {C},
  pages =         {53--66},
  publisher =     {Elsevier},
  title =         {Experiencing breast cancer at the workplace},
  volume =        {134},
  year =          {2016},
}

@article{loewenstein1987anticipation,
  author =        {Loewenstein, George},
  journal =       {Economic Journal},
  number =        {387},
  pages =         {666--684},
  publisher =     {JSTOR},
  title =         {Anticipation and the valuation of delayed
                   consumption},
  volume =        {97},
  year =          {1987},
}

@article{caplin2001psychological,
  author =        {Caplin, Andrew and Leahy, John},
  journal =       {Quarterly Journal of economics},
  number =        {1},
  pages =         {55--79},
  publisher =     {JSTOR},
  title =         {Psychological expected utility theory and
                   anticipatory feelings},
  volume =        {116},
  year =          {2001},
}

@article{akerlof1982economic,
  author =        {Akerlof, George and Dickens, William T},
  journal =       {American Economic Review},
  number =        {3},
  pages =         {307--319},
  publisher =     {American Economic Association},
  title =         {The Economic Consequences of Cognitive Dissonance},
  volume =        {72},
  year =          {1982},
}

@article{wu1999anxiety,
  author =        {Wu, George},
  journal =       {Theory and Decision},
  number =        {2},
  pages =         {159--199},
  publisher =     {Springer},
  title =         {Anxiety and decision making with delayed resolution
                   of uncertainty},
  volume =        {46},
  year =          {1999},
}

@article{epstein2008living,
  author =        {Epstein, Larry},
  journal =       {Review of Economic Studies},
  number =        {4},
  pages =         {1121--1141},
  publisher =     {Oxford University Press},
  title =         {Living with risk},
  volume =        {75},
  year =          {2008},
}

@article{caplin2003aids,
  author =        {Caplin, Andrew and Eliaz, Kfir},
  journal =       {RAND Journal of Economics},
  number =        {4},
  pages =         {631--646},
  title =         {AIDS policy and psychology: a mechanism-design
                   approach},
  volume =        {34},
  year =          {2003},
}

@article{koszegi2003health,
  author =        {K{\H{o}}szegi, Botond},
  journal =       {Journal of Health Economics},
  number =        {22},
  pages =         {1073--1084},
  title =         {Health anxiety and patient behavior},
  volume =        {6},
  year =          {2003},
}

@article{koszegi2006ego,
  author =        {K{\H{o}}szegi, Botond},
  journal =       {Journal of the European Economic Association},
  number =        {4},
  pages =         {673--707},
  publisher =     {Oxford University Press},
  title =         {Ego utility, overconfidence, and task choice},
  volume =        {4},
  year =          {2006},
}

@article{caplin2004supply,
  author =        {Caplin, Andrew and Leahy, John},
  journal =       {Economic Journal},
  number =        {497},
  pages =         {487--505},
  publisher =     {Royal Economic Society},
  title =         {The supply of information by a concerned expert},
  volume =        {114},
  year =          {2004},
}

@article{koszegi2006emotional,
  author =        {K{\H{o}}szegi, Botond},
  journal =       {Quarterly Journal of Economics},
  number =        {1},
  pages =         {121--155},
  publisher =     {MIT Press},
  title =         {Emotional agency},
  volume =        {121},
  year =          {2006},
}

@article{schweizer2018optimal,
  author =        {Schweizer, Nikolaus and Szech, Nora},
  journal =       {Management Science},
  number =        {11},
  pages =         {5250--5262},
  publisher =     {INFORMS},
  title =         {Optimal revelation of life-changing information},
  volume =        {64},
  year =          {2018},
}

@article{lipnowski2018disclosure,
  author =        {Lipnowski, Elliot and Mathevet, Laurent},
  journal =       {American Economic Journal: Microeconomics},
  number =        {4},
  pages =         {67--93},
  title =         {Disclosure to a psychological audience},
  volume =        {10},
  year =          {2018},
}

@article{ely2020moving,
  author =        {Ely, Jeffrey C and Szydlowski, Martin},
  journal =       {Journal of Political Economy},
  number =        {2},
  pages =         {468--506},
  publisher =     {The University of Chicago Press Chicago, IL},
  title =         {Moving the goalposts},
  volume =        {128},
  year =          {2020},
}

@article{habibi2020motivation,
  author =        {Habibi, Amir},
  journal =       {Journal of Economic Behavior \& Organization},
  pages =         {1--18},
  publisher =     {Elsevier},
  title =         {Motivation and information design},
  volume =        {169},
  year =          {2020},
}

@article{chen2026accelerator,
  author =        {Chen, Zhuo and Liu, Yun},
  journal =       {arXiv preprint arXiv:2601.13686},
  title =         {Accelerator and Brake: Dynamic Persuasion with Dead
                   Ends},
  year =          {2026},
}

@article{rosar2017test,
  author =        {Rosar, Frank},
  journal =       {Games and Economic Behavior},
  number =        {C},
  pages =         {632--655},
  publisher =     {Elsevier},
  title =         {Test design under voluntary participation},
  volume =        {104},
  year =          {2017},
}

@article{boleslavsky2020bayesian,
  author =        {Boleslavsky, Raphael and Kim, Kyungmin},
  journal =       {Working paper},
  title =         {Bayesian persuasion and moral hazard},
  year =          {2020},
}

@article{andries2020information,
  author =        {Andries, Marianne and Haddad, Valentin},
  journal =       {Journal of Political Economy},
  number =        {5},
  pages =         {1901--1939},
  publisher =     {The University of Chicago Press Chicago, IL},
  title =         {Information aversion},
  volume =        {128},
  year =          {2020},
}

@article{chew1994hope,
  author =        {Chew, Soo and Ho, Joanna},
  journal =       {Journal of Risk and Uncertainty},
  number =        {3},
  pages =         {267--288},
  publisher =     {Springer},
  title =         {Hope: An empirical study of attitude toward the
                   timing of uncertainty resolution},
  volume =        {8},
  year =          {1994},
}

@article{ahlbrecht1996resolution,
  author =        {Ahlbrecht, Martin and Weber, Martin},
  journal =       {Journal of Institutional and Theoretical Economics},
  number =        {4},
  pages =         {593--607},
  publisher =     {Mohr Siebeck},
  title =         {The resolution of uncertainty: An experimental study},
  volume =        {152},
  year =          {1996},
}

@article{kocher2014let,
  author =        {Kocher, Martin G and Krawczyk, Michal and
                   van Winden, Frans},
  journal =       {Journal of Economic Behavior \& Organization},
  pages =         {29--40},
  publisher =     {Elsevier},
  title =         {‘Let me dream on!’Anticipatory emotions and
                   preference for timing in lotteries},
  volume =        {98},
  year =          {2014},
}

@article{chen2022preference,
  author =        {Chen, Zhuo},
  journal =       {Economics Letters},
  number =        {C},
  pages =         {110920},
  title =         {Preference for hope: A behavioral definition},
  volume =        {221},
  year =          {2022},
  issn =          {0165-1765},
}

@article{masatlioglu2023intrinsic,
  author =        {Masatlioglu, Yusufcan and Orhun, A Yesim and
                   Raymond, Collin},
  journal =       {American Economic Review},
  number =        {10},
  pages =         {2615--2644},
  title =         {Intrinsic information preferences and skewness},
  volume =        {113},
  year =          {2023},
}

@article{doval2024constrained,
  author =        {Doval, Laura and Skreta, Vasiliki},
  journal =       {Mathematics of Operations Research},
  number =        {1},
  pages =         {78--106},
  publisher =     {INFORMS},
  title =         {Constrained information design},
  volume =        {49},
  year =          {2024},
}

@article{makris2023information,
  author =        {Makris, Miltiadis and Renou, Ludovic},
  journal =       {Theoretical Economics},
  number =        {4},
  pages =         {1475--1509},
  publisher =     {Wiley Online Library},
  title =         {Information design in multistage games},
  volume =        {18},
  year =          {2023},
}

@article{ganguly2016fantasy,
  author =        {Ganguly, Ananda and Tasoff, Joshua},
  journal =       {Management Science},
  number =        {12},
  pages =         {4037--4060},
  publisher =     {INFORMS},
  title =         {Fantasy and dread: the demand for information and the
                   consumption utility of the future},
  volume =        {63},
  year =          {2016},
}

@article{nielsen2020preferences,
  author =        {Nielsen, Kirby},
  journal =       {Journal of Economic Theory},
  pages =         {105090},
  publisher =     {Elsevier},
  title =         {Preferences for the resolution of uncertainty and the
                   timing of information},
  volume =        {189},
  year =          {2020},
}

@techreport{gul2020thrill,
  author =        {Gul, Faruk and Natenzon, Paulo and Ozbay, Erkut Y and
                   Pesendorfer, Wolfgang and others},
  institution =   {Princeton University},
  title =         {The thrill of gradual learning},
  year =          {2020},
}

@article{eliaz2006can,
  author =        {Eliaz, Kfir and Spiegler, Ran},
  journal =       {Games and Economic Behavior},
  number =        {1},
  pages =         {87--104},
  publisher =     {Elsevier},
  title =         {Can anticipatory feelings explain anomalous choices
                   of information sources?},
  volume =        {56},
  year =          {2006},
}

@article{le2019persuasion,
  author =        {Le Treust, Ma{\"e}l and Tomala, Tristan},
  journal =       {Journal of Economic Theory},
  pages =         {104940},
  publisher =     {Elsevier},
  title =         {Persuasion with limited communication capacity},
  volume =        {184},
  year =          {2019},
}

@article{kamenica2011bayesian,
  author =        {Kamenica, Emir and Gentzkow, Matthew},
  journal =       {American Economic Review},
  number =        {6},
  pages =         {2590--2615},
  title =         {Bayesian persuasion},
  volume =        {101},
  year =          {2011},
}

\newpage
\appendix
\part*{Appendix}
\addcontentsline{toc}{part}{Appendix}

Throughout the appendix, I follow the belief-based approach to Bayesian persuasion (\citealp{kamenica2011bayesian}) and represent each signal by the support of the distribution over posterior beliefs it induces; Bayesian plausibility then pins down the corresponding probabilities. In particular, a binary signal is denoted by the pair of induced posteriors, written as $\{\mu_H,\mu_L\}$ with $\mu_H>\mu_L$ (and likewise $\{x,y\}$ in auxiliary problems). Two such binary signals recur and are named for convenience.

Also, a binary signal \textit{with perfect bad news} is a signal $\{0,x\}$ whose lower posterior reveals the unfavorable state ($\underline{p}=\underline{p}_L$), and a binary signal \textit{with perfect good news} is a signal $\{1,y\}$ whose upper posterior reveals the favorable state ($\underline{p}=\underline{p}_H$). In the language of Section~\ref{subsection.interim_disclosure}, a signal with perfect bad news is reassurance-oriented and a signal with perfect good news is warning-oriented.

\section{Physical Testing-Cost Benchmark}
\label{appendix.physical_cost_benchmark}

This appendix isolates the role of psychological information costs by replacing the anticipatory channel with an ordinary physical cost of testing.
Suppose $\phi(v)=v$ (no information avoidance) but taking the test entails a physical cost $\psi>0$.
Write $\chi\equiv\psi/(1-\alpha)$ for the testing cost in the same conditional health-probability units as $\underline{p}$ and $\bar{p}-c$.
The patient's payoff functions when $a=1$ and $a=0$ become
\begin{align*}
     \tilde{V}(\mu)=\left\{
     \begin{array}{cc}
         \alpha+(1-\alpha)(\bar{p}-c) & \mu\leq\mu_e \\
         \alpha+(1-\alpha)\underline{p}(\mu) & \mu>\mu_e
     \end{array}
     \right.
     \quad\text{and}\quad
     \tilde{V}_0(\mu)=\alpha+(1-\alpha)\underline{p}(\mu),
\end{align*}
respectively.
The interim problem becomes
\begin{align*}
     \max_{\tau_1^1,\tau_1^0\in\mathcal{R}(\mu_1)}:\,&\int_0^1P(\mu)\tau_1^1(d\mu)\\
    &\text{subject to: }\int_0^{1}\tilde{V}(\mu)\tau_1^1(d\mu)-\psi\geq\int_0^1\tilde{V}_0(\mu)\tau_1^0(d\mu).
\end{align*}

\begin{proposition}[Physical testing-cost benchmark]
\label{prop:physical_cost_benchmark}
In the physical cost setting, identify
\begin{align*}
         \tilde{\mu}_{\mathcal{M}}=\frac{\bar{p}-c-\chi-\underline{p}_L}{\bar{p}-c-\underline{p}_L}
         \quad\text{and}\quad
         \tilde{\mu}_{\mathcal{F}}=\frac{\bar{p}-c-\chi-\underline{p}_L}{\underline{p}_H-\underline{p}_L}.
\end{align*}
The following statements hold.
\begin{itemize}
         \item The doctor benefits from the interim disclosure if and only if $\mu_1\leq\tilde{\mu}_{\mathcal{M}}$.
         \item The optimal interim disclosure is $\{\mu_1\}$ if $\mu_1\leq\tilde{\mu}_{\mathcal{F}}$, and is $\{1,\tilde{l}(\mu_1)\}$ if $\mu_1\in(\tilde{\mu}_{\mathcal{F}},\tilde{\mu}_{\mathcal{M}}]$, where
         \[
             \tilde{l}(\mu_1)=
             \frac{(1-\mu_1)(\bar{p}-c-\underline{p}_L)-\chi}
             {(1-\mu_1)(\underline{p}_H-\underline{p}_L)-\chi}.
         \]
         \item If $\bar{p}-c\geq \underline{p}_L+\chi$, the optimal ex ante disclosure is $\{\mu_0\}$ if $\mu_0\leq\tilde{\mu}_{\mathcal{F}}$, and is $\{1,\tilde{\mu}_{\mathcal{F}}\}$ otherwise. If $\bar{p}-c<\underline{p}_L+\chi$, the patient cannot be induced to take the test.
\end{itemize}
\end{proposition}
\begin{proof}
    See Appendix \ref{proof.benchmark}.
\end{proof}

 Comparing Proposition~\ref{prop:physical_cost_benchmark} with the baseline result in Proposition~\ref{prop:voluntary_testing}, two contrasts emerge. First, the warning-in-advance structure survives: when testing can be induced, the doctor places the patient at the threshold $\tilde{\mu}_{\mathcal{F}}$ (the analogue of $\min\{\mu_e,\mu_\mathcal{F}\}$) and uses a perfect-good-news interim signal whenever the testing constraint binds. Second, the information greenhouse \emph{disappears}: the optimal interim signal never takes the perfect-bad-news form $\{0,h(\mu_1)\}$, and the doctor never softens post-test disclosure for reassurance. The reason is that with $\phi$ linear, $V_0$ is itself linear and the testing constraint can always be relaxed by reducing the lower posterior under perfect good news; reassurance has value only when $\phi$ is concave so that bad news is psychologically costly. The benchmark therefore confirms that the information greenhouse is specific to environments with psychological information costs, as previewed in Section~\ref{subsection.physical_cost_benchmark}.

\section{Proofs}
\label{appendix.proofs}

\subsection{Proof of Proposition \ref{prop:voluntary_testing}}
\label{proof-exante}

\noindent\textbf{Step 1:} Structure of the motivation-available beliefs.

\medskip
 As functions of the interim belief $\mu_1$, the expected payoff of the treatment-optimal signal is piecewise linear and convex, and at point $\mu=1$ it equals $V(1)=\alpha+(1-\alpha)\phi(\underline{p}_H)$.
 Meanwhile, since the optimal continuation disclosure after refusal is full disclosure of the untreated prospect (by concavity of $V_0$, established in Section~\ref{subsection.baseline}), the payoff of rejecting the test is linear in $\mu$, and at point $\mu=1$ it equals $\bar{V}_0(1)=\phi(\alpha+(1-\alpha)\underline{p}_H)\geq V(1)$ by the concavity of $\phi$.
 Therefore, $\mathcal{F}$ is either empty or an interval $[0,\mu_\mathcal{F}]$, which is fully determined by the comparison at point $\mu=0$.
 In other words, it is non-empty if and only if
 \begin{align*}
     V(0)=\alpha+(1-\alpha)\phi(\bar{p}-c)\geq\bar{V}_0(0)=\phi(\alpha+(1-\alpha)\underline{p}_L),
 \end{align*}
 which implies (\ref{eq:fear_reactive}).
 The same reasoning applies for $\mathcal{D}$ since the expected payoff of full disclosure is linear.

 The expected payoff of the testing-optimal signal is concave, and is linear in $[0,\mu_v]$.
 Again, $\mathcal{M}$ is either empty or an interval since $\bar{V}_0$ is linear.
 If (\ref{eq:fear_reactive}) holds, it is $[0,\mu_\mathcal{M}]$ for some $\mu_\mathcal{M}<1$ since its value at $\mu=1$ is still $V(1)<\bar{V}_0(1)$.
 If (\ref{eq:fear_reactive}) does not hold, by the linearity of function $\text{cav}\circ V(\mu)$ in interval $[0,\mu_v]$, $\mathcal{M}\subset[0,\mu_v]$ can never hold.
 Thus, it is non-empty if and only if the maximum value of $V(\mu)-\bar{V}_0(\mu)$ in $[\mu_v,1]$ is positive, which establishes Lemma~\ref{lemma:three_sets}.

\bigskip
\noindent\textbf{Step 2:} A general result to characterize the optimal binary signal in a constrained Bayesian persuasion problem.

\medskip
In this step only, $\mu_0$ denotes the generic prior in an auxiliary two-action, two-state persuasion problem.
It should not be confused with the ex ante prior in the medical model, which reappears below.

Consider a game played between a \textit{sender} and a \textit{receiver}. 
The sender designs an information structure, after which the receiver decides whether to accept the test. 
The state space is binary, and the test updates the prior belief $\mu_0\in(0,1)$ to a posterior belief $\mu\in[0,1]$, where beliefs are probabilities assigned to one of the two states.
Based on the posterior belief, the receiver chooses an action $e\in\{0,1\}$, and the game ends. 
The Bayesian persuasion problem can be written as
\begin{align*}
    \max_{\tau\in\mathcal{R}(\mu_0)}:&\int_0^1P(\mu)\tau(d\mu)\\
    &\text{subject to: }\int_0^1V(\mu)\tau(d\mu)\geq\bar{V}_0(\mu_0),\quad\text{(PC)}
\end{align*}
where $P$ and $V$ are the indirect utility functions of posterior belief $\mu$ for the sender and the receiver, respectively, and $\bar{V}_0(\mu_0)$ is the payoff when the receiver rejects the test.

Suppose that there exists $\mu_e\in(0,1)$ such that the sets of beliefs for which $e=1$ and $e=0$ are optimal are $[0,\mu_e]$ and $[\mu_e,1]$, respectively.
Assume $P$ and $V$ are both concave on $[0,\mu_e]$ and on $[\mu_e,1]$.
Then it is without loss to focus on signals in
\begin{align*}
    \mathcal{S}(\mu_0)=\bigg\{\{x,y\}\bigg|1\geq x\geq\max\{\mu_0,\mu_e\}\geq\min\{\mu_0,\mu_e\}\geq y\geq 0\bigg\}.
\end{align*}
By concavification, the unconstrained optimum is also at most binary.
That is, for any $\mu_0$, there exists $\{\hat{x},\hat{y}\}\in\mathcal{S}(\mu_0)$ such that
\begin{align*}
    \hat{\tau}=\{\hat{x},\hat{y}\}\in\arg\max_{\tau\in\mathcal{R}(\mu_0)}\int_0^1P(\mu)\tau(d\mu).
\end{align*}
I focus on the case in which the (PC) constraint binds in the constrained optimum, which means that $\{\hat{x},\hat{y}\}$ violates (PC).
Thus, the optimization problem can be rewritten as
\begin{align*}
   \max_{x,y:\,\{x,y\}\in\mathcal{S}(\mu_0)}:&\,\frac{x-\mu_0}{x-y} P(y)+\frac{\mu_0-y}{x-y} P(x)\\
   &\text{subject to: }
   \frac{x-\mu_0}{x-y} V(y)+\frac{\mu_0-y}{x-y} V(x)=\bar{V}_0(\mu_0). \qquad \text{(PCB)}
\end{align*}
Here, $(x-\mu_0)/(x-y)$ and $(\mu_0-y)/(x-y)$ are respectively the probabilities that posteriors $y$ and $x$ realize, as implied by Bayesian plausibility.

Define the following objects for convenience.
For any $x\in[\max\{\mu_0,\mu_e\},1]$, let
\begin{align*}
   D(x)=\left\{y\in[0,\min\{\mu_0,\mu_e\}]\bigg|\,
   \frac{x-\mu_0}{x-y}V(y)+\frac{\mu_0-y}{x-y}V(x)=\bar{V}_0(\mu_0)\right\}.
\end{align*}
Thus $D(x)$ is the set of lower beliefs that make (PCB) bind.
If $D(x)$ is non-empty, define $d(x)$ by
\begin{align*}
   d(x)=\sup\bigg\{\arg\max_{y\in D(x)}:\,
   \frac{x-\mu_0}{x-y}P(y)+\frac{\mu_0-y}{x-y}P(x)\bigg\}.
\end{align*}
Thus, given $x$, $d(x)$ is the best lower belief that makes (PCB) bind. 
For any function $G$, define
\begin{align*}
   S_G(x)=\frac{G(x)-G(d(x))}{x-d(x)},
\end{align*}
the slope of the segment connecting $(x,G(x))$ and $(d(x),G(d(x)))$.
Finally, after substituting $y=d(x)$, define
\begin{align*}
    \mathcal{P}(x)=\frac{x-\mu_0}{x-d(x)}P(d(x))+\frac{\mu_0-d(x)}{x-d(x)}P(x).
\end{align*}

\begin{lem}
\label{bestgoodnewsrule}
   $\mathcal{P}'(x)\geq 0$ if and only if
   \begin{equation}
   \label{eq.bestgoodnewscriterion}
    \left(S_P(x)-P'(x)\right)\leq
    \frac{S_V(x)-V'(x)}{S_V(x)-V'(d(x))}
    \big(S_P(x)-P'(d(x))\big).
   \end{equation}
   In particular, if \eqref{eq.bestgoodnewscriterion} holds for all feasible $x$, the optimal signal is $\{x^*,d(x^*)\}$, where
   \[
       x^*=\sup\left\{x\in[\max\{\mu_0,\mu_e\},1]\bigg|\,D(x)\neq\emptyset\right\}.
   \]
\end{lem}
\begin{proof}
Using implicit differentiation,
\begin{equation}
\label{eq.d'x}
    d'(x)=-\frac{\mu_0-d(x)}{x-\mu_0}
    \frac{V(x)-V(d(x))-(x-d(x))V'(x)}
    {V(x)-V(d(x))-(x-d(x))V'(d(x))}.
\end{equation}
Also,
\begin{align*}
    \mathcal{P}'(x)=&-\left[\frac{\mu_0-d(x)}{x-d(x)}
    \left(\frac{P(x)-P(d(x))}{x-d(x)}-P'(x)\right)\right]\\
    &+d'(x)\left[-\frac{x-\mu_0}{x-d(x)}
    \left(\frac{P(x)-P(d(x))}{x-d(x)}-P'(d(x))\right)\right].
\end{align*}
Substituting \eqref{eq.d'x} into this expression, $\mathcal{P}'(x)\geq 0$ if and only if
\begin{align*}
    S_P(x)-P'(x)
    &\leq
    \frac{V(x)-V(d(x))-(x-d(x))V'(x)}
    {V(x)-V(d(x))-(x-d(x))V'(d(x))}
    \left(\frac{P(x)-P(d(x))}{x-d(x)}-P'(d(x))\right)\\
    &=\frac{S_V(x)-V'(x)}{S_V(x)-V'(d(x))}
    \left(S_P(x)-P'(d(x))\right),
\end{align*}
which completes the proof.
\end{proof}
By Lemma \ref{bestgoodnewsrule}, if condition \eqref{eq.bestgoodnewscriterion} is satisfied, the optimal signal can be derived in three steps: first check whether the unconstrained optimum satisfies (PC); if not, identify the largest upper belief $x$ such that $D(x)$ is non-empty; then choose $y$ from $D(x)$ to maximize the objective at that upper belief.

\bigskip
\noindent\textbf{Step 3:} Apply Lemma \ref{bestgoodnewsrule} to the interim problem.

\medskip
Fix an interim belief $\mu_1$.
By Lemma \ref{lemma:binary_signals}, it is enough to consider interim signals with at most two posteriors.
If $\mu_1\in\mathcal{F}$, the treatment-optimal warning signal already satisfies the testing constraint, so I only need to consider $\mu_1\in\mathcal{M}\setminus\mathcal{F}$.

I first show that the testing constraint (\ref{eq:simplified_testing_constraint}) binds.
Suppose not, so that an optimal signal $\{x,y\}$ satisfies the constraint with strict inequality.
For any sufficiently small $\varepsilon>0$, continuity of $V$ implies that $\{x+\varepsilon,y\}$ remains feasible.
The doctor's payoff after $\theta=0$ is
\[
     \frac{x-\mu_1}{x-y}\bar{p}+\frac{\mu_1-y}{x-y}\underline{p}(x),
\]
which is strictly increasing in the upper posterior $x$.
Thus $\{x+\varepsilon,y\}$ dominates $\{x,y\}$, a contradiction.

It remains to verify condition \eqref{eq.bestgoodnewscriterion} after substituting the generic prior in Lemma \ref{bestgoodnewsrule} with the interim belief $\mu_1$.
In the present model, $d(x)\leq\mu_e<x$ implies $V(d(x))=V(0)$, $V'(d(x))=0$, $P(d(x))=\bar{p}$, and $P'(d(x))=0$.
Moreover $S_P(x)<0$ and $P'(x)=\underline{p}_H-\underline{p}_L\geq0$.
Thus $S_P(x)-P'(d(x))<0$, and \eqref{eq.bestgoodnewscriterion} is equivalent to
\[
     \frac{S_P(x)-P'(x)}{S_P(x)-P'(d(x))}
     \geq
     \frac{S_V(x)-V'(x)}{S_V(x)-V'(d(x))}.
\]
The left-hand side is
\[
     1+\frac{(\underline{p}_H-\underline{p}_L)(x-d(x))}
     {\bar{p}-\underline{p}(x)}>1,
\]
whereas monotonicity of $V$ gives
\[
     \frac{S_V(x)-V'(x)}{S_V(x)-V'(d(x))}
     =
     1-V'(x)\frac{x-d(x)}{V(x)-V(0)}<1.
\]
Therefore condition \eqref{eq.bestgoodnewscriterion} holds, and Lemma \ref{bestgoodnewsrule} implies that every binding optimum maximizes the upper posterior among feasible binary signals.

\bigskip
\noindent\textbf{Step 4:} Characterize the optimal interim disclosure.

\medskip
 For any $\mu_1\in\mathcal{D}\setminus\mathcal{F}$, when the upper belief is $1$, since $\{0,1\}$ is testing-optimal among this subcategory and $\{0,1\}$ is sufficient to motivate $a=1$, there exists a lower belief $y\leq\mu_e$ such that $\{1,y\}$ makes the participation constraint (\ref{eq:simplified_testing_constraint}) holds with equality. 
 By Lemma \ref{bestgoodnewsrule}, given that the upper belief attains its maximum and the participation constraint binds, $\{1,y\}$ is optimal.
 Since $V(y)=V(0)=\alpha+(1-\alpha)\phi(\bar{p}-c)$, $y$ can be derived by equation
 \begin{align*}
     \frac{1-\mu_1}{1-y}\cdot V(0)+\frac{\mu_1-y}{1-y}\cdot V(1)=\bar{V}_0(\mu_1),
 \end{align*}
 which gives the required expression (\ref{eq:l_mu}).
 
 For $\mu_1\in\mathcal{M}\setminus\mathcal{D}$, no information structure with the upper belief being $1$ is sufficient to meet the participation constraint.
 Note that for any $x>\mu_e>y$, the patient's expected payoff of $\{x,y\}$, which is
 \begin{align*}
     \frac{x-\mu_1}{x-y}\cdot \underbrace{V(y)}_{=V(0)}+\frac{\mu_1-y}{x-y}\cdot V(x),
 \end{align*}
 is obviously a decreasing function of $y$ since $V(x)>V(0)$.
 Therefore, Lemma \ref{bestgoodnewsrule} implies that $y$ must be minimized in order to maximize $x$, and consequently the optimal signal is given by equation
 \begin{align*}
     \frac{1-\mu_1}{x}\cdot V(0)+\frac{\mu_1}{x}\cdot V(x)=\bar{V}_0(\mu_1),
 \end{align*}
 which gives the required expression (\ref{eq:h_mu}).
 
 \bigskip
 \noindent\textbf{Step 5:} Optimal policy when the patient is fear-reactive.
 
 \medskip
 Suppose inequality (\ref{eq:fear_reactive}) holds.
 If $\mu_0\leq\min\{\mu_e,\mu_\mathcal{F}\}$, disclosing no information throughout ($\mu_0=\mu_1=\mu_2$) is sufficient to motivate $a=1$ and guarantees $e=1$ with probability $1$. 
 Consequently, it is optimal.

 Suppose $\mu_0>\min\{\mu_e,\mu_\mathcal{F}\}$.
 By the proposed warning-in-advance policy, the health probability is
 \begin{equation*}
 \label{eq.exante}
     P^*(\mu_0)=\frac{1-\mu_0}{1-\min\{\mu_e,\mu_{\mathcal{F}}\}}\cdot\bar{p}+\frac{\mu_0-\min\{\mu_e,\mu_{\mathcal{F}}\}}{1-\min\{\mu_e,\mu_{\mathcal{F}}\}}\cdot \underline{p}_H.
 \end{equation*}
 By concavification, it suffices to show that for any $\mu_0>\min\{\mu_e,\mu_\mathcal{F}\}$, the consequent health probability by the optimal interim disclosure derived in Proposition \ref{prop:optimal_interim_disclosure}, denoted by $\bar{P}(\mu_0)$, is no larger than $P^*(\mu_0)$.
 Here, since the optimal interim disclosure changes continuously with interim belief $\mu_1$, $\bar{P}(\cdot)$ is continuous.
 
 First, if $\mu\in[\mu_e,\mu_{\mathcal{F}}]$, the optimal interim disclosure is $\{1,\mu_e\}=\{1,\min\{\mu_e,\mu_{\mathcal{F}}\}\}$, which is also treatment-optimal, and the health probability is exactly $\bar{P}(\mu)$.

 Second, if $\mu\in\mathcal{M}\setminus\mathcal{D}$, the optimal interim signal is with perfect bad news. 
 For any signal with perfect bad news $\{0,x\}\in\mathcal{R}(\mu)$, the corresponding health probability is given by
 \begin{align*}
     \bar{P}(\mu)=\left(1-\frac{\mu}{x}\right)\cdot\bar{p}+\frac{\mu}{x}\cdot \underline{p}(x).
 \end{align*}
 The first-order derivative with respect to $x$ is $\mu(\bar{p}-\underline{p}(\mu)+(\underline{p}_H-\underline{p}_L)x)/x^2>0$, and therefore
 \begin{align*}
     \bar{P}(\mu)&<\left(1-\mu\right)\cdot\bar{p}+\mu \underline{p}_H<\frac{1-\mu}{1-\min\{\mu_e,\mu_{\mathcal{F}}\}}\cdot\bar{p}+\frac{\mu-\min\{\mu_e,\mu_{\mathcal{F}}\}}{1-\min\{\mu_e,\mu_{\mathcal{F}}\}}\cdot \underline{p}_H={P^*}(\mu).
 \end{align*}

 Third, if $\mu\in\mathcal{D}\setminus\mathcal{F}$, the interim signal is with perfect good news, and thus the health probability is
 \begin{align*}
     \bar{P}(\mu)=\frac{1-\mu}{1-l(\mu)}\cdot \bar{p}+\frac{\mu-l(\mu)}{1-l(\mu)}\cdot \underline{p}_H.
 \end{align*}
 It suffices to show that $l(\mu)<\min\{\mu_e,\mu_\mathcal{F}\}$. 
 If $\mu_e<\mu_\mathcal{F}$, this claim holds directly.
 If $\mu_e>\mu_\mathcal{F}$, observe that
 \begin{align*}
  V(0)=\bar{V}_0(\mu_\mathcal{F})=\frac{\bar{V}_0(1)-\bar{V}_0(\mu)}{1-\mu}(\mu_\mathcal{F}-\mu)+\bar{V}_0(\mu)\Rightarrow\mu_\mathcal{F}=1-(1-\mu)\frac{\bar{V}_0(1)-V(0)}{\bar{V}_0(1)-\bar{V}_0(\mu)}.
 \end{align*}
 Compare it with (\ref{eq:l_mu}), we have $\mu_\mathcal{F}>l(\mu)$ since $\bar{V}_0(1)>V(1)$.

 \bigskip
 \noindent\textbf{Step 6:} Optimal policy when the patient is fear-avoidant. 
 
 \medskip
 Suppose now that (\ref{eq:fear_reactive}) does not hold.
 Then for any $\mu\in[\mu_\mathcal{M}^L,\mu_\mathcal{M}^H]$, 
 \begin{align*}
     \bar{P}(\mu)=\left(1-\frac{\mu}{h(\mu)}\right)\cdot\bar{p}+\frac{\mu}{h(\mu)}\cdot \underline{p}(h(\mu)),
 \end{align*}
 where $h(\mu)$ is derived by equation (\ref{eq:h_mu}).
 By implicit differentiation, equation (\ref{eq:h_mu}) indicates
 \begin{align*}
     (h'(\mu)-1)\cdot V(0)+V(h(\mu))+\mu V'(h(\mu))h'(\mu)=\bar{V}_0(\mu)h(\mu)+h'(\mu)\bar{V}_0(\mu),
 \end{align*}
 which yields
 \begin{equation}
     \label{eq.h'}
     h'(\mu)=\frac{h(\mu)}{\mu}\cdot\frac{V(h(\mu))-V(0)-\bar{V}_0'(\mu)\cdot h(\mu)}{V(h(\mu))-V(0)-V'(h(\mu))\cdot h(\mu)}\equiv\frac{h(\mu)}{\mu}\cdot\frac{D_2(\mu)}{D_1(\mu)},
 \end{equation}
 where
 \begin{align*}
     \begin{array}{ll}
        D_1(\mu)=V(h(\mu))-V(0)-V'(h(\mu))\cdot h(\mu),\text{ and}\\
        D_2(\mu)=V(h(\mu))-V(0)-\bar{V}_0'(\mu)\cdot h(\mu).
    \end{array}
 \end{align*}
 By the mean-value theorem and the concavity of $V$, $D_1(\mu)>0$.
 
 The second-order derivative is given by
 \begin{align*}
     &\frac{\mu h'(\mu)^2 \left(h(\mu)^2 \left(-V''(h(\mu))\right)+2 h(\mu) V'(h(\mu))-2 V(h(\mu))+2 V(0)\right)}{h(\mu)}\\&=\left(h(\mu) V'(h(\mu))-V(h(\mu))+V(0)\right) \left(\mu h''(\mu)+2 h'(\mu)\right).
 \end{align*}
 Substituting $h'(\mu)$ with expression (\ref{eq.h'}), we have
 \begin{equation}
 \label{eq.h''}
 \begin{aligned}
     h''(\mu)=&\frac{D_2(\mu)h(\mu)\left(2D_1(\mu)(D_2(\mu)-D_1(\mu))+D_2(\mu) h(\mu)^2 V''(h(\mu))\right)}{\mu^2 D_1(\mu)^3}\\
     &+\frac{h(\mu)^2\bar{V}_0''(x)}{\mu\left(h(\mu) V'(h(\mu))-V(h(\mu))+\bar{p}-c\right)}.
 \end{aligned}
 \end{equation}
 We have
 \begin{align*}
     \bar{P}''(\mu)=&(1-\alpha)\bigg[\frac{2h(\mu)h'(\mu)-2\mu h'(\mu)^2+\mu h(\mu)h''(\mu)}{h(\mu)^3}\cdot\bar{p}\\
        &+\frac{\left(h(\mu) \underline{p}'(h(\mu))-\underline{p}(h(\mu))\right) \left(\mu h(\mu) h''(\mu)+2 h'(\mu) \left(h(\mu)-\mu h'(\mu)\right)\right)}{h(\mu)^3}\bigg].
 \end{align*}
 Substituting $h'(\mu)$ and $h''(\mu)$ by (\ref{eq.h'}) and (\ref{eq.h''}), and we conclude
 \begin{equation}
 \label{eq.pbn''}
     \bar{P}''(\mu)=\underbrace{\frac{(1-\alpha)\left(h(\mu) \underline{p}'(h(\mu))+\bar{p}-\underline{p}(h(\mu))\right)}{D_1(\mu)}}_{>0}\bigg[\frac{D_2(\mu)^2 h(\mu) V''(h(\mu))}{\mu D_1(\mu)^2}-\bar{V}''_0(\mu)\bigg].
 \end{equation}
 Since $D_1(\mu)>0$ and $\bar{V}_0''(\mu)=0$, $\bar{P}''(\mu)<0$.
 By concavification, the optimal signal must have the proposed structure.

\subsection{Proof of Proposition \ref{prop:physical_cost_benchmark}}
\label{proof.benchmark}
 Let $\chi\equiv\psi/(1-\alpha)$.
 Since $\tilde{V}$ and $\tilde{V}_0$ are affine transformations of conditional health probabilities, the testing constraint can be written in conditional units by subtracting the common healthy-state term and dividing by $1-\alpha$.
 By Lemma \ref{lemma:binary_signals}, it is enough to consider binary interim signals.

 If the doctor uses no post-test information at belief $\mu_1$, treatment is accepted and testing is induced if and only if
 \[
      \bar{p}-c-\chi\geq \underline{p}(\mu_1),
 \]
 which is equivalent to $\mu_1\leq\tilde{\mu}_{\mathcal{F}}$.
 Hence no disclosure is optimal for $\mu_1\leq\tilde{\mu}_{\mathcal{F}}$.

 For higher beliefs, the optimal binary signal keeps the favorable posterior at $1$ and lowers the other posterior until the testing constraint binds.
 Let $y$ denote this lower posterior.
 The binding testing constraint, written in conditional units, is
 \[
      \frac{1-\mu_1}{1-y}(\bar{p}-c)
      +\frac{\mu_1-y}{1-y}\underline{p}_H
      -\chi
      =
      \underline{p}(\mu_1),
 \]
 and solving for $y$ gives
 \[
      y=
      \frac{(1-\mu_1)(\bar{p}-c-\underline{p}_L)-\chi}
      {(1-\mu_1)(\underline{p}_H-\underline{p}_L)-\chi}
      \equiv \tilde{l}(\mu_1).
 \]
 Full disclosure can induce testing if and only if
 \[
      (1-\mu_1)(\bar{p}-c)+\mu_1\underline{p}_H-\chi
      \geq
      \underline{p}(\mu_1),
 \]
 which is equivalent to $\mu_1\leq\tilde{\mu}_{\mathcal{M}}$.
 Therefore the doctor benefits from interim disclosure if and only if $\mu_1\leq\tilde{\mu}_{\mathcal{M}}$.

 The induced continuation value is $\bar{p}$ for $\mu_1\leq\tilde{\mu}_{\mathcal{F}}$ and $\underline{p}(\mu_1)$ for $\mu_1>\tilde{\mu}_{\mathcal{M}}$.
 For $\mu_1\in[\tilde{\mu}_{\mathcal{F}},\tilde{\mu}_{\mathcal{M}}]$,
 \begin{align*}
      \bar{P}(\mu_1)
      &=\frac{1-\mu_1}{1-y}\bar{p}
      +\frac{\mu_1-y}{1-y}\underline{p}_H \\
      &=\frac{(\bar{p}-\underline{p}_H)(\underline{p}_H\mu_1+\underline{p}_L(1-\mu_1)+\chi)-\underline{p}_Hc}{\bar{p}-c-\underline{p}_H}.
 \end{align*}
 This expression is decreasing and linear in $\mu_1$, with $\bar{P}(\tilde{\mu}_{\mathcal{F}})=\bar{p}$.
 If $\bar{p}-c<\underline{p}_L+\chi$, then $\tilde{\mu}_{\mathcal{M}}<0$, so testing cannot be induced.
 If $\bar{p}-c\geq\underline{p}_L+\chi$, the concavification of $\bar{P}$ yields the ex ante policy stated in the proposition.

\subsection{Proof of Lemma \ref{lemma.SRP}}
\label{proof.SRP}

 Fix any information policy such that the ex ante signal has $N>2$ different posteriors, i.e., $\text{Supp}(\tau_0)=\{\mu^1,\mu^2,\dots,\mu^N\}$, where the probability of posterior $\mu^i$ is $\pi_i$ and the subsequent interim policy of interim belief $\mu^i$ is $\tau_i\in\mathcal{R}(\mu^i)$ for every $i=1,2,\dots,N$.
 Among the posteriors, suppose also that $\mu^i$ is sufficient to motivate $a=1$ if and only if $i\leq n$ ($n<N$), and then the policy satisfies the ex ante participation constraint if and only if
 \begin{align*}
     \sum_{i=1}^n\pi_i\int_0^1V(\mu)\tau_i(d\mu)+\sum_{i={n+1}}^N\pi_iV_0(\mu^i)\geq V_0\left(\sum_{i=1}^n\pi_i\mu^i\right).
 \end{align*}
 The corresponding health probability is given by
 \begin{align*}
     \sum_{i=1}^n\pi_i\int_0^1P(\mu)\tau_i(d\mu)+\sum_{i={n+1}}^N\pi_i\big(\alpha+(1-\alpha)\underline{p}(\mu^i)\big).
 \end{align*}
 Define another information policy with ex ante signal $\{\mu_H,\mu_L\}$, where 
 \begin{align*}
     \mu_L=\frac{\sum_{i=1}^n\pi_i\mu^i}{\sum_{i=1}^n\pi_i},\text{ and }\mu_H=\frac{\sum_{i=n+1}^N\pi_i\mu^i}{\sum_{i=n+1}^N\pi_i},
 \end{align*}
 and the corresponding interim policies are respectively
 \begin{align*}
     \tau_L=\sum_{i=1}^n\pi_i\tau_i,\text{ and }\tau_H=\sum_{i=n+1}^N\pi_i\tau_i.
 \end{align*}
 Then by the linearity of the reduction operator, the new policy is well-defined and it generates exactly the same health probability since $P$ is piecewise linear.
 Meanwhile, since $V$ is piecewise concave and $V_0$ is concave, the new policy yields a higher ex ante payoff for the patient since it conveys less information. 
 Thus, the new signal also meets the participation constraint, and consequently $\{\mu_H,\mu_L\}$ dominates the original policy. 

\subsection{Proof of Proposition \ref{proposition-PC}}
\label{proof.PC}
 I complete the proof in six steps.

 \bigskip
 \noindent\textbf{Step 1:} There exists an informative information policy that meets the ex ante participation constraint if and only if $\mu_0<\mu_\mathcal{V}$.

 For the if part, consider the information policy such that $\tau_0\in\{0,\mu_\mathcal{V}\}$ and no information is disclosed in the interim disclosure. 
 This policy is informative when $\mu_0<\mu_\mathcal{V}$ and it satisfies the ex ante participation constraint. 
 
 For the only if part, it suffices to show that no informative policy can yield a strictly higher payoff for the patient than non-disclosure when $\mu_0\geq\mu_\mathcal{V}$.
 The patient's highest payoff by an information policy can be solved by backward induction. 
 That is, if the interim disclosure is testing-optimal for any interim belief $\mu_1$ such that the patient is willing to accept the test there, then the highest payoff for the patient is given by $\text{cav}\circ\text{cav}\circ V\equiv \text{cav}\circ V$.
 Note that $\mathcal{V}(\mu)\geq V(\mu)$ for any $\mu\in[0,1]$.
 For any $\mu_0\geq\mu_\mathcal{V}$, $V_0(\mu_0)\geq \text{cav}\circ\mathcal{V}(\mu_0)\geq\text{cav}\circ V(\mu_0)$.

 \bigskip
 \noindent\textbf{Step 2:} The optimal disclosure when $\mu_0<\mu_\mathcal{N}$ is non-disclosure.

 When inequality (\ref{eq:fear_reactive}) holds, the warning-in-advance policy, as the unconstrained optimum, is still optimal when it satisfies the ex ante participation constraint (\ref{eq.exantePC}). 
 In this case, when the patient's interim belief becomes $\mu_1=1$, the doctor can forgo the opportunity of continuation disclosure after refusal.
 From the perspective of treatment optimality, the doctor can achieve an identical performance, and from the perspective of testing optimality, it makes the entire consultation process more attractive.
 Thus, the patient's highest ex ante payoff under this policy is given by
 \begin{align*}
     \left\{
      \begin{array}{cc}
         V(0) & \mu_0<\min\{\mu_e,\mu_\mathcal{F}\},\\
         \frac{(\mu-\min\{\mu_e,\mu_\mathcal{F}\})V_0(1)+(1-\mu)V(0)}{1-\min\{\mu_e,\mu_\mathcal{F}\}} & \mu_e\leq\mu\leq\mu_\mathcal{F},
      \end{array}
     \right.
 \end{align*}
 which is piecewise linear and convex.
 Then it suffices to show $\mu_\mathcal{N}<\min\{\mu_\mathcal{F},\mu_e\}$.
 This condition is guaranteed: 
 First, if $\mu_e\geq\mu_\mathcal{F}$, $\mu_\mathcal{N}<\mu_\mathcal{F}$ because
 \begin{align*}
     \bar{V}_0(\mu_\mathcal{F})=\alpha+(1-\alpha)\phi(\bar{p}-c)=V_0(\mu_\mathcal{N})>\bar{V}_0(\mu_\mathcal{N}).
 \end{align*}
 Second, if $\mu_e<\mu_\mathcal{F}$, $\mu_\mathcal{N}<\mu_e$ because
 \begin{align*}
     V(\mu_e)=\alpha+(1-\alpha)\phi(\bar{p}-c)=V_0(\mu_\mathcal{N})>V(\mu_\mathcal{N}).
 \end{align*}

 \bigskip
 \noindent\textbf{Step 3:} 
 If $\mu_0\in(\mu_\mathcal{N},\mu_\mathcal{V})$, the interim signal in the optimal policy is also at-most-binary, and it is either with perfect good news or with perfect bad news.

 When $\mu_0\in(\mu_\mathcal{N},\mu_\mathcal{V})$, since the unconstrained optimum is no longer able to motivate ex ante participation, the ex ante participation constraint (\ref{eq.exantePC}) must be binding. 

 Next, observe that in the optimal ex ante disclosure, if at interim belief $\mu_H$ the doctor recommends the patient not to take the test, then $\mu_H\geq\mu_\mathcal{V}$.
 This is because if not, the doctor can split the belief further to $\mu_\mathcal{V}$ and $0$, and this treatment makes both parties better off.
 Consequently, using non-disclosure at $\mu_H$ in the interim disclosure is optimal for both treatment optimality and testing optimality.

 Thus, the optimization problem can be expressed as follows
 \begin{align*}
     \max_
     {\substack{\mu_H,\mu_L\in[0,1] \\ \tau\in\mathcal{R}(\mu_L)}}:
     \,\frac{\mu_H-\mu_0}{\mu_H-\mu_L}\cdot&\int_0^1P(\mu)\tau(d\mu)+\frac{\mu_0-\mu_L}{\mu_H-\mu_L}\cdot \underline{p}(\mu_H)\\
     \text{subject to: }&\int_0^1V(\mu)\tau(d\mu)\geq\bar{V}_0(\mu_L)\\
     &\frac{\mu_H-\mu_0}{\mu_H-\mu_L}\cdot\int_0^1V(\mu)\tau(d\mu)+\frac{\mu_0-\mu_L}{\mu_H-\mu_L}\cdot V_0(\mu_H)=V_0(\mu_0).
 \end{align*}
 Since
 \begin{align*}
     \int_0^1V(\mu)\tau(d\mu)=\frac{\mu_H-\mu_L}{\mu_H-\mu_0}\cdot\left(V_0(\mu_0)-\frac{\mu_0-\mu_L}{\mu_H-\mu_L}\cdot V_0(\mu_H)\right)\geq V_0(\mu_L)\geq\bar{V}_0(\mu_L),
 \end{align*}
 the first constraint is satisfied automatically in the optimum.
 Thus, if $\langle\mu_H,\mu_L,\tau\rangle$ is the optimal policy, then $\tau$, the interim signal, must be a solution to problem
 \begin{align*}
     \max_{\tau\in\mathcal{R}(\mu_L)}:&\,\int_0^1P(\mu)\tau(d\mu)\\
      &\text{subject to: }\int_0^1V(\mu)\tau(d\mu)=\frac{\mu_H-\mu_L}{\mu_H-\mu_0}\cdot V(\mu_0)-\frac{\mu_0-\mu_L}{\mu_H-\mu_0}\cdot V(\mu_H).
 \end{align*}
 The same reasoning of Lemma \ref{lemma:binary_signals} indicates that the optimal signal is at-most-binary. 
 Also, condition \eqref{eq.bestgoodnewscriterion} in Lemma \ref{bestgoodnewsrule} can be verified in the same way, and thus the signal is either with perfect good news or with perfect bad news.
 
 \bigskip
 \noindent\textbf{Step 4:} 
 If there is no interim disclosure ($\tau(\mu_1)\equiv\{\mu_1\}$), then the optimal ex ante disclosure can be derived by the selection logic in Lemma \ref{bestgoodnewsrule}. 

 If there is no interim disclosure, the optimization problem becomes
 \begin{align*}
     \max_
     {\mu_H,\mu_L\in[0,1]}:
     \,\frac{\mu_H-\mu_0}{\mu_H-\mu_L}\cdot&P(\mu)+\frac{\mu_0-\mu_L}{\mu_H-\mu_L}\cdot \underline{p}(\mu_H)\\
     \text{subject to: }&\frac{\mu_H-\mu_0}{\mu_H-\mu_L}\cdot V(\mu_L)+\frac{\mu_0-\mu_L}{\mu_H-\mu_L}\cdot V_0(\mu_H)=V_0(\mu_0).
 \end{align*}
 By the concavity of $V_0$, we must have $V(\mu_L)>V_0(\mu_L)$, which indicates $\mu_L<\mu_\mathcal{N}$.
 That is, $V(\mu_L)=\mathcal{V}(\mu_L)$ and $V(\mu_H)=\mathcal{V}(\mu_H)$.
 By the same reasoning as in Lemma \ref{bestgoodnewsrule}, the doctor maximizes the upper posterior subject to the binding constraint.
 
 \bigskip
 \noindent\textbf{Step 5:}
 The optimal disclosure when $\mu_0\in(\mu_\mathcal{N},\mu_\mathcal{T})$.

 When $\mu_0\in(\mu_\mathcal{N},\mu_\mathcal{T})$, by the proposed information policy, the doctor commits to disclose no information after the test, and in the ex ante signal, the upper belief is $1$ and the lower belief, denoted as $\mu_L^*$, is given by (\ref{eq.thelowerbelief}). 
 Then the ex ante signal satisfies the ex ante participation constraint with equality.

 The optimality of this signal is proved by contradiction. 
 Suppose there exists another policy $\langle\mu_H,\mu_L,\tau\rangle$ that strictly dominates the proposed policy. 
 Then, the health probability must exceed
 \begin{equation}
 \label{eq.pc.hp}
     \frac{1-\mu_0}{1-\mu_L^*}\cdot \bar{p}+\frac{\mu_0-\mu_L^*}{1-\mu_L^*}\cdot \underline{p}_H,
 \end{equation}
 and I show that if so it cannot satisfy the ex ante participation constraint.

 By the selection logic in Lemma \ref{bestgoodnewsrule}, there is no ex ante signal in $\mathcal{R}(\mu_0)$ that can dominate the proposed policy without the interim disclosure. 
 So I only focus on the case that $\tau$ is informative.
 First, suppose $\mu_L<\mu_L^*$, then the health probability is given by
 \begin{align*}
 \frac{\mu_H-\mu_0}{\mu_H-\mu_L}\cdot&\int_0^1P(\mu)\tau(d\mu)+\frac{\mu_0-\mu_L}{\mu_H-\mu_L}\cdot \underline{p}(\mu_H)\\&\leq\frac{\mu_H-\mu_0}{\mu_H-\mu_L}\cdot\bar{p}+\frac{\mu_0-\mu_L}{\mu_H-\mu_L}\cdot \underline{p}(\mu_H)\leq\frac{1-\mu_0}{1-\mu^*_L}\cdot\bar{p}+\frac{\mu_0-\mu_L^*}{1-\mu^*_L}\cdot \underline{p}_H.
 \end{align*}
 That is, even though the policy is implementable, the health probability of such a triple $\langle\mu_H,\mu_L,\tau\rangle$ is still unable to exceed that of the proposed policy (\ref{eq.pc.hp}).

 Third, if $\mu_L$ is larger than $\mu_L^*$ and $\tau$ is informative, then by Step 3, $\tau$ is with perfect news. 
 If it is with perfect bad news, i.e., $\tau=\{0,\mu_h\}$, then the health probability is
 \begin{align*}
     &\frac{\mu_H-\mu_0}{\mu_H-\mu_L}\cdot\int_0^1P(\mu)\tau(d\mu)+\frac{\mu_0-\mu_L}{\mu_H-\mu_L}\cdot \underline{p}(\mu_H)\\
    &=\frac{\mu_H-\mu_0}{\mu_H-\mu_L}\cdot\left(\frac{\mu_h-\mu_L}{\mu_h}\cdot\bar{p}+\frac{\mu_L}{\mu_h}\cdot \underline{p}(\mu_h)\right)+\frac{\mu_0-\mu_L}{\mu_H-\mu_L}\cdot \underline{p}(\mu_H)\\
    &<\frac{\mu_H-\mu_0}{\mu_H-\mu_L}\cdot\left((1-\mu_L)\bar{p}+\mu_L\underline{p}_H\right)+\frac{\mu_0-\mu_L}{\mu_H-\mu_L}\cdot \underline{p}(\mu_H)
    <\frac{1-\mu_0}{1-\mu_L^*}\cdot \bar{p}+\frac{1-\mu_0}{1-\mu_L^*}\cdot \underline{p}_H.
 \end{align*}
 Thus, if the policy that dominates the proposed policy exists, $\mu_L$ must be larger than $\mu_L^*$ and $\tau$ must be with perfect good news.
 I denote $\mu_l$ as the lower belief, and by this signal the health probability is given by 
 \begin{align*}
     \frac{\mu_H-\mu_0}{\mu_H-\mu_L}\cdot\bigg(\frac{1-\mu_L}{1-\mu_l}\cdot\bar{p}+\frac{\mu_L-\mu_l}{1-\mu_l}\cdot \underline{p}_H\bigg)+\frac{\mu_0-\mu_L}{\mu_H-\mu_L}\cdot \underline{p}(\mu_H).
 \end{align*}
 If this expression exceeds (\ref{eq.pc.hp}), then we must have
 \begin{equation}
 \label{eq.pc.mul}
  \frac{\mu_H-\mu_0}{\mu_H-\mu_L}\cdot \frac{1-\mu_L}{1-\mu_l}\geq\frac{1-\mu_0}{1-\mu_L^*}\Rightarrow\mu_l\geq1-\frac{(\mu_H-\mu_0)(1-\mu_L)(1-\mu_L^*)}{(\mu_H-\mu_L)(1-\mu_0)}.
 \end{equation}
 However, if (\ref{eq.pc.mul}) holds, the patient's ex ante payoff is given by
 \begin{align*}
     \frac{\mu_H-\mu_0}{\mu_H-\mu_L}\cdot\bigg(\frac{1-\mu_L}{1-\mu_l}\cdot V(0)+\frac{\mu_L-\mu_l}{1-\mu_l}\cdot V(1)\bigg)+\frac{\mu_0-\mu_L}{\mu_H-\mu_L}\cdot\mathcal{V}(\mu_H),
 \end{align*}
 which is decreasing with respect to $\mu_l$.
 Thus, substituting $\mu_l$ by expression (\ref{eq.pc.mul}), the ex ante payoff is no larger than
 \begin{align*}
     &\frac{\mu_H-\mu_0}{\mu_H-\mu_L}\cdot\bigg[\frac{(\mu_H-\mu_L)(1-\mu_0)}{(1-\mu_L^*)(\mu_H-\mu_0)}\cdot V(0)+\left(1-\frac{(\mu_H-\mu_L)(1-\mu_0)}{(1-\mu_L^*)(\mu_H-\mu_0)}\right)\cdot V(1)\bigg]\\
     &\quad+\frac{\mu_0-\mu_L}{\mu_H-\mu_L}\cdot\mathcal{V}(\mu_H)\\
     =&\frac{1-\mu_0}{1-\mu_L^*}\cdot V(0)+\frac{\mu_H-\mu_0}{\mu_H-\mu_L}\left(1-\frac{(\mu_H-\mu_L)(1-\mu_0)}{(1-\mu_L^*)(\mu_H-\mu_0)}\right)\cdot V(1)+\frac{\mu_0-\mu_L}{\mu_H-\mu_L}\cdot\mathcal{V}(\mu_H).
 \end{align*}
 Since $V(1)<\mathcal{V}(1)$ and $\mathcal{V}(\mu_H)\leq\mathcal{V}(1)$, this expression is smaller than 
 \begin{align*}
     \frac{1-\mu_0}{1-\mu_L^*}\cdot V(0)+\frac{\mu_0-\mu_L^*}{1-\mu_L^*}\cdot\mathcal{V}(1)&=\frac{1-\mu_0}{1-\mu_L^*}\cdot\mathcal{V}(0)+\frac{\mu_0-\mu_L^*}{1-\mu_L^*}\cdot\mathcal{V}(1)\\
        &=\frac{1-\mu_0}{1-\mu_L^*}\cdot \mathcal{V}(\mu_L^*)+\frac{\mu_0-\mu_L^*}{1-\mu_L^*}\cdot\mathcal{V}(1)<\mathcal{V}(\mu_0)\leq\mathcal{V}(\mu_0),
 \end{align*}
 which yields a contradiction.
 
 \bigskip
 \noindent\textbf{Step 6:} The optimal disclosure when $\mu_0\in(\mu_\mathcal{T},\mu_\mathcal{V})$.

 When $\mu_0\in(\mu_\mathcal{T},\mu_\mathcal{V})$, the proposed policy discloses no information in the interim disclosure and the ex ante signal is $\{0,\mu_H^*\}$ for some $\mu_H^*>\mu_\mathcal{V}$.
 Thus, the patient's ex ante payoff is 
 \begin{align*}
     \left(1-\frac{\mu_0}{\mu^*_H}\right)\cdot V(0)+\frac{\mu_0}{\mu_H^*}\cdot\mathcal{V}(\mu^*_H)=\mathcal{V}(\mu_0),
 \end{align*}
 and the corresponding health probability is
 \begin{align*}
     \left(1-\frac{\mu_0}{\mu^*_H}\right)\cdot \bar{p}+\frac{\mu_0}{\mu_H^*}\cdot \underline{p}(\mu_H^*).
 \end{align*}
 
 I prove by contradiction that there exists no policy that satisfies the ex ante participation constraint and yields a higher health probability. 
 Suppose not, and then denote the optimal signal by triple $\langle\mu_H,\mu_L,\tau\rangle$. 
 Here, if $\tau$ is non-disclosure, it cannot be a better policy by the selection logic in Lemma \ref{bestgoodnewsrule}, and hence I only consider the case that $\tau$ is informative. 
 
 Then by Step 3, $\tau$ is either with perfect good news or with perfect bad news. 
 Suppose first it is with perfect bad news.
 Then the interim signal is $\{0,\mu_h\}$ for some $\mu_h>\mu_e$. 
 Therefore, the health probability is
 \begin{align*}
     &\frac{\mu_H-\mu_0}{\mu_H-\mu_L}\cdot\left(\frac{\mu_L}{\mu_h}\cdot \underline{p}(\mu_h)+\left(1-\frac{\mu_L}{\mu_h}\right)\cdot\bar{p}\right)+\frac{\mu_0-\mu_L}{\mu_H-\mu_L}\cdot \underline{p}(\mu_H)\\
    =&\underbrace{\frac{\mu_H-\mu_0}{\mu_H-\mu_L}\cdot\left(1-\frac{\mu_L}{\mu_h}\right)}_{\equiv\pi}\cdot\bar{p}+\left(\frac{\mu_H-\mu_0}{\mu_H-\mu_L}\cdot\frac{\mu_L}{\mu_h}\cdot\mu_h+\frac{\mu_0-\mu_L}{\mu_H-\mu_L}\cdot\mu_H\right)\cdot \underline{p}_H\\&+\left(\frac{\mu_H-\mu_0}{\mu_H-\mu_L}\cdot\frac{\mu_L}{\mu_h}\cdot(1-\mu_h)+\frac{\mu_0-\mu_L}{\mu_H-\mu_L}\cdot(1-\mu_H)\right)\cdot \underline{p}_L\\
    =&\pi\cdot\bar{p}+(1-\pi)\underline{p}\left(\frac{\mu_0}{1-\pi}\right).
 \end{align*}
 Thus, this is also the health probability if the doctor discloses no information in the interim disclosure and sending ex ante signal $\{0,\mu_0/(1-\pi)\}$.
 Compare the two policies with an identical health probability, the ex ante payoff of the patient under the latter policy is
 \begin{align*}
     \pi\cdot\mathcal{V}(0)&+(1-\pi)\cdot\mathcal{V}\left(\frac{\mu_0}{1-\pi}\right)\\
     \geq&\frac{\mu_H-\mu_0}{\mu_H-\mu_L}\cdot\left(\frac{\mu_L}{\mu_h}\cdot \mathcal{V}(\mu_h)+\left(1-\frac{\mu_L}{\mu_h}\right)\cdot V(0)\right)+\frac{\mu_0-\mu_L}{\mu_H-\mu_L}\cdot \mathcal{V}(\mu_H)\\
     \geq&\frac{\mu_H-\mu_0}{\mu_H-\mu_L}\cdot\left(\frac{\mu_L}{\mu_h}\cdot V(\mu_h)+\left(1-\frac{\mu_L}{\mu_h}\right)\cdot V(0)\right)+\frac{\mu_0-\mu_L}{\mu_H-\mu_L}\cdot \mathcal{V}(\mu_H).
 \end{align*}
 That is, $\{0,\mu_0/(1-\pi_H)\}$ strictly dominates $\langle\mu_H,\mu_L,\{0,\mu_h\}\rangle$.
 Since $\{0,\mu_0/(1-\pi_H)\}$ cannot improve on $\{0,\mu_H^*\}$ if it is implementable, $\langle\mu_H,\mu_L,\tau\rangle$ cannot improve on it either.
 
 Finally, if $\tau$ is informative and $\tau$ is with perfect good news, then the maximization problem becomes:
 \begin{align*}
    \max_{\mu_H,\mu_L,\mu_l}:\;&
    \frac{\mu_H-\mu_0}{\mu_H-\mu_L}
    \left(\frac{1-\mu_L}{1-\mu_l}\bar{p}
    +\frac{\mu_L-\mu_l}{1-\mu_l}\underline{p}_H\right)
    +\frac{\mu_0-\mu_L}{\mu_H-\mu_L}\underline{p}(\mu_H)\\
    \text{subject to: }\;&
    \frac{\mu_H-\mu_0}{\mu_H-\mu_L}
    \left(\frac{1-\mu_L}{1-\mu_l}V(0)
    +\frac{\mu_L-\mu_l}{1-\mu_l}V(1)\right)\\
    &\quad
    +\frac{\mu_0-\mu_L}{\mu_H-\mu_L}\mathcal{V}(\mu_H)
    =\mathcal{V}(\mu_0),
 \end{align*}
 where $\mu_l$ is the lower belief of the interim signal.
 Solving the constraint directly, we get
 \begin{equation*}
 \label{eq.mul}
     \mu_l=1-\frac{(1-\mu_L)(\mu_H-\mu_0) (V(1)-V(0))}{\mathcal{V}(\mu_H) (\mu_0-\mu_L)+(\mu_H-\mu_0)V(1)-(\mu_H-\mu_L) \mathcal{V}(\mu_0)}.
 \end{equation*}
 Substituting it to the objective function, it becomes
 \begin{align*}
 \frac{\mu_H-\mu_0}{\mu_H-\mu_L}\cdot\frac{V(1)\bar{p}-V(0)\underline{p}_H}{V(1)-V(0)}+\frac{\mu_0-\mu_L}{\mu_H-\mu_L}\cdot\left(\frac{(\bar{p}-\underline{p}_H)\mathcal{V}(\mu_H)}{V(1)-V(0)}+\underline{p}(\mu_H)\right)-\frac{\bar{p}-\underline{p}_H}{V(1)-V(0)}\cdot\mathcal{V}(\mu_0).
 \end{align*}
 The first derivative with respect to $\mu_L$ is given by
 \begin{align*}
     \frac{\mu_H-\mu_0}{(\mu_H-\mu_L)^2(V(1)-V(0))}\cdot\big((\bar{p}-\underline{p}(\mu_H))\cdot V(1)-(\bar{p}-\underline{p}_H)\mathcal{V}(\mu_H)-(\underline{p}_H-\underline{p}(\mu_H))V(0)\big),
 \end{align*}
 which is positive if and only if 
 \begin{align*}
     \mathcal{V}(\mu_H)<\frac{\bar{p}-\underline{p}(\mu_H)}{\bar{p}-\underline{p}_H}\cdot V(1)-\frac{\underline{p}_H-\underline{p}(\mu_H)}{\bar{p}-\underline{p}_H}\cdot V(0).
 \end{align*}
 Note that the criterion is irrelevant with $\mu_L$, in the optimal policy $\mu_L$ is either $0$ or the highest $\mu_L$ that makes the ex ante participation constraint be satisfied. 

 Suppose $\mu_L$ is maximized, and then the first derivative of $\mu_l$ with respect to $\mu_L$ is proportional to
 \begin{align*}
     (\mu_H-\mu_0)V(1)&+(1-\mu_H)\mathcal{V}(\mu_0)-(1-\mu_0)\mathcal{V}(\mu_H)\\
     &<(\mu_H-\mu_0)\mathcal{V}(1)+(1-\mu_H)\mathcal{V}(\mu_0)-(1-\mu_0)\mathcal{V}(\mu_H)<0.
 \end{align*}
 Thus, when $\mu_L$ is maximized, we must have $\mu_l=0$, and therefore the interim signal is full disclosure.
 However, full disclosure is also with perfect bad news, and therefore by the former discussion that $\tau$ is with perfect bad news, it cannot be optimal.
 
 Consequently, if $\langle\mu_H,\mu_L,\tau\rangle$ is optimal, the only possibility is $\mu_L=0$.
 However, in this case $\tau$ is non-disclosure and by the optimality of the proposed policy among policies with no interim disclosure, $\langle\mu_H,\mu_L,\tau\rangle$ cannot be optimal, which completes the proof.

\subsection{Proof of Proposition \ref{s-shaped}}
\label{proof.sshaped}

 If $a=0$, it is still optimal for the doctor to minimize the patient's payoff. 
 Let 
 \begin{align*}
     \bar{V}_0(\mu)=-\text{cav}\circ\left(-\phi\big(\alpha+(1-\alpha)\underline{p}(\mu)\big)\right),
 \end{align*}
 and then by concavification $\bar{V}_0(\cdot)$ is the lowest payoff that can be achieved when the patient rejects the test. 
 
 Consider the following optimization problem:
 \begin{equation}
 \label{optimization2}
 \begin{aligned}
  \max_{\tau\in\mathcal{R}(\mu_1)}:\,&\int_0^1P(\mu)\tau_1^1(d\mu)\\
    &\text{subject to: }\int_0^{1}\hat{V}(\mu)\tau(d\mu)\geq\bar{V}_0(\mu_1).
 \end{aligned}
 \end{equation}
 Here, since $\hat{V}(\mu)\geq V(\mu)$ for all $\mu\in[0,1]$, the solution of (\ref{optimization2}) is weakly larger than the solution of (\ref{eq:interim_problem}).
 
 Then it is sufficient to show that the solution of (\ref{optimization2}) can be achieved by a signal that is feasible in persuading the patient with payoff function $V$.
 Let $\mu_H$ and $\mu_L$ be the upper and lower beliefs in the signal of problem (\ref{optimization2}) ($\mu_H>\mu_e\geq\mu_L$), and $\pi=(\mu_H-\mu_1)/(\mu_H-\mu_L)$ is the probability of posterior $\mu_L$.
 Note that we always have $V(\mu_L)=\hat{V}(\mu_L)$.
 If $V(\mu_H)=\hat{V}(\mu_H)$, $\{\mu_H,\mu_L\}$ is also an optimum of program (\ref{eq:interim_problem}). 
 If $V(\mu_H)<\hat{V}(\mu_H)$, by the definition of local concavification and the Carath\'eodory theorem, there exists two posteriors $\mu_H^1$ and $\mu_H^2$, both larger than $\mu_e$, and a probability $\rho$, such that $\rho\mu_H^2+(1-\rho)\mu_H^1=\mu_H$ and $\rho V(\mu_H^2)+(1-\rho)V(\mu_H^1)=\hat{V}(\mu_H)$. 
 That is, lottery of posterior belief $(\mu_L,\pi;\mu_H^2,(1-\pi)\rho;\mu_H^1,(1-\pi)(1-\rho))$ is feasible for program (\ref{eq:interim_problem}). 
 By the piecewise linearity of $P$, it generates the same health probability as in (\ref{optimization2}), which completes the proof.
 
\subsection{Proof of Proposition \ref{proposition-general}}
\label{proof.general}

 \noindent\textit{Part 1: The patient is fear-reactive.}

 In this case, we have $\hat{V}(0)=V(0)>\bar{V}_0(0)$. 
 Also, since $\phi$ is inverse S-shaped, it is convex at $\underline{p}_H$; that is,
 \begin{align*}
  \hat{V}(1)=V(1)=\alpha+(1-\alpha)\phi(\underline{p}(\mu))>\phi(\alpha+(1-\alpha)\underline{p}(\mu))=\bar{V}_0(1).
 \end{align*}
 By the convexity of $\bar{V}_0(\cdot)$, $\mathcal{D}=[0,1]$.
 Moreover, $\mathcal{F}$ is non-empty at the two extremes ($\mu_1=0$ and $\mu_1=1$).
 Panel (a) of Figure \ref{fig-general-proof} provides a graphic illustration.
 Consequently, by Proposition \ref{prop:optimal_interim_disclosure} and \ref{proposition-general}, the interim signal must be either non-disclosure or with perfect good news.
 
 If we also have $\mathcal{F}=[0,1]$, then the interim disclosure must be treatment-optimal.
 The doctor's payoff $\bar{P}(\mu_1)$ is $\bar{p}$ when $\mu_1\leq\mu_e$ and $\underline{p}(\mu_1)$ otherwise.
 This function is piecewise linear and concave, and therefore the optimal ex ante disclosure is non-disclosure.

 If $\mathcal{D}\setminus\mathcal{F}$ is non-empty, since the patient's payoff under the treatment-optimal signal is piecewise linear and $\bar{V}_0(\cdot)$ is convex, there exists $\mu_\mathcal{F}^L\in(0,\mu_e)$ and $\mu_\mathcal{F}^H\in(\mu_e,1)$, such that 
 $\mathcal{D}\setminus\mathcal{F}=[\mu_\mathcal{F}^L,\mu_\mathcal{F}^H]$.
 In this interval, the optimal interim disclosure is with perfect good news, and the lower belief is given by (\ref{eq:l_mu}).
 Thus, 
 \begin{align*}
     \bar{P}(\mu_1)=\frac{1-\mu_1}{1-l(\mu_1)}\cdot\bar{p}+\frac{\mu_1-l(\mu_1)}{1-l(\mu_1)}\cdot\underline{p}_H=\frac{\hat{V}(1)-\bar{V}_0(\mu_1)}{\hat{V}(1)-\hat{V}(0)}\cdot\bar{p}+\frac{\bar{V}_0(\mu_1)-\hat{V}(0)}{\hat{V}(1)-\hat{V}(0)}\cdot\underline{p}_H.
 \end{align*}
 The first and second derivative of $\bar{P}$ with respect to $\mu_1$ is given respectively by
 \begin{align*}
  \bar{P}'(\mu_1)=-\frac{\bar{p}-\underline{p}_H}{\hat{V}(1)-\hat{V}(0)}\bar{V}_0'(\mu_1)<0\text{ and }\bar{P}''(\mu_1)=-\frac{\bar{p}-\underline{p}_H}{\hat{V}(1)-\hat{V}(0)}\bar{V}_0''(\mu_1)\leq 0.
 \end{align*}

 It can be straightforwardly verified that $\bar{P}$ is continuous.
 Also, $\bar{P}$ is constant in $[0,\mu_\mathcal{F}^L]$ and is decreasing and linear in $[\mu_\mathcal{F}^H,1]$.
 So it suffices to show that 
 $$\lim_{\mu\rightarrow^-\mu_\mathcal{F}^H}\bar{P}(\mu)\leq \lim_{\mu\rightarrow^+\mu_\mathcal{F}^H}\bar{P}(\mu).$$
 Indeed, at point $\mu_\mathcal{F}^H$, we have $\bar{V}_0'(\mu_1)<(\hat{V}(1)-\hat{V}(0))/(1-\mu_e)$.
 Thus, the first derivative of $\bar{P}$ on $[\mu_\mathcal{F}^H,1]$ is
 \begin{align*}
     \lim_{\mu\rightarrow^-\mu_\mathcal{F}^H}\bar{P}(\mu)=-\frac{\bar{p}-\underline{p}_H}{1-\mu_e}<-\frac{\bar{p}-\underline{p}_H}{\hat{V}(1)-\hat{V}(0)}\bar{V}_0'(\mu_\mathcal{F}^H)=\lim_{\mu\rightarrow^+\mu_\mathcal{F}^H}\bar{P}(\mu),
 \end{align*}
 which implies that $\bar{P}$ is globally concave on $[0,1]$.

   \begin{figure}[ht!]
   \centering
    \begin{subfigure}[t]{0.47\textwidth}
   \begin{tikzpicture}
   \draw[thick,->](0,0)--(5.5,0) node [below] {$\mu$};
   \draw[thick,->](0,0)--(0,4.5) node [left] {$V(\mu)$};
   
   \draw[domain=2:5,thick] plot(\x,{-(1/12)*\x^2+13/12*\x+2/3});
   \draw[thick] (0,1) -- (2,2.5);
   \draw[thick,dashed] (0,1) -- (2,1);
   \draw[thick, dashed] (2,1) -- (5,4);
   \draw[thick, dashed] (0,1) -- (5,4);

   \draw[domain=4:5,thick] plot(\x,{(1/2)*\x^2-7/2*\x+17/2});
   \draw[thick] (0,0.5) -- (4,2.5);
   
   \draw[thick, -, dashed](0,4)--(5,4)
    node [left] at (0,4) {$\bar{p}$}
    node [left] at (0,0) {$0$}
    node at (4,2) {$\bar{V}_0(\mu)$}
    node at (2,3) {$\hat{V}(\mu)$}
    node [left] at (0,1.5) {$\underline{p}_H$}
    node [below] at (5,0) {$1$};
   
   \draw[thick, dashed, -](1,4)--(1,0) node [below] {$\mu^L_\mathcal{F}$};
   \draw[thick, dashed, -](3,4)--(3,0) node [below] {$\mu^H_\mathcal{F}$};

   \draw[thick, dashed, -](5,0)--(5,4);
  \end{tikzpicture}
  \caption{(\ref{eq:fear_reactive}) holds}
  \end{subfigure}
 \hfill
    \begin{subfigure}[t]{0.47\textwidth}
   \begin{tikzpicture}
   \draw[thick,->](0,0)--(5.5,0) node [below] {$\mu$};
   \draw[thick,->](0,0)--(0,4.5) node [left] {$V(\mu)$};
   
   \draw[domain=2:5,thick] plot(\x,{-(1/12)*\x^2+13/12*\x+2/3});
   \draw[thick] (0,1) -- (2,2.5);
   \draw[thick,dashed] (0,1) -- (2,1);
   \draw[thick, dashed] (2,1) -- (5,4);
   \draw[thick, dashed] (0,1) -- (5,4);

   \draw[domain=4:5,thick] plot(\x,{(1/3)*\x^2-7/3*\x+41/6});
   \draw[thick] (0,1.5) -- (4,17/6);
   
   \draw[thick, -, dashed](0,4)--(5,4)
    node [left] at (0,4) {$\bar{p}$}
    node [left] at (0,0) {$0$}
    node at (4.5,2.5) {$\bar{V}_0(\mu)$}
    node at (2.5,3.3) {$\hat{V}(\mu)$}
    node [left] at (0,1.5) {$\underline{p}_H$}
    node [below] at (5,0) {$1$};
   
   \draw[thick, dashed, -](1.2,4)--(1.2,0) node [below] {$\mu'_\mathcal{M}$};
   \draw[thick, dashed, -](1.875,4)--(1.875,0) node [below] {$\mu'_\mathcal{D}$};
   \draw[thick, dashed, -](3.75,4)--(3.75,0) node [below] {$\mu'_\mathcal{F}$};

   \draw[thick, dashed, -](5,0)--(5,4);
  \end{tikzpicture}
  \caption{(\ref{eq:fear_reactive}) does not hold}
  \end{subfigure}
  \caption{Structure of $\mathcal{M}$, $\mathcal{D}$ and $\mathcal{F}$.}
  \label{fig-general-proof}
 \end{figure}

 \bigskip
 \noindent\textit{Part 2: The patient is fear-avoidant.}

 In this case, we have $V(1)=\hat{V}(1)>\bar{V}_0(1)$ but $V(0)=\hat{V}(0)<\bar{V}_0(1)$.
 Then the structure of $\mathcal{M}$, $\mathcal{D}$ and $\mathcal{F}$ can be derived by the same reasoning in Lemma \ref{lemma:three_sets}. 
 See panel (b) of Figure \ref{fig-general-proof} for a graphic illustration.
 Here, $\mu'_\mathcal{D}=\mu'_\mathcal{M}$ means that full disclosure is testing-optimal, which happens if and only if $\hat{V}(1)-\hat{V}(0)<\hat{V}'(1)$.

 By Part 1, we have already know that $\bar{P}$ is concave in $\mathcal{D}$, and it suffices to show that $\bar{P}$ is concave in $\mathcal{M}$.
 First, by (\ref{eq.pbn''}) in the proof of Proposition \ref{prop:voluntary_testing} and $\bar{V}_0''(\mu)\geq 0$, we know that $\bar{P}$ is also concave in $\mathcal{M}\setminus\mathcal{D}$.
 Second, by Part 1, we know that
 \begin{align*}
  \lim_{\mu\rightarrow^-\mu_{\mathcal{D}}'}\bar{P}'(\mu)=-\frac{\bar{p}-\underline{p}_H}{\hat{V}(1)-\hat{V}(0)}\cdot\bar{V}_0'(\mu_{\mathcal{D}}').
 \end{align*}
 Also, since $h(\mu_{\mathcal{D}}')=1$, using (\ref{eq.h'}), we have
 \begin{align*}
  \lim_{\mu\rightarrow^+\mu_{\mathcal{D}}'}\bar{P}'(\mu)=\frac{\underline{p}'(1) \left(\bar{V}_0'(\mu_{\mathcal{D}}')+\hat{V}(1)-\hat{V}(0)\right)+(\bar{p}-\underline{p}_H) \left(\bar{V}_0'(\mu_{\mathcal{D}}')-V'(1)\right)}{\hat{V}'(1)-\hat{V}(1)+\hat{V}(0)}
 \end{align*}
 The difference of the two expressions is given by
 \begin{align*}
    \lim_{\mu\rightarrow^+\mu_{\mathcal{D}}'}\bar{P}'(\mu)-\lim_{\mu\rightarrow^-\mu_{\mathcal{D}}'}\bar{P}'(\mu)=&\frac{(\bar{p}-\underline{p}_H)\hat{V}'(1)+\underline{p}'(1)(\hat{V}(1)-\hat{V}(0))}{\hat{V}(1)-\hat{V}(0)}\cdot\frac{\hat{V}(1)-\hat{V}(0)-\bar{V}_0'(\mu_{\mathcal{D}}')}{\hat{V}(1)-\hat{V}(0)-\hat{V}'(1)}\\
    \propto&\frac{\hat{V}(1)-\hat{V}(0)-\bar{V}_0'(\mu_{\mathcal{D}}')}{\hat{V}(1)-\hat{V}(0)-\hat{V}'(1)}
 \end{align*}
 Since $\mathcal{M}\setminus\mathcal{D}$ is non-empty, $\hat{V}(1)-\hat{V}(0)-\hat{V}'(1)>0$.
 Also, at point $\mu_{\mathcal{D}}'$, $\bar{V}_0'(\mu_{\mathcal{D}}')<\hat{V}(1)-\hat{V}(0)$.
 Consequently, $\lim_{\mu\rightarrow^+\mu_{\mathcal{D}}'}\bar{P}'(\mu)-\lim_{\mu\rightarrow^-\mu_{\mathcal{D}}'}\bar{P}'(\mu)>0$, which implies $\bar{P}$ is concave on $\mathcal{M}$.
 By remaining part follows directly by concavification.


\section{Optimal Test}
\label{extension.test}

This appendix studies the setting in which the doctor directly sends the test result instead of other complementary information to maximize the health probability. 
 This setting constitutes a direct comparison with the previous results in \cite{schweizer2018optimal}.
 
 Assume both parties do not observe $\theta$, the patient's initial health status, at first. 
 Let $\alpha_0\in(0,1)$ be the common prior that $\theta=1$.
 The test result ($\theta=0$ or $1$) can only be privately observed by the doctor.
 Simultaneously, all the other aforementioned parameters, $\bar{p}$, $\underline{p}$, and $c$, are common knowledge to both parties. 
 Since no message can be sent before the test, there is only the interim disclosure.
 That is, the doctor's choice is an information disclosure rule that maps each test result to a probability distribution over a set of messages.  
 Thus, with a slight abuse of notation, the information disclosure rule is mathematically equivalent to a lottery of beliefs, which is a probability measure $\tau$ defined on $\Delta([0,1])$, and the Bayesian plausibility constraint requires
  \begin{align*}
      \tau\in\mathcal{R}(\alpha_0)\equiv\left\{\tau\in\Delta([0,1])\bigg|\int\alpha\tau(d\alpha)=\alpha_0\right\}.
  \end{align*}
 Assume $\bar{p}-\underline{p}\geq c$. Then if the patient accepts the medical test and the posterior belief is $\alpha$, the patient then accepts the treatment if and only if 
 \begin{equation*}
 \label{inequality.test.treatment}
    \alpha+(1-\alpha)\bar{p}-c\geq\alpha+(1-\alpha)\underline{p}\Rightarrow\alpha\leq\alpha_e\equiv1-\frac{c}{\bar{p}-\underline{p}}.
 \end{equation*}
 That is, he is pessimistic about her initial health status.
 Also, the doctor has no information to disclose if the patient rejects the test, and therefore there is no room for continuation disclosure after refusal.
 Consequently, given test $\tau$, the patient accepts the test if and only if
 \begin{equation}
 \label{inequality.test.criterion}
 \begin{aligned}
     \int_0^{\alpha_e}\phi\big(\alpha+(1-\alpha)\bar{p}-c\big)\tau(d\alpha)+\int_{\alpha_e}^1\phi\big(\alpha+(1-\alpha)\underline{p}\big)\tau(d\alpha)\geq V_0\equiv\phi\big(\alpha_0+(1-\alpha_0)\underline{p}\big).
 \end{aligned}
 \end{equation}

 Again, by \citet[Proposition 3.2]{doval2024constrained}, we can only consider signals with an at-most-binary support. 
 Thus, I still use $\{x,y\}$ ($x\geq\alpha_0\geq y$) to denote the binary signal where $x$ and $y$ are the upper and lower beliefs, respectively. 
 
 Identify
 \begin{align*}
     P(\alpha)=\left\{
      \begin{array}{cc}
          \bar{p} & \alpha\leq\alpha_e \\
         \underline{p} & \alpha>\alpha_e
      \end{array}
     \right.\text{ and }
     V(\alpha)=\left\{
      \begin{array}{cc}
          \phi\big(\alpha+(1-\alpha)\bar{p}-c\big) & \alpha\leq\alpha_e \\
          \phi\big(\alpha+(1-\alpha)\underline{p}\big) & \alpha>\alpha_e
      \end{array}
     \right.
 \end{align*}
 as the payoff functions of the doctor and the patient when the patient's belief is $\alpha$, respectively. 
 
 \begin{obs}
 \label{obs.test.future}
     In the absence of constraint (\ref{inequality.test.criterion}), the doctor benefits from persuasion if and only if $\alpha_0\geq\alpha_e$. If so, signal $\{\alpha_e,1\}$ is optimal for any $\alpha\leq\alpha_e$.
 \end{obs}
 That is, without the participation constraint, the doctor only needs to (partially) reveal the test result when the patient is ``too optimistic'' about her initial health status, and the treatment-optimal signal maximizes the probability that $\alpha\leq\alpha_e$. 

  \begin{figure}[ht!]
   \centering
    \begin{subfigure}[t]{0.47\textwidth}
   \begin{tikzpicture}
   \draw[thick,->](0,0)--(4.5,0) node [below] {$\alpha$};
   \draw[thick,->](0,0)--(0,4.5) node [left] {$V(\alpha)$};
   \draw[thick, -, dashed, domain=0:2/3] plot(\x,{(1-(\x/4-1)^2)*4}); 
   \draw[thick, -, domain=2/3:4] plot(\x,{(1-(\x/4-1)^2)*4});  
   \draw[thick, -, domain=0:2/3] plot(\x,{-0.25*\x^2+0.5*\x+1}); 
   \fill[gray, opacity=.5, domain=2:4, variable=\x]
    (0,1) -- (2,3) -- plot(\x,{(1-(\x/4-1)^2)*4}) -- (2/3,11/9) -- cycle;
   \node[circle,fill=black,inner sep=0pt,minimum size=3pt] (a) at (2,3) {};
   \draw[thick, -, dashed] (2,4) -- (2,0) node [below] {$\alpha_v$};
   \draw[thick, -, dashed] (2/3,4) -- (2/3,0) node [below] {$\alpha_e$};
   \draw[thick, -, dashed](0,4)--(4,4)
    node [left] at (0,4) {$1$}
    node [below] at (4,0) {$1$}
    node [left] at (0,1) {$\phi(\bar{p}-c)$};
   \draw[thick, dashed, -](0,1)--(4,1);
   \draw[thick, dashed, -](4,0)--(4,4);
  \end{tikzpicture}
  \end{subfigure}
  \hfill
    \begin{subfigure}[t]{0.47\textwidth}
   \begin{tikzpicture}
   \draw[thick,->](0,0)--(4.5,0) node [below] {$\alpha$};
   \draw[thick,->](0,0)--(0,4.5) node [left] {$V(\alpha)$};
   \draw[thick, -, dashed, domain=0:2] plot(\x,{(1-(\x/4-1)^2)*4}); 
   \draw[thick, -, domain=0:2] plot(\x,{-0.5*(\x-2)^2+3}); 
   \draw[thick, -, domain=2:4] plot(\x,{(1-(\x/4-1)^2)*4}); 
   \draw[thick, -, dashed] (1.41421, 2.82843) -- (2.82843,3.65685);
   \draw[thick, -, dashed] (1.41421, 4) -- (1.41421, 0);
   \draw[thick, -, dashed] (2.82843,4) -- (2.82843,0);
   \draw[thick, -, dashed] (2,4) -- (2,0);
   \fill[gray, opacity=.5, domain=0:1.41421, variable=\x]
    plot(\x,{-0.5*(\x-2)^2+3}) -- (2,3.17157) -- (2,2.5) -- cycle;
   \fill[gray, opacity=.5, domain=2.82843:4, variable=\x]
    (2,3.17157) -- plot(\x,{(1-(\x/4-1)^2)*4}) -- (2,2.5) -- cycle;
   \node[circle,fill=black,inner sep=0pt,minimum size=3pt] (a) at (1.41421,2.82843) {};
   \node[circle,fill=black,inner sep=0pt,minimum size=3pt] (a) at (2.82843,3.65685) {};
   \draw[thick, -, dashed](0,4)--(4,4)
    node [left] at (0,4) {$1$}
    node [below] at (4,0) {$1$}
    node [below] at (2,-.08) {$\alpha_e$}
    node [below] at (2.82843,-.08) {$\alpha_v$}
    node [below] at (1.3,0) {$l(\alpha_v)$}
    node [left] at (0,1) {$\phi(\bar{p}-c)$};
   \draw[thick, dashed, -](0,1)--(4,1);
   \draw[thick, dashed, -](4,0)--(4,4);
  \end{tikzpicture}
  \end{subfigure}
  \caption{Testing-optimal signal.}
  \label{fig.test.concavification}
 \end{figure}
 
 The testing-optimal signal is derived by concavifying function $V$, a continuous and increasing function with two concave pieces.
 \begin{obs}
     There exists $\alpha_v\geq\alpha_e$ and a signal $\tau\in\mathcal{R}(\alpha_0)$, such that $E_\tau V(\cdot)>V(\alpha_0)$ if and only if $\alpha_v\geq\alpha_0\geq l(\alpha_v)$, where
     \begin{align*}
         l(\alpha)\equiv\max_{\alpha'\in[0,\alpha_e]}: \frac{V(\alpha)-V(\alpha')}{\alpha-\alpha'}.
     \end{align*}
 \end{obs}
 Here, for any $\alpha>\alpha_e$, $l(\alpha)$ is either $0$, or the tangent point of the straight line that is also tangent to the left piece of $V$ and originates from point $(\alpha,V(\alpha))$.
 A signal with perfect good news, which is treatment-optimal, is not necessarily testing-optimal, and Figure \ref{fig.test.concavification} provides two such examples. 
 In the left panel, the testing-optimal signal is with perfect bad news, and the doctor can improve the patient's utility of taking the test if and only if $\alpha_0<\alpha_v$. 
 In the right panel, the doctor can improve the patient's utility of taking the test if and only if $\alpha_0\in(l(\alpha_v),\alpha_v)$, and there is no perfect news in the optimal signal.

 The optimal test is characterized by the following proposition.
 \begin{prop}
 \label{proposition.optimaltest}
     The doctor benefits from persuasion if and only if $\alpha_e<\alpha_0<\alpha_v$, and the optimal signal there is $\{\alpha^*,l(\alpha^*)\}$, where 
     \begin{align*}
         \alpha^*=\sup\left\{\alpha\in(\alpha_e,1]\bigg|\frac{\alpha-\alpha_0}{\alpha-l(\alpha)}\cdot V(l(\alpha))+\frac{\alpha_0-l(\alpha)}{\alpha-l(\alpha)}\cdot V(\alpha)\geq\phi\big(\alpha_0+(1-\alpha_0)\underline{p}\big)\right\}.
     \end{align*}
 \end{prop}

 Proposition \ref{proposition.optimaltest} first states that an informative signal benefits the doctor only when the prior belief is intermediate. 
 When the patient is ``too optimistic'' ($\alpha_0\geq\alpha_v$), constraint (\ref{inequality.test.criterion}) can be satisfied only by the testing-optimal signal, which is non-disclosure.
 Thus, any informative signal frightens the patient away from accepting the test.
 When the patient is ``too pessimistic'' ($\alpha_0\leq\alpha_e$), the patient will take both the test and the treatment even no information is disclosed, and hence there is no explicit reason for the doctor to disclose anything.
 
 \begin{cor}
 \label{cor.test.past}
  Suppose $\alpha_0>\alpha_e$.
  There exists $\alpha_g>\alpha_e$, such that $\{1,l(1)\}$ is an optimal signal if and only if $\alpha_0\leq\alpha_g$. 
  Besides, the patient benefits from the optimal signal at $\alpha_0$ if and only if $\alpha_0<\alpha_g$.
 \end{cor}

   \begin{figure}[ht!]
   \centering
   \begin{tikzpicture}
   \draw[thick,->](0,0)--(6.5,0) node [below] {$\alpha$};
   \draw[thick,->](0,0)--(0,5.5) node [left] {$V(\alpha)$};
   \draw[thick, -, dashed](0,5)--(6,5);
   \draw[thick, -, dashed](6,0)--(6,5);
   \draw[thick, -, domain=0:1] plot(\x,{1+sqrt(2*\x-\x^2)}); 
   \draw[thick, -, domain=1:6] plot(\x,{-0.08*(\x^2+6-7*\x)+3*\x/5+7/5}); 
   \draw[thick, -, dashed] (0,1.575) -- (6,5);
   \draw[thick, -, dashed] (1,5) -- (1,0);
   \draw[thick, -, dashed] (5.03,5) -- (0,1.53);
   \draw[thick, -, dashed] (1.9,5) -- (1.9,0);
   \draw[thick, -, dashed] (1.39,5) -- (1.39,0);
   \node[circle,fill=black,inner sep=0pt,minimum size=3pt] (a) at (1.38,2.37) {};
   \node[circle,fill=black,inner sep=0pt,minimum size=3pt] (a) at (1.9,2.83) {};
   \node[circle,fill=black,inner sep=0pt,minimum size=3pt] (a) at (3.95,4.25) {};
   \draw[thick, -, dashed](3.95,5)--(3.95,0)
    node [left] at (0,5) {$1$}
    node [left] at (1.99,3) {$A$}
    node [left] at (4.05,4.45) {$B$}
    node [left] at (1.85,2.25) {$O$}
    node [below] at (6,0) {$1$}
    node [below] at (0.97,0) {$\alpha_e$}
    node [below] at (1.46,0) {$\alpha_g$}
    node [below] at (1.99,0) {$\alpha_0$}
    node [below] at (3.95,0.1) {$\alpha^*$};
  \end{tikzpicture}
  \caption{Testing-optimal signal.}
  \label{fig.test.alpha_g}
 \end{figure}

 In fact, when $\alpha_0\in(\alpha_e,\alpha_v)$, the optimal test is also given by the selection logic in Lemma \ref{bestgoodnewsrule}. 
 The main idea of Corollary \ref{cor.test.past} is demonstrated in Figure \ref{fig.test.alpha_g}.
 Here, $\alpha_g$ is identified by the prior belief that signal $\{1,l(1)\}$ is equivalent to non-disclosure (point O).
 Then if $\alpha_0\in(\alpha_e,\alpha_g]$, signal $\{1,l(1)\}$, which yields the highest payoff for the patient among the optimal signals for treatment optimality, is sufficient to motivate $a=1$.
 As a consequence, $\alpha^*$ in Proposition \ref{proposition.optimaltest} is equal to $1$.
 Also, constraint (\ref{inequality.test.criterion}) is not binding in $(\alpha_e,\alpha_g)$, and hence the patient can also benefit from persuasion.
 
 If $\alpha_0\in(\alpha_g,\alpha_v)$, signal $\{1,l(1)\}$, as well as all the signals that are treatment-optimal, is inadequate to motivate $a=1$.
 In this case, constraint (\ref{inequality.test.criterion}) must be binding.
 Then the optimal signal can be derived by the following procedure: 
 Starting from point $(\alpha_0,V(\alpha_0))$ (point A), draw the tangent line to the left piece of function $V$ and extend it in the reverse direction until it intersects with the right piece of $V$. 
 Then the intersection (point B) identifies the upper belief $\alpha^*$.
 Also, $l(\alpha_0)=l(\alpha^*)$.
 According to Proposition \ref{proposition.optimaltest}, signal $\{\alpha^*,l(\alpha^*)\}$ maximizes the upper belief among all the signals such that constraint (\ref{inequality.test.criterion}) holds with equality.

  \begin{cor}
    \label{cor.test.rotate}
    In the optimal test derived in Proposition \ref{proposition.optimaltest}, suppose $\alpha_0\in(\alpha_g,\alpha_v)$, the upper belief, the lower belief and the probability of receiving the bad news all decrease as $\alpha_0$ increases.
 \end{cor}

 Corollary \ref{cor.test.rotate} demonstrates the intuition that optimal signal $\{\alpha^*,l(\alpha^*)\}$ embodies the tradeoff between testing optimality and treatment optimality. 
 When $\alpha_0\in(\alpha_e,\alpha_g)$, treatment optimality is the only consideration since the treatment-optimal signal suffices to motivate $a=1$. 
 If $\alpha_0\in(\alpha_g,\alpha_v)$, the doctor has to compromise to the motive of testing optimality by introducing noises to the good news. 
 By doing this, the probability of receiving the bad news decreases and the test becomes more attractive to the patient.
 As prior belief $\alpha_0$ increases from $\alpha_g$ to $\alpha_v$, rewarding $a=1$ is increasingly demanding because the no-information value increases, and the solution ``rotates'' in an anticlockwise manner such that the two beliefs both decrease.
 When $\alpha_0$ approaches $\alpha_v$, the probability of the bad news approaches to $0$, and the optimal signal shrinks to the testing-optimal signal.

 \subsubsection*{Proof of Proposition \ref{proposition.optimaltest}}
 \label{proof.optimaltest}
 
 I first show that the doctor benefits from persuasion if and only if $\alpha\in(\alpha_e,\alpha_v)$.
 If $\alpha_0>\alpha_v$, $V(\alpha_0)=\text{cav}\circ V(\alpha_0)$ and $V$ is strictly concave at $\alpha_0$, the doctor cannot benefit from any informative disclosure since it lowers the patient's payoff and the participation constraint is violated.
 If $\alpha_0<\alpha_e$, non-disclosure is already an optimal signal in the unconstrained program and it satisfies the constraint (\ref{inequality.test.criterion}). 

 Observation \ref{obs.test.future} implies that the set of optimal signals without the participation constraint is $\{\{1,\alpha\}|\,\alpha\leq\alpha_e\}$, among which $\{1,l(1)\}$ maximizes the patient's payoff.
 Thus, an treatment-optimal signal is sufficient to motivate $a=1$ if and only if $\{1,l(1)\}$ is qualified. 
 Let
 \begin{align*}
     \alpha_g\equiv\inf_{\alpha}\big\{\alpha>\alpha_e\bigg|V(\alpha)=\frac{x-\alpha_0}{1-l(1)}\cdot V(l(1))+\frac{\alpha-l(1)}{1-l(1)}\cdot V(1).\big\}.
 \end{align*}
 That is, $\alpha_g$ the smallest belief $\alpha$ in $[\alpha_e,1]$ to guarantee the value of $\{1,l(1)\}$ is no smaller than non-disclosure. 
 Then by the concavity of $V$ on $[\alpha_e,1]$, the value of $\{1,l(1)\}$ is larger than non-disclosure when $\alpha\in[\alpha_e,\alpha_g)$, and is smaller than it when $\alpha\in(\alpha_g,1)$.
 That is, if $\alpha\in[\alpha_e,\alpha_g)$, the unconstrained optimum is sufficient to motivate $a=1$.

 When $\alpha_0\in[\alpha_g,\alpha_v]$, signal $\{1,l(1)\}$, as well as all the signals with perfect good news, is insufficient to motivate $a=1$.
 Thus, in the optimal signal constraint (\ref{inequality.test.criterion}) must be binding.
 This is because otherwise the doctor can always benefit from improving the upper belief with a positive infinitesmal.
 
 Then it suffices to verify the derivative condition in Lemma \ref{bestgoodnewsrule}. 
 Let $d(x)$ be the highest value of $y\leq\alpha_e$ that makes constraint (\ref{inequality.test.criterion}) hold with equality.
 Then since $S_P(x)<P'(x)$, Lemma \ref{bestgoodnewsrule} applies if and only if 
 \begin{align*}
     \frac{V(x)-V(d(x))-(x-d(x))V'(x)}{V(x)-V(d(x))-(x-d(x))V'(d(x))}<\frac{P(x)-P(d(x))-(x-d(x))P'(x)}{P(x)-P(d(x))-(x-d(x))P'(d(x))}.
 \end{align*}
 Using the specific expressions of $V$ and $P$, the difference between the left-hand side and the right-hand side is given by
 \begin{align*}
     &-\frac{x-d(x)}{1-x}\cdot\\&\frac{\Delta(x)+(1-\underline{p})(1-x)\phi'(x+(1-x)\underline{p})-(1-\bar{p})(1-d(x))\phi'(d(x)+(1-d(x))\bar{p})}{\Delta(x)-(1-\bar{p})(x-d(x))\phi'(d(x)+(1-d(x))\bar{p})},
 \end{align*}
 where
 \begin{align*}
     \Delta(x)=V(x)-V(d(x))=\phi(x+(1-x)\underline{p})-\phi(d(x)+(1-d(x))\bar{p}-c).
 \end{align*}

 Then I show that the denominator, which is given by
 \begin{align*}
     \Delta(x)-(1-\bar{p})(x-d(x))\phi'(d(x)+(1-d(x))\bar{p})\propto\frac{V(x)-V(d(x))}{x-d(x)}-V'(d(x)),
 \end{align*}
 is positive.
 Define
 \begin{align*}
     \mathbb{V}(x,y)\equiv\frac{x-\alpha_0}{x-y}\cdot V(y)+\frac{\alpha_0-y}{x-y}\cdot V(x),
 \end{align*}
 and define $\hat{d}(x)=\min\left\{y\leq\alpha_e\bigg|\,\mathbb{V}(x,y)=V(\alpha)\right\}$.
 Since $V$ and $\mathbb{V}$ (as a function of $\alpha$) are equal at both $\hat{d}(x)$ and $d(x)$, by the concavity of $V$ in $[\hat{d}(x),d(x)]\subset[0,\alpha_e]$, $V$ is greater than $\mathbb{V}$ for all $\alpha\in(\hat{d}(x),d(x))$.
 This indicates that $(V(x)-V(d(x)))/(x-d(x))>V'(d(x))$ by the mean value theorem.

 It suffices to show that
 \begin{align*}
     &\Delta(x)+(1-\underline{p})(1-x)\phi'(x+(1-x)\underline{p})-(1-\bar{p})(1-d(x))\phi'(d(x)+(1-d(x))\bar{p})\\
     &=V(x)+(1-x)V'(x)-V(d(x))-(1-d(x))V'(d(x))\geq 0.
 \end{align*}
 First, observe that if $\alpha_0\in(\alpha_g,\alpha_v)$, $d(x)\geq l(x)>l(1)$. 
 Therefore, $V'(l(1))>V'(d(x))$. 
 Also, at point $\alpha=d(x)$, $V(l(1))+(1-l(1))V'(l(1))$, as a function of $\alpha$, is larger than $V(d(x))+(1-d(x))V'(d(x))=V(d(x))$. 
 Combine the two points,
 \begin{align*}
     1=V(l(1))+(1-l(1))V'(l(1))>V(d(x))+(1-d(x))V'(d(x)).
 \end{align*}
 Simultaneously, by the concavity of $V$ on $[\alpha_e,1]$, the mean value theorem implies
 \begin{align*}
     V(x)+(1-x)V'(x)\geq V(1)=1,
 \end{align*}
 which verifies the condition in Lemma \ref{bestgoodnewsrule} and completes the proof.
 
 \subsubsection*{Proof of Corollary \ref{cor.test.rotate}}
 \label{proof.test.rotate}
 For any $\alpha_0$, by definition,
 \begin{align*}
     l(\alpha_0)=\max_{\alpha<\alpha_e}\,\frac{V(\alpha_0)-V(\alpha)}{\alpha_0-\alpha}.
 \end{align*}
 I first show that function $l(\cdot)$ is non-increasing. 
 If $\frac{V(\alpha_0)-V(0)}{\alpha_0}>V'(0)$, then $l(\alpha_0)=0$, and therefore there always exists $\varepsilon$ that is sufficiently small, such that $l(\alpha_0+\varepsilon)=0$ still holds.
 If $\frac{V(\alpha_0)-V(0)}{\alpha_0}\leq V'(0)$, we must have
 \begin{align*}
    V(\alpha_0)-V(l(\alpha_0))=V'(l(\alpha_0))(\alpha_0-l(\alpha_0)).
 \end{align*}
 By the rule of implicit function differentiation, take the first-order derivative for both sides, and we have
 \begin{align*}
    &V'(\alpha_0)-V'(l(\alpha_0))\frac{dl(\alpha_0)}{d\alpha_0}\\&=V''(l(\alpha_0))\frac{dl(\alpha_0)}{d\alpha_0}(\alpha_0-l(\alpha_0))+V'(l(\alpha_0))\left(1-\frac{dl(\alpha_0)}{d\alpha_0}\right).
 \end{align*}
 Rearrange, and then
 \begin{align*}
     \frac{dl(\alpha_0)}{d\alpha_0}=\frac{V'(\alpha_0)-V'(l(\alpha_0))}{V''(l(\alpha_0))(\alpha_0-l(\alpha_0))}.
 \end{align*}
 The denominator is obviously negative, and thus it suffices to show that the numerator is positive. 
 
 I show this by contradiction.
 Suppose $V'(\alpha_0)<V'(l(\alpha_0))$, and there must be a left neighborhood $(\alpha_0-\varepsilon,\alpha_0)$, such that for any $\alpha'$ in this interval, we must have $V(\alpha')>L(\alpha'|\alpha_0)$, where
 \begin{align*}
     L(\alpha|\alpha_0)=V'(l(\alpha_0))(\alpha-\alpha_0)+V(\alpha_0)
 \end{align*}
 is exactly the straight line connecting points $(\alpha_0,V(\alpha_0))$ and $(l(\alpha_0),V(l(\alpha_0)))$.
 Also, since $V$ is concave in $(0,\alpha_e)$, we also have $V(\alpha_e)<L(\alpha_e|\alpha_0)$.
 By the intermediate value theorem, there exists $\alpha_1\in(\alpha_e,\alpha')\subsetneq(\alpha_e,\alpha_0)$, such that $L(\alpha_1)=V(\alpha_1|\alpha_0)$.
 Then by definition, we have $l(\alpha_0)=l(\alpha_1)$.
 Here, we must have $V'(\alpha_1)>V'(l(\alpha_1))=V'(l(\alpha_0))$.
 This is because otherwise we can replicate the construction above to find $\alpha_2\in(\alpha_e,\alpha_1)$, such that $L(\alpha_2|\alpha_0)=V'(\alpha_2)$ and $l(\alpha_2)=l(\alpha_1)$.
 However, there are at most two intersections for a concave function and a linear function, which indicates the non-existence of $\alpha_2$.
 Since $V'(\alpha_0)-V'(l(\alpha_0))<0$ and $V'(\alpha_1)-V'(l(\alpha_1))>0$, by the intermediate value theorem there exists $\alpha\in(\alpha_1,\alpha_0)$ such that $V'(\alpha)-V'(l(\alpha))=0$.
 By construction, the solution here is exactly $\alpha_v$; in other words, $\alpha_v\in(\alpha_1,\alpha_0)$, which contradicts with the assumption that $\alpha_0<\alpha_v$.

 Since $V'(\alpha_0)-V'(l(\alpha_0))<0$, by the envelope theorem, 
 \begin{align*}
    \frac{d}{d\alpha_0}\left(\frac{V(\alpha_0)-V(l(\alpha_0))}{\alpha_0-l(\alpha_0)}\right)&\propto V'(\alpha_0)(\alpha_0-l(\alpha_0))-(V(\alpha_0)-V(l(\alpha_0)))\\
    &\propto V'(\alpha_0)-\frac{V(\alpha_0)-V(l(\alpha_0))}{\alpha_0-l(\alpha_0)}=V'(\alpha_0)-V'(l(\alpha_0))<0.
 \end{align*}
 That is, the slope of function $L(\cdot|\alpha_0)$ increases with $\alpha_0$.

 Pick any $\alpha_0\in(\alpha_g,\alpha_v)$ and a sufficiently small $\varepsilon$, and then for any $\alpha>\alpha_0+\varepsilon$, by $V'(l(\alpha_0))<V'(\alpha_0)$, 
 \begin{align*}
    L(\alpha|\alpha_0)&=V'(l(\alpha_0))(\alpha-\alpha_0)+V(\alpha_0)\\&\leq V'(l(\alpha_0))(\alpha-\alpha_0-\varepsilon)+V(\alpha_0+\varepsilon)\\
    &\leq V'(l(\alpha_0+\varepsilon))(\alpha-\alpha_0-\varepsilon)+V(\alpha_0+\varepsilon)=L(\alpha|\alpha_0+\varepsilon)
 \end{align*}
 Denote by $h(\alpha_0)$ as the upper belief corresponding to $\alpha_0$, and then $V(h(\alpha_0))=L(h(\alpha_0)|\alpha_0)$.
 Again, by the rule of implicit function differentiation, we have
 \begin{align*}
     V'(h(\alpha_0))\cdot\frac{dh(\alpha_0)}{d\alpha_0}=\frac{\partial L(h(\alpha_0)|\alpha_0)}{\partial h(\alpha_0)}\cdot\frac{dh(\alpha_0)}{d\alpha_0}+\frac{\partial L(h(\alpha_0)|\alpha_0)}{\partial\alpha_0}.
 \end{align*}
 That is,
 \begin{align*}
     \frac{dh(\alpha_0)}{d\alpha_0}=\frac{\frac{\partial L(h(\alpha_0)|\alpha_0)}{\partial\alpha_0}}{V'(h(\alpha_0))-\frac{\partial L(h(\alpha_0)|\alpha_0)}{\partial h(\alpha_0)}}.
 \end{align*}
 We have shown that the numerator is positive, and the denominator is negative since $h(\alpha_0)>\alpha_v$.
 The upper belief is thus decreasing with $\alpha_0$.

 Finally, denote by $\pi$ the probability of the bad news. 
 Note that
 \begin{align*}
     \pi V(l(\alpha_0))+(1-\pi)V(h(\alpha_0))=V(\alpha_0).
 \end{align*}
 When $\alpha$ increases, both $V(l(\alpha_0))$ and $V(h(\alpha_0))$ decrease but $V(\alpha_0)$.
 To balance the equation, $\pi$ must decrease correspondingly, which completes the proof.

\end{document}